\documentclass{article}
\usepackage[utf8]{inputenc}
\usepackage{graphicx,amsmath,amssymb,bm,amsthm,comment,color,appendix,bbm}
\usepackage{a4wide} 
\usepackage{ulem}
\newcommand{\N}{\mathbb{N}}
\newcommand{\Z}{\mathbb{Z}}
\newcommand{\R}{\mathbb{R}}
\newcommand{\C}{\mathbb{C}}

\newcommand{\cA}{\mathcal{A}}
\newcommand{\mc}[1]{\mathcal{#1}}

\newcommand{\e}{\varepsilon}

\newcommand{\pphi}{\varphi}

\newcommand{\re}{{\rm Re}\hskip 1pt }
\newcommand{\im}{{\rm Im}\hskip 1pt }
\newcommand{\ord}{{\mathcal O}}

\newcommand{\su}{\mathrm{\, SU}}
\newcommand{\cdt}{{\circledast}}
\newcommand{\ope}[1]{\operatorname{#1}}

\newcommand{\be}{\begin{equation}}
\newcommand{\ee}{\end{equation}}
\newcommand{\ben}{\begin{equation*}}
\newcommand{\een}{\end{equation*}}

\newcommand{\LA}{\Lambda_{\textrm A}}
\newcommand{\LN}{\Lambda_{\textrm N}}
\newcommand{\muN}{\mu_{\textrm N}}
\newcommand{\muA}{\mu_{\textrm A}}
\newcommand{\mN}{m_{\textrm N}}
\newcommand{\mA}{m_{\textrm A}}
\newcommand{\gammaN}{\gamma_{\textrm N}}

\newcommand{\NN}{\textrm{NN}}
\newcommand{\AAa}{\textrm{AA}}
\newcommand{\NA}{\textrm{NA}}

\newcommand{\dip}{\displaystyle}

\newcommand*\oline[1]{%
  \vbox{%
    \hrule height 0.5pt
    \kern0.25ex
    \hbox{%
      \kern-0.1em
      \ifmmode#1\else\ensuremath{#1}\fi
      \kern-0.1em
    }
  }
}
\newtheorem{theorem}{Theorem}
\newtheorem{lemma}{Lemma}[section]
\newtheorem{proposition}[lemma]{Proposition}
\newtheorem{definition}[lemma]{Definition}
\newtheorem{remark}[lemma]{Remark}

\newtheorem{example}[lemma]{Example}
\numberwithin{equation}{section}
\newcounter{Condition}

\newtheorem{condition}[Condition]{Condition}

\title{Two-level Adiabatic Transition Probability for Small Avoided Crossings generated by tangential intersections}
\author{Kenta Higuchi and Takuya Watanabe} 
\date{\today}

\begin{document}

\maketitle

\begin{abstract}
In this paper, the asymptotic behaviors of the transition probability for two-level avoided crossings are studied under the limit where two parameters (adiabatic parameter and energy gap parameter) tend to zero. 
This is a continuation of our previous works where avoided crossings are generated by tangential intersections and obey a non-adiabatic regime. 
The main results elucidate not only the asymptotic expansion of transition probability but also a quantum interference caused by several avoided crossings and a coexistence of two-parameter regimes arising from different vanishing orders.  
\end{abstract}



\section{Introduction}
In quantum mechanics, especially in the quantum chemistry, the adiabatic approximation 
and the Born-Oppenheimer approximation are widely used. The adiabatic theorem, the motivation of these approximations, asserts that 
in the slowly varying Hamiltonian the quantum effect like the transition between the energy-levels hardly occurs.
From this point of view, it is important to accurately describe how much slowing down the variation shrinks the transition probability.

In this paper, we study a mathematical model such that the transition probability is not always small even in case of the adiabatic approximation. Since the transition probability intuitively depends on the size of the smallest gap between energy-levels, the approaching (resp. receding) speed 
to (resp. from) the smallest gap, and the quantum interference, 
we consider asymptotic behavior in a two-parameter singular limit $h,\e\to+0$ of solutions to the time-dependent Schr\"odinger equation
\be\label{eq:OurEq}
ih\frac d{dt}\psi(t)=H(t;\e)\psi(t),\quad t\in\R.
\ee
Here, the Hamiltonian $H(t;\e)$ is given as a $2\times2$ matrix-valued function
\be\label{H_matrix}
H(t;\e):=
\begin{pmatrix}
V(t)&\e\\\e&-V(t)
\end{pmatrix},
\ee
where $V(t)$ is a real-valued smooth function and $h,\e$ are small positive parameters. Its two eigenvalues are
\ben
E_\pm (t;\e) = \pm \sqrt{V(t)^2 + \e^2}.
\een
In this model, the ratio $t/h$ is interpreted as the time variable, $E_\pm(t;\e)$ are the two energy-levels of $H(t;\e)$, and the adiabatic limit $h\to0$ corresponds to the slow variation of the Hamiltonian $H(t;\e)$ compared with the time. 
Note that each solution $\psi=\psi(t)$ takes values in $\C^2$, and its norm remains constant for all $t\in\R$:
\ben
\left\|\psi(t)\right\|_{\C^2}^2=\left|\psi_1(t)\right|^2+\left|\psi_2(t)\right|^2=\text{(Const.)},\qquad
\psi(t)=
\begin{pmatrix}
    \psi_1(t)\\\psi_2(t)
\end{pmatrix}.
\een

According to the adiabatic theorem, one expects that for a solution $\psi(t)$ to \eqref{eq:OurEq}, the projection $\Pi_-(t;\e)\psi(t)$ onto the eigenspace associated with $E_-(t;\e)$ is ``small"  for every $t\in\R$ if $\psi(t_0)$ belongs to the eigenspace associated with $E_+(t_0;\e)$ at some $t_0\in\R$. 
More simply, we can say that the adiabatic theorem asserts the smallness of the transition probability.
Here, we call 
\be\label{eq:defTP}
P(\e,h):=\lim_{t\to+\infty}\left\|\Pi_-(t;\e)J_\ell^+(t)\right\|_{\C^2}^2
\ee
the transition probability, where $J_\ell^+$ is the normalized solution 
such that 
\begin{equation*}
    \lim_{t\to-\infty}\left\|\Pi_-(t;\e)J_\ell^+(t)\right\|_{\C^2}=0
\end{equation*}
with $\left\|J_\ell^+(t)\right\|_{\C^2}=1$ 
(this solution will be introduced in Appendix~\ref{app_jost}). These limits exist under suitable conditions on $V$ near infinity (Condition~\ref{condi_1} in this paper).
Note that $\left\|J_\ell^+(t)\right\|_{\C^2}^2=\sum_\pm \left\|\Pi_\pm(t;\e)J_\ell^+(t)\right\|^2_{\C^2}=1$ holds for any $t\in\R$. This is the reason why one can regard $\left\|\Pi_\pm(t;\e)J_\ell^+(t)\right\|^2_{\C^2}$ as transition/reflection probabilities.

As long as $\e>0$, the two energy-levels $E_\pm(t;\e)$ are smooth functions of $t$, and never intersect with each other: 
\be\label{eq:E-gap}
\inf_{t\in\R}\left|E_+(t;\e)-E_-(t;\e)\right|=\inf_{t\in\R}2\sqrt{V(t)^2+\e^2}\ge 2\e>0.
\ee
This quantity called the energy-gap is bounded from below by $2\e$ even if $V$ vanishes at some point. This phenomenon occurring near each zero of $V$ is called an avoided crossing. 
The simplest case $V(t)=vt$ with a positive constant $v$ is investigated individually by L.D. Landau and C. Zener in 30's \cite{La65,Ze32_01}.
The transition probability 
\be\label{LZ_formula_new}
P(\e,h)=\exp\left(-\frac{\pi\e^2}{vh}\right)
\ee
for this case is known as the Landau-Zener formula. This is exact and true for any positive $\e,h$. 
For fixed $\e>0$, this formula implies that the transition probability is exponentially small with respect to $h>0$. 
There are many results generalizing the Landau-Zener formula. 
Under some analyticity condition, such an exponential decay estimate is obtained even in case of more general Hamiltonian, for example operator-valued unbounded Hamiltonians \cite{Co03, Co04, Jo94_01, JoKuPf91_01, Ma94_01}, while a smoothness condition without an analyticity yields a polynomial decay estimate \cite{Ka50_01}. 
Note that in the general setting, the condition of the energy-gap is replaced with the gap condition, which mandates that the spectrum is decomposed into a disjoint union of two subsets and that the distance between them is positive. The history of these generalizations can be consulted in the survey \cite{HaJo05_01} and in the books \cite{Ha94_01, Te03_01}.

The transition probability may become larger when the energy-gap is also small. In our model, this situation occurs if $V(t)$ vanishes at some $t$ and if $\e$ (see \eqref{eq:E-gap}) is sufficiently small compared with $h$. 
One observes from Landau-Zener formula \eqref{LZ_formula_new} that the transition probability is 
small and the adiabatic approximation is reasonable if $\e\gg h^{1/2}$. However, one also observes that it is almost one if $\e\ll h^{1/2}$. The former situation is called the adiabatic regime, and the latter the non-adiabatic regime \cite{CoLoPo99_01, Ro04_01, WaZe21_01}.

The leading term of the transition probability is given by the same formula as \eqref{LZ_formula_new} by replacing $v$ with $|V'(0)|$ when $V(t)$ vanishes only at $t=0$ and $V'(0)\neq0$, namely, the situation that $V(t)$ and $-V(t)$ intersect transversely at $t=0$ (see the work \cite{Jo94_01} and also its microlocal version \cite{CoLoPo99_01}). From the viewpoint of the energy-levels, the approaching/receding $|E_+(t;\e)-E_-(t;\e)|-2\e=2(\sqrt{V(t)^2+\e^2}-\e)$ near a transversal crossing of $\pm V$ is of order $|t|$. 

In the tangential case $V'(0)=0$, the transition probability is studied by one of the authors under the condition $\e\gg h^{m/(m+1)}$ corresponding to the adiabatic regime, where $m$ stands for the vanishing order of $V$ at $t=0$ as in \cite{Wa06_01,Wa12_01} (equivalently, $|E_+(t;\e)-E_-(t;\e)|-2\e$ is of order $|t|^m$). In this case, transition probability is exponentially small as $h\e^{-(m+1)/m}$ tends to 0. The analyticity of $V$ and the adiabatic regime condition are necessary for applying the exact WKB method. In fact, the ``complex crossing points" of the energy-levels, which are the zeros of $E_+(t;\e)-E_-(t;\e)$ on the complex plane and are called turning points in the WKB method, are essential for this case. The adiabatic regime condition implies that these complex crossing points are not too close to each other.  

On the other hand, the situation corresponding to the non-adiabatic regime $\e\ll h^{m/(m+1)}$ is studied by the other author \cite{Hi23_01}. He applied other classical method (which is also recently used for other problem \cite{AFH22_01}) to a little bit more general setting. The transition probability is almost one as in the Landau-Zener formula only when $m$ is odd, and that it is still small of order $\e h^{-m/(m+1)}$ when $m$ is even. 

One of other generalizations is the existence of several avoided crossings. Following the classical probability theory, one may think that the transition probability is obtained by multiplying and summing the non-negative ``local transition probability" around each avoided crossing. However, as well as other quantum situations, only a complex-valued probability amplitude is associated with each avoided crossing. Then the ``total" probability amplitude is given by multiplying and summing them, and the transition probability is its absolute square. 
This phenomenon has been also treated \cite{JoMiPf91_01,Wa12_01,WaZe21_01}. 

This paper is a continuation of the authors' previous works in the viewpoint of dealing with several avoided crossings generated by tangential intersections with different vanishing orders in the non-adiabatic regime. 
Our first result, Theorem~\ref{mainthm}, concerns several tangential avoided crossings in the non-adiabatic regime, that is, $\e\ll h^{m/(m+1)}$ with $m$ the maximum among the avoided crossings. It shows that the transition probability is almost one when the number of odd avoided crossings is odd and that it is small of order $\e h^{-m/(m+1)}$ when the number is even. The effect of the quantum interference appears in the coefficient of the term of order $\e h^{-m/(m+1)}$. In Formula~\eqref{prefactor}, the second term describes the quantum interference while the first term is given by the sum of absolute square of the local transition probability amplitudes. In particular, this coefficient vanishes in some cases. We also show some concrete models (see Remark~\ref{delta_vanish} and Examples~\ref{2avo} and \ref{3avo}).

One notices that the border of the parameter regimes for each avoided crossing depends on the vanishing order $m$. Consequently, there are parameter regimes which is adiabatic for some avoided crossings and non-adiabatic for the others when there are several tangential intersections of $V(t)$ and $-V(t)$. 
Our second result , Theorem~\ref{2ndthm}, concerns this situation, and shows that the leading term of the transition probability depends on the parity of the number of odd avoided crossings in the non-adiabatic regime. Since the local probability amplitude around an avoided crossing in the non-adiabatic and adiabatic regime has already been computed in Theorem~\ref{mainthm} and in the previous work \cite{Wa12_01}, Theorem~\ref{2ndthm} is obtained by combining them. 
The novelties of this paper are to examine precisely the transition probability in the intermediate regime, where the non-adiabatic regime and the adiabatic one coexist, and to elucidate a possibility of ``switching of the transition probability" by varying two parameters $\e, h$ continuously without changing $V(t)$ as in Example \ref{ex:intermediate}.  
Note that the situation neither adiabatic nor non-adiabatic regime, namely, $\e\sim h^{m/(m+1)}$ for some $m\ge2$, has not been treated yet, although the case for $m=1$ has done \cite{Ha91_01}.

Our proof is based on the classical method. We first introduce the Jost solutions $J_\ell^\pm=J_\ell^\pm(t;\e,h)$ and $J_r^\pm=J_r^\pm(t;\e,h)$ admitting the asymptotic behavior \eqref{eq:Jost-sol} at infinity, and in particular, $J_\ell^+$ satisfies \eqref{eq:defTP} (see Appendix~\ref{app_jost} for the construction). Then the total transition probability amplitude and the transition probability are $s_{21}(\e,h)$ and the square of its modulus, where $s_{21}(\e,h)$ stands for the $(2,1)$-entry of the scattering matrix $S(\e,h)$ defined by
\ben
(J_\ell^+(t;\e,h),J_\ell^-(t;\e,h))=(J_r^+(t;\e,h),J_r^-(t;\e,h))S(\e,h).
\een
Note that one has
\ben
\Pi_-(t;\e)J_\ell^+(t;\e,h)- s_{21}(\e,h)J_r^-(t;\e,h)\to0\quad\text{as }t\to+\infty.
\een
To study the entries of $S(\e,h)$, we continue the solutions $J_\ell^\pm$ from $-\infty$ to $+\infty$. More precisely, we construct solutions which approximately belong to the eigenspace associated with $E_\pm(t;\e)$ away from any avoided crossings, and compute the transfer matrices between the bases consisting of such solutions. 
The transfer matrix is almost diagonal when there is no avoided crossing between two points. 
Thus, the transfer matrix $T_k$ across each avoided crossing near $t_k$ is crucial to obtain the transition probability.
The four entries of $T_k$ are the probability amplitudes of the local transition at the vanishing point $t_k$. 

The asymptotic behavior of $T_k$ around each avoided crossing near $t_k$ is given in Theorem~\ref{thm:Connection}. As we mentioned above, the exact WKB solutions used in the previous work \cite{Wa12_01} concerning avoided crossings generated by tangential intersection are no longer valid in the non-adiabatic regime. The solutions are constructed in Section~\ref{sec_const_sol} by the method of successive approximations (MSA for short) due to the previous works \cite{AFH22_01,Hi23_01}. 
For example, the $(1,2)$ and $(2,1)$-entries of $T(\e,h)$ correspond to the local transition probability amplitude from $E_+$ to $E_-$ and from $E_-$ to $E_+$ when the vanishing order $m$ is odd and $V(t)(t-t_k)\ge0$ near $t_k$. The leading term of them is given by applying the degenerate stationary phase method (Lemma~\ref{lem:Int-esti}) to the oscillatory integral \eqref{T_asym_comp_04}, where the derivative of the phase function $\mp 2\int_0^t V(r)dr$ off-course has a zero of the same order as $V$.

This paper is organized as follows. In Section~\ref{sec_results}, we make precise the definitions and settings, and state our main results Theorems~\ref{mainthm} and \ref{2ndthm}. 
We construct the solutions by the method of successive approximations (MSA) in Section~\ref{sec_const_sol}, and prove the connection formulas Theorem~\ref{thm:Connection} and Proposition~\ref{prop_Tkk1} by using these solutions in Section~\ref{sec_connection}.
Finally, we will complete the proofs in Section~\ref{sec_end_proof}. 
To obtain the product of $2n+1$ matrices of $\su (2)$, we employ an algebraic formula shown in Appendix~\ref{app_alg_lem}.

\section{Results}\label{sec_results}

\subsection{Assumptions and main result}

As mentioned in the introduction, we focus on the non-adiabatic regime and work under the $C^\infty$-category without any assumption on the analyticity. 
We notice that the assumption on $V(t)$ and the setting of the problem are sightly different from the previous work \cite{WaZe21_01} 
but the definitions of the transition probability in the series of our works are the same. We first assume the following:

\begin{condition}\label{condi_1}
 The function $V(t) \in C^\infty(\R;\R)$ has a limit $V_r \in \R\setminus \{0\}$ (resp. $V_\ell\in \R\setminus \{0\}$)  as $t \to +\infty$ (resp. $-\infty$), and satisfies
\ben
V-V_r\in L^1([0,+\infty)),\quad V-V_\ell\in L^1((-\infty,0]),\quad V'\in L^1(\R).
\een
\end{condition}
For simplicity, we assume $V_r>0$.
Based on the argument in Appendix \ref{app_jost} under Condition \ref{condi_1}, one sees the unique existence of Jost solutions $J_\bullet^{\pm}(t)$ ($\bullet \in \{\ell,r\}$) which satisfy the asymptotic conditions:
\be\label{eq:Jost-sol}
\begin{aligned}
 &J_r^+ (t) \sim \exp \left[-\frac{it}{h}\sqrt{V_r^2+\varepsilon^2} \right]\left(
\begin{array}{c}
\cos{\theta_r}\\
\sin{\theta_r}
\end{array}
\right) 
 &&\textrm{as} \,\,\,
 t\to +\infty,\\
 &J_r^- (t) \sim \exp \left[+\frac{it}{h}\sqrt{V_r^2+\varepsilon^2} \right]\left(
\begin{array}{c}
-\sin{\theta_r}\\
\cos{\theta_r}
\end{array}
\right)
 &&\textrm{as} \,\,\,
 t\to +\infty,\\
 &J_\ell^+ (t) \sim \exp \left[-\frac{it}{h}\sqrt{V_\ell^2+\varepsilon^2}
 \right]\left(
\begin{array}{c}
\cos{\theta_\ell}\\
\sin{\theta_\ell}
\end{array}
\right) 
 &&\textrm{as} \,\,\,
 t\to -\infty,\\
 &J_\ell^- (t) \sim \exp \left[+\frac{it}{h}\sqrt{V_\ell^2+\varepsilon^2} \right]\left(
\begin{array}{c}
-\sin{\theta_\ell}\\
\cos{\theta_\ell}
\end{array}
\right)
 &&\textrm{as} \,\,\,
 t\to -\infty,
 \end{aligned}
\ee
where $\tan{2\theta_{\bullet}}=\varepsilon/V_{\bullet}$ with $0<\theta_{\bullet} <\pi/2$ (equivalently determined by $\theta_{\bullet}=\arctan(\e^{-1}(\sqrt{V_{\bullet}^2+\e^2}-V_{\bullet}))$). Note that $\theta_\bullet$ never coincides with $\pi/4$ for small $\e$ since one has $\theta_\bullet=\ord(\e)$ when $V_\bullet>0$, and $\pi/2-\theta_\bullet=\ord(\e)$ when $V_\bullet<0$. 
The pairs $(J_r^+, J_r^-)$ and $(J_\ell^+, J_\ell^-)$ form bases of the solution space. 
Each of them corresponds to one of the eigenvalues $\pm\sqrt{V_r^2+\e^2}$ and $\pm\sqrt{V_\ell^2+\e^2}$ of $H(t,\e)$ at the infinity. 
Note that a function $\psi={}^t(\psi_1,\psi_2)$ is a solution to \eqref{eq:OurEq} if and only if ${}^t(-\overline{\psi_2},\overline{\psi_1})$ is so. This implies that $(J_r^+(t), J_r^-(t))$ and $(J_\ell^+(t), J_\ell^-(t))$ are orthonormal bases on ${\mathbb C}^2$ at each $t\in\R$.
Then we can introduce the scattering matrix $S(\e,h)$ as the change of basis between the pairs of Jost solutions: 
\be\label{def_scattering_M}
\left( J_\ell^+, J_\ell^- \right)=
\left( J_r^+, J_r^- \right)S(\e,h), \quad 
S(\varepsilon,h) =\left(
\begin{array}{cc}
s_{11}(\e,h) & s_{12}(\e,h)\\
s_{21}(\e,h) & s_{22}(\e,h)
\end{array}
\right).
\ee
This matrix is unitary. In particular, one has $|s_{11}|=|s_{22}|$, $|s_{12}|=|s_{21}|$, and $|s_{11}|^2+|s_{21}|^2=1$.  
\begin{definition}\label{def_P}
The transition probability $P(\varepsilon, h)$ is defined by 
$$
P(\varepsilon, h):=|s_{21}(\varepsilon, h)|^2.
$$
\end{definition}

\begin{remark}
    The above definition of the transition probability is equivalent to \eqref{eq:defTP}. In fact, one has $\|J_\bullet^\pm(t)\|_{\C^2}=1$ for any $t$, and 
    \begin{align*}
    &\lim_{t\to-\infty}\left\|\Pi_\pm J_\ell^\pm(t)\right\|_{\C^2}=1,
    &&\lim_{t\to-\infty}\left\|\Pi_\mp J_\ell^\pm(t)\right\|_{\C^2}=0,\\
    &\lim_{t\to+\infty}\left\|\Pi_\pm J_r^\pm(t)\right\|_{\C^2}=1,
    &&\lim_{t\to+\infty}\left\|\Pi_\mp J_r^\pm(t)\right\|_{\C^2}=0.
    \end{align*}
\end{remark}

\begin{condition}\label{condi_2}
The function $V(t)$ has a finite number of zeros $t_1>\cdots>t_n$ on $\mathbb R$, where each zero $t_k$  for $k = 1, \ldots n$ is of finite order denoted by $m_k$. 
\end{condition}

This assumption implies that for $k = 1, \ldots, n$,
\be
V^{(l)}(t_k)=0\quad(1 \leq l<m_k),\quad v_k:=V^{(m_k)}(t_k)\neq0.
\ee
Let $m_*$ denote the maximal order of the zeros:
\be
m_* = \max_{j\in \{1, 2, \ldots, n\}} m_j
\ee
and let $\Lambda_*$ denote the index set of $k \in \{ 1, 2, \ldots, n\}$ which attains $m_*$ (i.e., $m_k=m_*\iff k\in\Lambda_*$). 
Put 
\be
\sigma_k := \sum_{j=1}^k m_j
\ee
for $k = 1,2, \ldots, n$. 
Then $V_r=\lim_{t\to+\infty}V(t)>0$ implies that $\sigma_k$ determines the sign of $V(t)$ on each interval $(t_{k+1},t_k)$, and in particular $\sigma_n$ determines the sign of $V_\ell$, namely $(-1)^{\sigma_k}V(t)>0$ for $t_{k+1}<t<t_k$ and $(-1)^{\sigma_n}V_\ell>0$.

As we mentioned in the introduction, the ratio of $\e$ and (a specific power of) $h$ is crucial. 
We set 
\be
\mu_*:=\mu_{m_*},
\ee
where 
\be
\mu_m=\mu_m(\e,h):= \e h^{-\frac{m}{m+1}}
\ee
for each $m\in\mathbb{N}$. 
We focus on 
the regime $\mu_* \ll1$. 
In the case where there exists at least one avoided crossing generated by a tangential intersection, that is $m_*\geq 2$, we obtain the following result.

\begin{figure}
\centering
\includegraphics[bb=0 0 856 224, width=12cm]{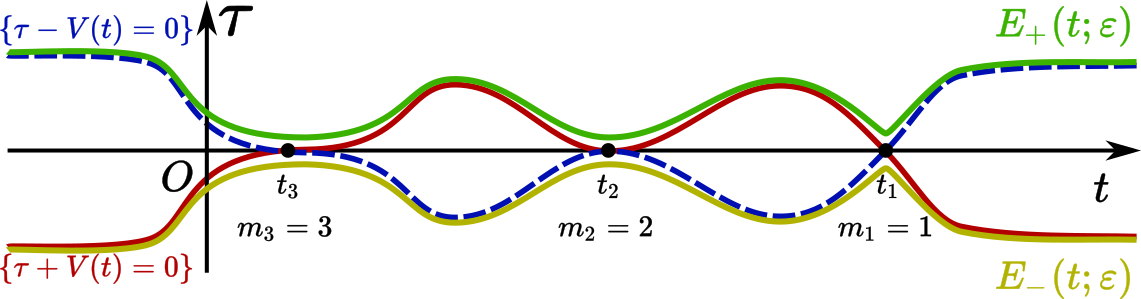}
\caption{An example of $V(t)$ and energies $E_\pm(\e,h)$}
\label{fig2}
\end{figure}



\begin{theorem}\label{mainthm}
Assume Conditions \ref{condi_1}, 
\ref{condi_2} and $m_*\geq 2$. 
Then there exist $\mu_0>0$ and $h_0>0$ such that, 
for any $\e$ and $h$ with $\mu_*(\e,h) \in (0, \mu_0]$ 
and $h \in (0,h_0]$, the transition probability 
$P(\varepsilon, h)$ has the asymptotic expansions:  
$$
P(\varepsilon, h) =\left\{
\begin{array}{lll}
1- C_{*} (h)\, \mu_*^2&+&
\ord\left(\mu_*^2 \left( \mu_* + 
 h^{\frac{1}{m_*(m_*+1)}} \right) \right) 
 \qquad \textrm{if} \ \sigma_n \ \textrm{is odd},\\[12pt]
{\quad } C_{*} (h) \, \mu_*^2&+& 
\ord\left(\mu_*^2 \left( \mu_* + 
 h^{\frac{1}{m_*(m_*+1)}} \right) \right) 
 \qquad \textrm{if} \ \sigma_n \ \textrm{is even},
\end{array}
\right.
$$
where the coefficient $C_{*}(h)$ consists of the product of two factors $\gamma_{*}$ and $\delta_*(h)$, that is $C_{*}(h) = \gamma_{*} \delta_*(h)$, which are given by 
\begin{align}\nonumber
    &\gamma_{*} = 4 \left( \frac{(m_* +1)!}{2} \right)^{\frac{2}{m_* +1}} \!\!
    \varGamma \left(\frac{m_* + 2}{m_*+1} \right)^2 \!\!
    \left( 1 - 
    \frac{1+(-1)^{m_*}}{2} 
    \sin^2 \left(\frac{ \pi}{2(m_* + 1)} \right) \right),\\[10pt]\label{prefactor}
    &\delta_*(h) = \sum_{j\in \Lambda_*} |v_j|^{-\frac{2}{m_*+1}} 
    + 2\!\!\sum_{\substack{j,k \in \Lambda_*\\j<k}}\!\! |v_j v_k|^{-\frac{1}{m_*+1}} \,
    \cos\left(\frac{2}{h}\int_{t_k}^{t_j}V(t)dt+\theta_{m_*}^{j,k}\right),
\end{align}
with
\begin{equation*}
    \theta_{m_*}^{j,k}=
    \left\{
    \begin{aligned}
        &(\ope{sgn}v_j)\frac{\pi}{m_*+1}&&\text{if $m_*$ is odd and }
        \ \ope{sgn}v_j=-\ope{sgn}v_k,\\
        &0&&\text{otherwise}.
    \end{aligned}\right.
\end{equation*}
Here, $\varGamma$ stands for the standard Gamma function $\varGamma(z)=\int_0^{+\infty}t^{z-1}e^{-t}dt$.
\end{theorem}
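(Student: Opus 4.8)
The plan is to express the scattering matrix $S(\e,h)$ of \eqref{def_scattering_M} as an ordered product of transfer matrices and to extract its $(2,1)$-entry by an algebraic identity. First I would fix, in each region away from the avoided crossings, a basis of solutions adapted to the instantaneous eigenspaces of $E_\pm(t;\e)$, constructed by the MSA of Section~\ref{sec_const_sol}. Writing $T_k$ for the change of basis across the crossing at $t_k$ (Theorem~\ref{thm:Connection}), $T_{k,k+1}$ for the change of basis between two consecutive crossings (Proposition~\ref{prop_Tkk1}), and $T_r,T_\ell$ for the matrices connecting these local bases to the Jost bases near $t_1$ and $t_n$ (Appendix~\ref{app_jost}), I obtain
\be
S(\e,h)=T_r^{-1}\,T_1\,T_{1,2}\,T_2\,T_{2,3}\,T_3\cdots T_{n-1,n}\,T_n\,T_\ell,
\ee
so that $P(\e,h)=|s_{21}(\e,h)|^2$ is read off from this product. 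Up to fixed diagonal unitaries all factors lie in $\su(2)$, which is what makes the final algebraic step tractable.

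Next I would pin down the individual factors in the regime $\mu_*\ll1$. By Theorem~\ref{thm:Connection} each $T_{k,k+1}$ is, after a suitable gauge, diagonal and carries the adiabatic phase $\exp(\pm\tfrac{i}{h}\int_{t_{k+1}}^{t_k}2\sqrt{V^2+\e^2}\,dt)$, which once assembled between two maximal crossings reduces to a phase of the form $\tfrac2h\int_{t_k}^{t_j}V\,dt$. Each $T_k$ is an $\su(2)$ matrix whose defect from a pure swap or from the identity is small: applying the degenerate stationary phase Lemma~\ref{lem:Int-esti} to the oscillatory integral \eqref{T_asym_comp_04}, whose phase $\mp2\int_0^tV(r)\,dr$ has a critical point of order $m_k$ at $t_k$, shows that this defect amplitude has size $\mu_{m_k}|v_k|^{-1/(m_k+1)}$ times a universal Gamma-factor and a computable phase. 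Since $\mu_m=\e h^{-m/(m+1)}$ is increasing in $m$, one has $\mu_{m_k}\ll\mu_*$ whenever $m_k<m_*$; hence the first non-trivial correction is governed by the maximal crossings $k\in\Lambda_*$, whereas the zeroth-order structure of each $T_k$ is a pure swap (antidiagonal) when $m_k$ is odd and the identity (diagonal) when $m_k$ is even.

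With these asymptotics in hand the computation reduces to evaluating the $(2,1)$-entry of a product of $2n+1$ near-diagonal/near-antidiagonal $\su(2)$ matrices, for which I would invoke the algebraic formula of Appendix~\ref{app_alg_lem}. The zeroth-order product is a net swap or the identity according to the parity of the number of odd-order crossings, i.e.\ the parity of $\sigma_n=\sum_j m_j$ (recall $(-1)^{\sigma_n}V_\ell>0$), giving the leading value $1$ or $0$ of $P$. The first correction to $|s_{21}|^2$ — directly when $\sigma_n$ is even, and through $1-|s_{11}|^2$ when $\sigma_n$ is odd — is quadratic in the small defect amplitudes, which come at size $\mu_*$ only from $k\in\Lambda_*$. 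Collecting the diagonal squares reproduces $\sum_{j\in\Lambda_*}|v_j|^{-2/(m_*+1)}$, while the cross terms between two distinct maximal crossings $t_j,t_k$ combine their two accumulated phases into $\cos(\tfrac2h\int_{t_k}^{t_j}V\,dt+\theta_{m_*}^{j,k})$, the Stokes phase $\theta_{m_*}^{j,k}$ being produced by the local connection matrices of Theorem~\ref{thm:Connection}. Factoring out the universal constant $\gamma_*$ (whose parity-dependent $\sin^2$ piece reflects the different Stokes geometry for even and odd $m_*$) yields $C_*(h)=\gamma_*\delta_*(h)$ with $\delta_*(h)$ as in \eqref{prefactor}; the remaining contributions are absorbed into $\ord(\mu_*^2(\mu_*+h^{1/(m_*(m_*+1))}))$, where $\mu_*$ is the next amplitude correction and $h^{1/(m_*(m_*+1))}=\mu_{m_*-1}/\mu_*$ bounds the largest sub-maximal effect.

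The main obstacle I anticipate lies in this last step: organizing the expansion of the long $\su(2)$ product so that exactly the order-$\mu_*^2$ contributions are retained, and checking that the many cross terms assemble into a single cosine with the correct argument. In particular, matching the Stokes phase $\theta_{m_*}^{j,k}$ — including its vanishing unless $m_*$ is odd and $\ope{sgn}v_j=-\ope{sgn}v_k$ — and carrying out the sign bookkeeping inherited from $\sigma_k$ on each interval $(t_{k+1},t_k)$ is the delicate part, together with making all error estimates uniform over $\mu_*\in(0,\mu_0]$ and $h\in(0,h_0]$. The algebraic lemma of Appendix~\ref{app_alg_lem} is precisely what I would use to tame this bookkeeping.
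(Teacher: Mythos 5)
Your overall architecture coincides with the paper's: the factorization $S=T_r^{-1}T_1T_{1,2}\cdots T_nT_\ell$, the MSA solutions as local bases, the degenerate stationary phase Lemma~\ref{lem:Int-esti} applied to \eqref{T_asym_comp_04} to produce the local amplitudes $\omega_{m_k}\mu_{m_k}$, the reduction via unitarity of $S$ (your ``through $1-|s_{11}|^2$'' when $\sigma_n$ is odd is exactly the paper's trick of passing from $\tau_{22}^n$ to $\tau_{21}^n$), the restriction of the order-$\mu_*$ amplitudes to $k\in\Lambda_*$, and the identification of the error exponent via $h^{1/(m_*(m_*+1))}=\mu_{m_*-1}/\mu_{m_*}$, which matches the paper's $\ord(\mu_*\mu_{*-1})$ cross-term error in \eqref{error_compare}. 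The origin of the Stokes phase $\theta_{m_*}^{j,k}$ in $\arg(\overline{\omega_{m_j}}\omega_{m_k})$, nonzero only for odd $m_*$ with opposite signs of $v_j,v_k$, is also as in the paper.

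There is, however, one concrete misstep that would derail the execution as you describe it. You assert that the zeroth-order structure of $T_k$ is ``a pure swap (antidiagonal) when $m_k$ is odd and the identity when $m_k$ is even.'' This contradicts Theorem~\ref{thm:Connection}, which you cite: in the non-adiabatic regime the MSA bases are built on $u^\pm=\exp(\mp\frac ih\int V)$, i.e.\ they track the \emph{signed} potential $V$ across each zero, so by \eqref{formula_Tk} \emph{every} $T_k$ is near-identity, $T_k=I-i\mu_{m_k}T_{\ope{sub}}+\ord(\mu_{m_k}^2+\mu_{m_k}h^{1/(m_k+1)})$, regardless of the parity of $m_k$. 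The parity of $\sigma_n$ enters the paper's proof only once, through $T_\ell$: when $V_\ell<0$ (i.e.\ $\sigma_n$ odd), formula \eqref{fomula_Tell_03} yields $T_\ell=T_{n,n+1}J$ with the single antidiagonal factor $J$ of \eqref{def_J}. This normalization is what makes Lemma~\ref{lem_alg_TT} applicable as stated: formula \eqref{t21^2_ver2} requires all factors to be of the near-diagonal form $\alpha_k=1+\ord(\mu^2)$, $\beta_k=\ord(\mu)$. With your swap-structured odd factors the lemma does not apply directly; one would need the $Q$-conjugation bookkeeping ($\widetilde{A_j}=QA_jQ$, extraction of $(iQ)^N$, effective potential $\tilde V$) that the paper develops only for the intermediate-regime Theorem~\ref{2ndthm}, and without it your cross-term phases would come out with incorrect signs. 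Relatedly, the phase in $T_{k,k+1}$ is $\exp(\mp\frac ih\int_{t_{k+1}}^{t_k}V\,dt)$ by Proposition~\ref{prop_Tkk1} (with error $\ord(\e^2/h)$), not the adiabatic gap phase $\exp(\pm\frac ih\int 2\sqrt{V^2+\e^2}\,dt)$ you wrote; the factor $2$ in the final cosine is not in the transfer matrix at all but arises from the squared $\nu_\kappa$ in the cross terms of $|\tau_{21}^n|^2$ in \eqref{t21^2_ver2}. Once you replace your parity-of-$T_k$ picture by the paper's near-identity normalization with the parity carried entirely by $T_\ell$, the rest of your outline goes through as intended.
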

\begin{remark}
    When every avoided crossing is generated by a transversal intersection, that is, $m_*=1$, Theorem \ref{mainthm} is proven under an additional assumption that $V$ is analytic near the real line \cite{WaZe21_01}. Our method also deduces the same asymptotic formula under Conditions \ref{condi_1}, \ref{condi_2} and the additional condition that $\tilde{\mu}_1:=(\log(1/h))^{{1/2}}\e h^{-1/2}$, replaced with $\mu_1$, is sufficiently small (see also the previous work \cite[Remark 1.2]{Hi23_01}).
\end{remark}
\begin{remark}\label{delta_vanish}
The factor $\gamma_*$ depends only on the highest order $m_*$ of the zeros and never vanishes while the factor $\delta_*(h)$ depends also on the behavior of $V$ not only the local property at zeros and may vanish. This vanishing phenomenon corresponds to the destructive quantum interference.  
Suppose, for example, that $|v_j|$ among $j \in \Lambda_*$ are the same. Put $N_*:= \# \Lambda_*$ and $n_* := \min \Lambda_*$. Then the condition for $\delta_*(h)$ to vanish is given by
\be\label{diamond}
N_* + 2 \left(
\sum_{j\in \Lambda_*\setminus\{n_*\}} \cos {\mathcal V}_j +
\sum_{\substack{j,k \in \Lambda_*\setminus\{n_*\}\\j<k}}
\cos ({\mathcal V}_j - {\mathcal V}_k)
\right) = 0,
\ee
where  
\be
{\mathcal V}_j := \frac{2}{h}\int_{t_{n_*}}^{t_j} V(t)dt 
+  \frac{1-(-1)^{m_*}}{2} 
(\ope{sgn}v_j)\frac{\pi}{2(m_*+1)}.
\ee
The algebraic curve \eqref{diamond} in $(N_*-1)$-variables $\{ {\mathcal V}_j  \}_{j\in \Lambda_*\setminus\{n_*\}}$ appears as so-called {\it Fermi surface} in the context of the discrete Laplacian on the $(N_*-1)$-dimensional diamond lattice, which is a generalization of the hexagonal lattice \cite{AIM16_01}.
\end{remark}

The rest of this subsection is devoted to the concrete expression of the transition probability in Theorem \ref{mainthm} for typical models by means of the following geometric quantity on the (time-energy) phase space. 
For each $k=1,2,\ldots,n-1$, we denote the area enclosed by $V(t)$ and $-V(t)$ between $t_{k+1}$ and $t_k$ by
\begin{equation}\label{areaV}
    \cA_k:=2\int_{t_{k+1}}^{t_k}\left|V(t)\right|dt.
\end{equation}

\begin{figure}
\centering
\includegraphics[bb=0 0 1201.5 205, width=14cm]{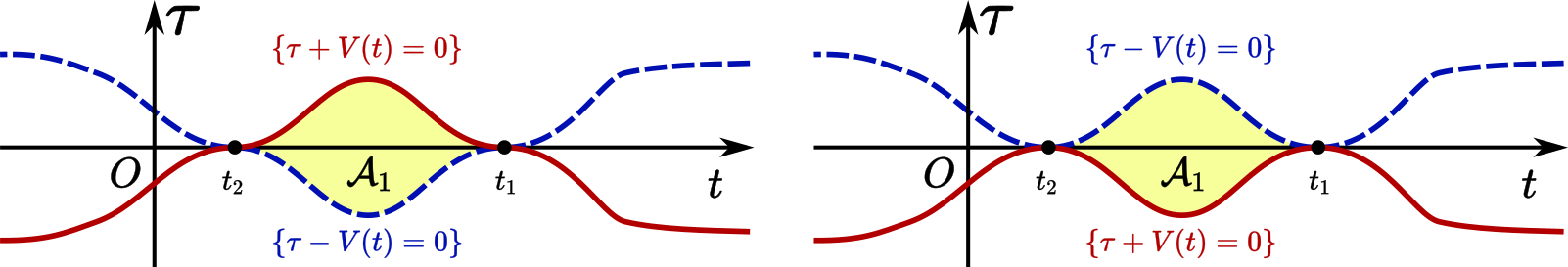}
\caption{Cases (b) (left) and (c) (right) in Example~\ref{2avo} }
\label{fig2-23}
\end{figure}

\begin{example}[Two avoided crossings]\label{2avo} {\ }
Let the number $n$ of avoided crossings be two. Then the transition probability $P(\e,h)$ is $1$ (resp. $0$) modulo $\ord(\mu_*^2)$ if the sum $\sigma_2=m_1+m_2$ of the order of zeros is odd (resp. even). In particular, when the two zeros have the same order, one sees that $P(\e,h)=\ord(\mu_*^2)$ independent of the parity of the order. We give the coefficient $C_{*}(h)$ attached to $\mu_*^2$ in each situation: (Note that $v_1$ is always positive.)
\begin{enumerate}
    \item[{\rm (a)}.] $m_1>m_2$;
    \begin{align}
        C_{*}(h)=\gamma_{m_1}v_1^{-\frac{2}{m_1+1}}. 
    \end{align}
    \item[{\rm (b)}.] $m_1=m_2\in2\Z-1$ and $|v_1| = |v_2|$ $(\iff v_1=-v_2>0)$; 
        \be\label{WZ_BS}
        C_{*}(h) = 4 \gamma_{m_1}  v_1^{-\frac{2}{m_1+1}}
        \cos^2 \left( \frac{\cA_1}{2h}
        - \frac{\pi}{2(m_1+1)}\right).
        \ee
    \item[{\rm (c)}.] $m_1=m_2\in2\Z$ and $|v_1| = |v_2|$ $(\iff v_1=v_2>0)$;
        \be
        C_{*}(h) = 4 \gamma_{m_1}  v_1^{-\frac{2}{m_1+1}} 
        \cos^2 \frac{\cA_1}{2h}.
        \ee
\end{enumerate}
\end{example}
\begin{remark}\label{BS_rem}
In Cases {\rm (b)} and {\rm (c)} of Example \ref{2avo}, we see that $C_*(h)$ may vanish and the order of the transition probability varies due to the destructive quantum interference under the Bohr-Sommerfeld type quantization rule 
\be\label{BS_condition2}
\left\{\begin{aligned}
&\frac{{\mathcal A}_{1}}{h} + \frac{m_1\pi}{m_1+1} \in2\pi\Z &&\text{Case {\rm (b)}},\\
&\frac{{\mathcal A}_{1}}{h} + \pi \in2\pi\Z &&\text{Case {\rm (c)}}.
\end{aligned}
\right.
\ee
This condition is a generalization of that shown in the work \cite{WaZe21_01} (for $m_1=1$).
\end{remark}

\begin{example}[Three avoided crossings]\label{3avo}  {\ }
Let $n=3$.  The transition probability is determined modulo $\ord(\mu_*^2)$ by the sum $(m_1+m_2+m_3)$  whereas the coefficient $C_{*}(h)$ attached to $\mu_*^2$ is determined by zeros $t_j$ only for $j\in\Lambda_*$ and by integrals of $V$ between them. 
In particular, when $\#\Lambda_*\le2$ and $\Lambda_*\neq\{1,3\}$, the coefficient $C_{*}(h)$ is given by the same formula as a model with two avoided crossings.
We remark once again that $v_1$ is always positive.
\begin{enumerate}
    \item[{\rm (a)}.] $\Lambda_*=\{1,3\}$ and $\left|v_1\right|=\left|v_3\right|$ $(\iff$ $v_1=(-1)^{m_1+m_2}v_3>0)$;
    \be
        C_{*}(h)=4\gamma_{m_1}v_1^{-\frac2{m_1+1}}\cos^2\left(\frac{\cA_1+(-1)^{m_2}\cA_2}h\right).
    \ee
    \item[{\rm (b)}.] 
    $m_1=m_2=m_3\in2\Z-1$ and $|v_1| = |v_2| = |v_3|$ $(\iff v_1=-v_2=v_3>0)$;
        \begin{equation}\label{three_example_ood}
        \begin{aligned}
            C_{*}(h) &= \gamma_{m_1} v_1^{-\frac{2}{m_1+1}} 
            \biggl[
            3 + 2\Bigl( \cos \Bigl(\frac{{\mathcal A}_1}{h} - \frac{\pi}{m_1+1}\Bigr) \\
            &\quad\qquad\qquad
            + \cos \Bigl(\frac{{\mathcal A}_2}{h} - \frac{\pi}{m_1+1}\Bigr) 
            + \cos \Bigl(\frac{\cA_1-\cA_2}{h}\Bigr) \Bigr)
            \biggr].
        \end{aligned}
        \end{equation}
    \item[{\rm (c)}.] 
    $m_1=m_2=m_3\in2\Z$ and $|v_1| = |v_2| = |v_3|$ $(\iff v_1=v_2=v_3>0)$;
        \begin{align}
            C_{*}(h) = \gamma_{m_1} v_1^{-\frac{2}{m_1+1}} 
            \left[
            3 + 2\left( \cos \frac{{\mathcal A}_1}{h} 
            + \cos \frac{{\mathcal A}_2}{h}
            + \cos \left(\frac{\cA_1+\cA_2}{h}\right) \right)
            \right].\nonumber
        \end{align}
\end{enumerate}
\end{example}

\begin{remark}
While the destructive quantum interference condition in the case $n=2$ is that the area on the phase space is quantized (i.e. discretized) as in \eqref{BS_condition2}, that condition in $n=3$ is that two areas lie along the Fermi curve. 
\end{remark}

\begin{figure}
\centering
\includegraphics[bb=0 0 856 213, width=10cm]{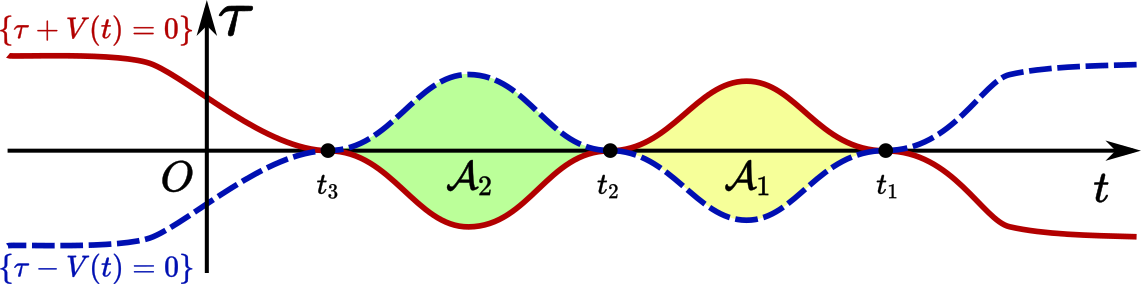}
\includegraphics[bb=0 0 856 213, width=10cm]{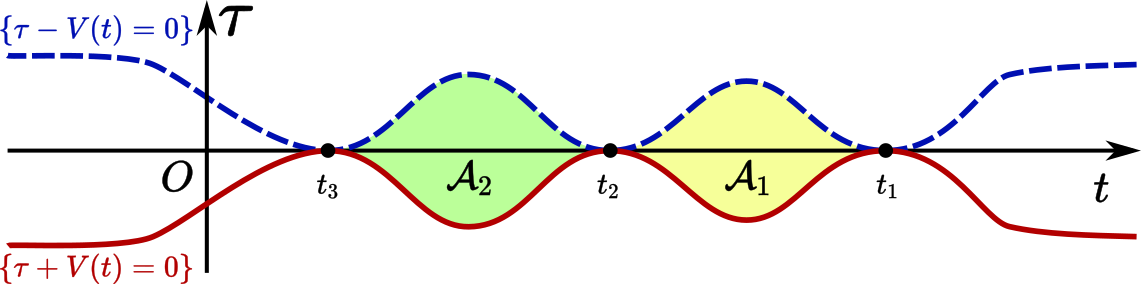}
\caption{Cases b (above) and c (below) in Example~\ref{3avo}}
\label{fig3-3}
\end{figure}

\subsection{Coexistence of the two parameter regimes}\label{coexist}
\begin{figure}
\centering
\includegraphics[bb=0 0 2097 750, width=10cm]{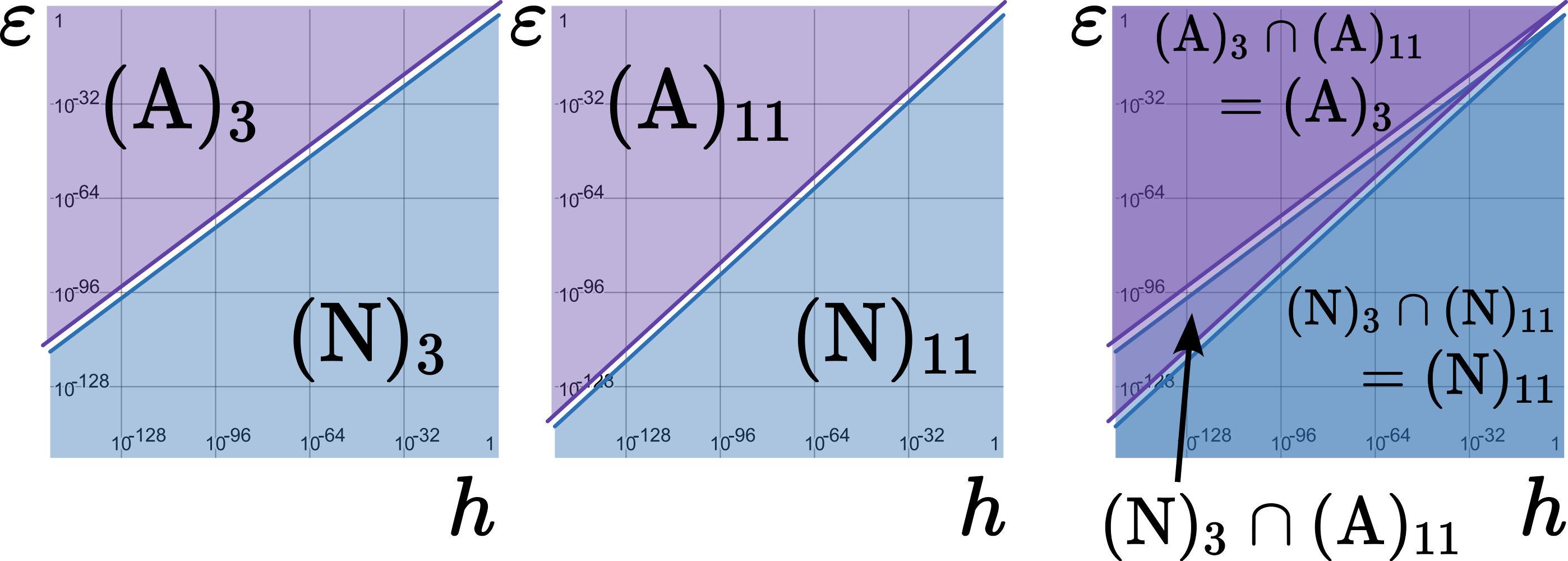}
\caption{Adiabatic and non-adiabatic regimes $({\rm A})_m$ and $({\rm N})_m$ for $m=3,11$ (logarithmic scale, $10^{-150}\le\e,h\le 1$, $({\rm A})_m=\{(\e,h);\,\mu_m\ge100\}$, $({\rm N})_m=\{(\e,h);\,0<\mu_m\le0.01\}$).}
\label{fig4}
\end{figure}

Recall that the quantum dynamics around each avoided crossing near $t=t_k$ depends principally on the magnitude of the parameter $\mu_{m_k}$. More precisely, $\mu_{m_k}\ll1$ and $\mu_{m_k}\gg1$ correspond to the non-adiabatic and adiabatic regimes (note that the regime $\mu_{m_k}\sim1$ is studied \cite{CoLoPo99_01} only for the transversal case $m_k=1$). This parameter is different for two zeros of $V(t)$ with different order, thus the transition problem with several avoided crossings generated by tangential intersections admits various regimes. 

Note that $\mu_m$ obeys the algebraic order relation:
\begin{equation}\label{order_relation-mu}
    m<m' \iff \mu_{m}<\mu_{m'}.
\end{equation}
The regime $\mu_{m_*}\ll1$ considered in Theorem~\ref{mainthm} corresponds to non-adiabatic regime $\mu_{m_k}\ll1$ for every $k\in\{1,\ldots,n\}$.
Conversely, the regime $\mu_{m_\cdt}\gg1$ (with $m_\cdt$ standing for the minimum order $\min_{k\in\{1,\ldots,n\}}m_k$) considered in the previous work \cite{Wa06_01} corresponds to adiabatic regime $\mu_{m_k}\gg1$ for every $k$. 

Here, we consider the case that the two different regimes coexist, that is, the set of indices is decomposed into a disjoint union of two parts
\ben
\{1,2,\ldots,n\}=\overline{\LA}\sqcup\overline{\LN}
\een
such that
\begin{align*}
\mu_{m_k}\gg1\quad &(\forall k\in\overline{\LA}\text{: adiabatic regime}),\\
\mu_{m_k}\ll1\quad &(\forall k\in\overline{\LN}\text{: non-adiabatic regime}).
\end{align*}
Again by \eqref{order_relation-mu}, this corresponds to 
\ben
\muA:=\mu_{\mA}\gg1,\quad \muN:=\mu_{\mN}\ll1,
\een
where we put $\mN:=\max_{k\in\overline{\LN}}m_k$ and $\mA:=\min_{k\in\overline{\LA}}m_k$. We also put
\ben
\LN:=\{k;\,m_k=\mN\}\subset\overline{\LN},\quad \LA:=\{k;\,m_k=\mA\}\subset\overline{\LA}.
\een

Figure~\ref{fig4} illustrates the regimes for $m=3,11$. When each zero of $V$ is either of order 3 or 11, we here study the regime $({\rm N})_3\cap({\rm A})_{11}$ while  Theorem~\ref{mainthm} and the work \cite{Wa06_01} concern the regime $({\rm N})_{11}$ and $({\rm A})_3$, respectively. 
In Figure~\ref{fig5}, the problem here corresponds to $({\rm N})_1\cap({\rm A})_2$ or $({\rm N})_2\cap({\rm A})_3$. Note also that these figures are displayed with a logarithmic scale. Hence the borders between regimes are straight lines. 
Indeed, the border $\mu_m=c$ for some $c>0$ is rewritten as $\log\e=\log c+\frac m{m+1}\log h$.

\begin{figure}
\centering
\includegraphics[bb=0 0 730 613, width=5cm]{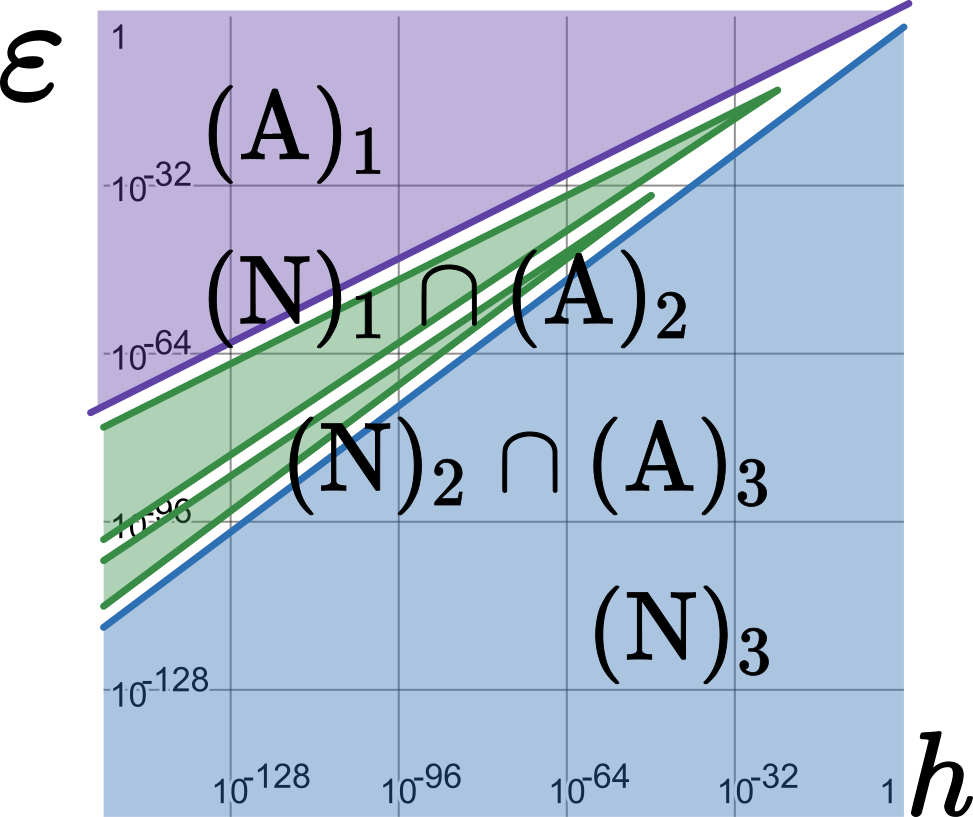}
\caption{Adiabatic and non-adiabatic regimes $(A)_m$ and $(N)_m$ for $m=1,2,3$.
}
\label{fig5}
\end{figure}

In the study of adiabatic regime, one of the authors employed \cite{Wa06_01} the exact-WKB method which requires the function $V$ to be analytic. Hence we also suppose the additional condition.

\begin{condition}\label{condi_3}
$V(t)$ is real-analytic on an interval containing $[t_n, t_1]$.  
\end{condition}

Under this condition, when $\e$ is small enough, there exist $2m_k$  
zeros of $V(t)^2 + \e^2$ near each $t= t_k$ like the power roots. 
We call these zeros turning points and denote the nearest two turning points to the real axis on the upper half-plane by $\zeta_{k,1}(\e)$, $\zeta_{k,m_k}(\e)$, which behave like 
\begin{equation}\label{asym_tp}
    \zeta_{k,j}(\e) \sim t_k + \left( \frac{m_k!}{v_k} \e \right)^{1/m_k}\exp\left[\frac{2j-1}{2m_k} \pi i \right]
\end{equation}
as $\e\to 0$.
We define the action integral $A_{k,j}(\e)$ for $j =1, m_k$  by
\ben
A_{k,j} := 2 \int_{t_k}^{\zeta_{k,j}(\e)} \sqrt{V(t)^2 + \e^2}\, dt,
\een
where the path is the segment from $t_k$ to $\zeta_{k,j}(\e)$ and the branch of the square root of the integrand is $\e$ at $t=t_k$. 
According to the work \cite{Wa06_01}, the asymptotics of the transition probability is given in terms of the turning points in the upper half plane and of the action integrals associated with them.
Moreover, the ratio of the  contribution coming from the other turning points than $\zeta_{k,1}$ and $\zeta_{k,m_k}$ compared with the contribution from $\zeta_{k,1}$ and $\zeta_{k,m_k}$ is exponentially small with respect to $\mu_{m_k}$.
Note that ${\rm Im}\, A_{k,j} >0$ on this branch. The action integral $A_{k,j}(\e)$ behaves like 
\begin{equation}\label{eq:expand-ImA}
{\rm Im}\, A_{k,j} = a_{k} \e^{(m_k + 1)/m_k} + b_{k,j} \e^{(m_k + 2)/m_k}  + \ord\left(\e^{(m_k + 3)/m_k}\right)
\end{equation}
as $\e \to 0$, where $a_k>0$ and $b_{k,1}=-b_{k,m_k}>0$ are constants independent of $\e$.

Roughly speaking, the absolute value of the ``probability amplitude of the transition around an avoided crossing near $t_k$" is small in the limit $\mu_{m_k}\to+\infty$. Contrary to the non-adiabatic case $\LN$, this fact is independent of the parity of $m_k$.
The probability amplitude has the same order as
\ben
\exp\left[-a_k\mu_{m_k}^{(m_k+1)/m_k}\right]\ll1.
\een
From the sake of distinguishing this difference, we introduce
\begin{align*}
\overline{\LA^{\rm odd}} = \{ k\in \overline{\LA}\,;\, m_k:\, \text{odd} \}. 
\end{align*}
Let $n_o=\# \overline{\LA^{\rm odd}}$ be the number of the elements of $\overline{\LA^{\rm odd}}$, and let the elements be labeled in the ascending order $k(1)<k(2)<\cdots<k(n_o)$:
\be\label{def-k(l)}
\overline{\LA^{\rm odd}} = \{k(1), k(2), \ldots, k(n_o) \}.
\ee

We also introduce the effective energy $\tilde V(t)=\tilde V(t;\mN,\mA)$ in this regime by
\be\label{modi_V}
\tilde V (t) = 
(-1)^{l_j}V(t) \quad \text{for $t\in(t_j,t_{j-1})$},
\ee
where $l_j$ is the largest integer such that $k(l_j)\le j$ with the convention $k(0)=0$ 
(see also \eqref{eq:effective-eng}).

Putting $a := \min_{k\in \LA} a_k$ 
and introducing two functions 
\begin{align*}
\epsilon_1=\epsilon_1(\mN, \mA, a) &= \muN + \exp\left[ -a \muA^{(\mA +1)/\mA}
\right],
\\
\epsilon_2=\epsilon_2(\mN,\mA,a) & = \muN \left( \muN + h^{1/{(\mN(\mN+1))}} \right) +\muA^{-(\mA+1)/\mA}\exp\left[-a\muA^{(\mA+1)/\mA}\right],
\end{align*}
we state the asymptotic expansion of the transition probability in this intermediate regime:
\begin{theorem}\label{2ndthm}
Assume Conditions \ref{condi_1}, 
\ref{condi_2} and \ref{condi_3}. 
Then there exist $0<\mu_0<1$ and $h_0>0$ such that, 
for any $\e$ and $h$ satisfying $\muN <\mu_0<\mu_0^{-1}<\muA$, and $h \in (0,h_0]$, the transition probability 
$P(\varepsilon, h)$ has the asymptotic expansions:  
$$
P(\varepsilon, h) =\left\{
\begin{array}{lll}
1- {\mathcal L}(\e,h) &+&
{\mathcal E}(\e,h)
 \qquad \textrm{if} \ (\sigma_n +n_o) \ \textrm{is odd},\\[12pt]
{\quad } {\mathcal L}(\e,h)&+& 
{\mathcal E}(\e,h)
 \qquad \textrm{if} \ (\sigma_n +n_o) \ \textrm{is even},
\end{array}
\right.
$$
where the leading term ${\mathcal L}(\e,h) = \ord(\epsilon_1^2)$ and the error therm ${\mathcal E}(\e,h) = \ord(\epsilon_1 \epsilon_2)$. 
\end{theorem}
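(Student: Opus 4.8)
The plan is to follow the same strategy as in the proof of Theorem~\ref{mainthm}, namely to write the scattering matrix as the ordered product
\ben
S(\e,h)=T_r^{-1}\,T_1\,T_{1,2}\,T_2\,T_{2,3}\cdots T_{n-1,n}\,T_n\,T_\ell,
\een
where $T_k$ is the transfer matrix across the avoided crossing near $t_k$, and $T_{k,k+1}$ (together with $T_\ell,T_r$) are the transfer matrices on the crossing-free intervals. By Proposition~\ref{prop_Tkk1} and the analysis of Appendix~\ref{app_jost}, the matrices $T_{k,k+1}$, $T_\ell$ and $T_r$ are diagonal modulo negligible errors, carrying only the WKB action phases $e^{\pm iA/h}$ built from $\int V$. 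The whole content is therefore concentrated in the $T_k$. First I would split the indices according to the regime: for $k\in\overline{\Lambda_\flat}$ (non-adiabatic, $\mu_{m_k}\ll1$) the asymptotics of $T_k$ are given by the connection formula Theorem~\ref{thm:Connection} proved in this paper, whose off-diagonal transition amplitude is of order $\mu_{m_k}$; for $k\in\overline{\Lambda_\sharp}$ (adiabatic, $\mu_{m_k}\gg1$) I would invoke instead the exact-WKB connection formula of \cite{Wa06_01,Wa12_01}, legitimate here thanks to the analyticity Condition~\ref{condi_3}, whose off-diagonal amplitude is exponentially small of order $\exp[-a_k\mu_{m_k}^{(m_k+1)/m_k}]$. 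The crucial structural fact, already emphasized in the introduction, is that both formulas produce an element of $\su(2)$ of the \emph{same} algebraic shape; only the magnitudes of the entries differ.

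The key reduction is to absorb the adiabatic crossings into an effective potential. Because each adiabatic $T_k$ is diagonal up to an exponentially small off-diagonal entry, to leading order the system simply follows its instantaneous eigenvalue through $t_k$; the only bookkeeping left is the sign change of $V$ that occurs at an odd-order zero. I would handle this exactly by replacing $V$ by the effective energy $\tilde V$ of \eqref{modi_V}, which flips the sign on alternate intervals bounded by the odd adiabatic crossings $t_{k(1)},\dots,t_{k(N)}$. With respect to $\tilde V$ the adiabatic crossings no longer flip the effective orientation, so the product above reduces, modulo the exponentially small and the $\mu_\flat$-corrections, to a pure non-adiabatic product governed only by the crossings in $\overline{\Lambda_\flat}$, to which the combinatorics of Theorem~\ref{mainthm} applies verbatim. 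This is where the parity $(\sigma_n+N)$ enters: since $\sigma_n\bmod2$ counts all odd-order zeros and the $N$ odd adiabatic ones contribute no transition, one has $(\sigma_n+N)\equiv(\text{number of odd non-adiabatic zeros})\pmod2$, which is precisely the parity that decides whether $P$ sits near $1$ or near $0$.

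With these ingredients in place, I would feed the product of $2n+1$ matrices in $\su(2)$ into the algebraic identity of Appendix~\ref{app_alg_lem}. Expanding $s_{21}$ as a sum over paths through the off-diagonal entries, the dominant contributions come either from a single non-adiabatic odd crossing (amplitude $\sim\mu_\flat$) or from a single adiabatic crossing (amplitude $\sim\exp[-a\mu_\sharp^{(m_\sharp+1)/m_\sharp}]$), so that $s_{21}$ is of size $\epsilon_1$ and its leading square gives $\mathcal L(\e,h)=\ord(\epsilon_1^2)$. The remaining terms, namely products involving a second off-diagonal entry or a subleading correction to a first one (the relative errors $\mu_\flat+h^{1/(m_\flat(m_\flat+1))}$ in the non-adiabatic amplitudes from Theorem~\ref{mainthm}, and $\mu_\sharp^{-(m_\sharp+1)/m_\sharp}$ in the adiabatic ones), assemble into the cross term $\mathcal E(\e,h)=\ord(\epsilon_1\epsilon_2)$.

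I expect the main obstacle to be the careful matching and sign/phase tracking across the adiabatic crossings. Concretely, one must (i) verify that the exact-WKB transfer matrices of \cite{Wa06_01,Wa12_01}, originally expressed in the eigenvector basis with its geometric phase, fit into the same $\su(2)$ normalization as the MSA-based matrices of Theorem~\ref{thm:Connection}, so that the two families can be multiplied in a single product; (ii) check that the effective-potential substitution $V\mapsto\tilde V$ correctly reproduces these signs and the attendant action phases $A_{k,j}$, including the branch conventions ensuring $\im A_{k,j}>0$; and (iii) control uniformly, for $\mu_\flat<\mu_0<\mu_0^{-1}<\mu_\sharp$, the error bookkeeping so that no cross term exceeds $\ord(\epsilon_1\epsilon_2)$. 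Once (i)--(iii) are settled, the conclusion follows from the same algebraic manipulation as in Theorem~\ref{mainthm}.
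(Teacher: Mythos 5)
Your proposal follows essentially the same route as the paper's proof: the same factorization of $S$ into transfer matrices, the same regime split (the MSA connection formula of Theorem~\ref{thm:Connection} for $k\in\overline{\Lambda_\flat}$, the exact-WKB matrices of \cite{Wa12_01} under Condition~\ref{condi_3} for $k\in\overline{\Lambda_\sharp}$), the same effective energy $\tilde V$, the same parity count $(\sigma_n+N)$, and the same algebraic lemma of Appendix~\ref{app_alg_lem}. The one point you defer as obstacles (i)--(ii) is exactly what the paper settles in \eqref{TK^A_case_4}--\eqref{wide_tilde}: at an odd adiabatic zero the matrix $T_k^{(A)}$ is anti-diagonal (not diagonal) in the MSA basis, and commuting the resulting factors $iQ$ through the product via \eqref{commute_prop} simultaneously effects the substitution $V\mapsto\tilde V$ in the $T_{k,k+1}$ and accumulates the factor $(iQ)^N$ in \eqref{S_Q} that produces the $N$-dependence of the parity.
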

\begin{remark}
The parity which characterizes the transition probability depends not only on $\sigma_n$ determined by $V$ but also on $n_o$ determined by the regime. 
This implies that the switch of $P(\e,h)$ occurs with changing the regime without doing the energy $V$ (see Figure~\ref{fig1}).
\end{remark}

As we mentioned in Section \ref{coexist}, Theorem \ref{2ndthm} covers the range of the pair of the parameters $(\e,h)$ included in the parameter regime determined by $\mN$ and $\mA$. 
Regarding $\e$ as a function of $h$ like a one-parameter problem, we find the typical cases, which realize the intermediate regime $\muA \to \infty$ and $\muN \to 0$. 
\begin{description}
    \item[Polynomial case:] If $\e\sim h^\alpha$ with 
    $$\frac{\mN}{\mN+1}<\alpha<\frac{\mA}{\mA+1},$$
    the contribution coming from $\LA$ is exponentially small. 
    \item[Logarithmic case:] If $\e = (h\log (1/h^\rho))^{\mA/(\mA +1)}$ with some positive constant $\rho$,
    the contribution coming from $\LA$ must be taken into account, since $\exp [-a\muA^{(\mA+1)/\mA}] = h^{a\rho}$.
\end{description}

In the former case, the leading term is similar to that in Theorem \ref{mainthm} and is given by
\begin{align}\label{leading_former}
{\mathcal L}(\e,h) &= \muN^2\left(
\sum_{j\in \LN} \gammaN |v_{j+1}|^{-\frac{2}{\mN +1}} 
  + 2\!\!\!\! \sum_{\substack{j,k\in \LN\\j<k}} \!\!\!\! {\rm Re}\, C_{j,k}^{\NN}(\e,h) 
  \cos \left[
  \frac{1}{h}\int_{t_k}^{t_j} \tilde V (t)dt
  \right]\right), 
\end{align}
where the factor $C_{j,k}^{\NN}(\e,h)$ is of $\ord(1)$ and consulted in \eqref{form_flafal}. 
In other cases including the latter case, the leading term is more complicated than \eqref{leading_former}. In fact, the leading term ${\mathcal L}(\e,h)$ is of the form:
\begin{align*}
&\muN^2\left(
\sum_{j\in \LN} \gammaN |v_{j+1}|^{-\frac{2}{\mN +1}} 
  + 2\!\!\!\! \sum_{\substack{j,k\in \LN\\j<k}} \!\!\!\! {\rm Re}\, C_{j,k}^{\NN}(\e,h) 
  \cos \left[
  \frac{1}{h}\int_{t_k}^{t_j} \tilde V (t)dt
  \right]\right)
  \\
  &\quad 
  + \sum_{k\in \LA} \exp \left[
  -2a_k \muA ^{(\mA +1)/\mA} 
  \right]\\
  &\quad + 2\!\!\!\! \sum_{\substack{j\in \LN,k\in \LA\\j<k}} \!\!\!\! {\rm Re}\, C_{j,k}^{\NA}(\e,h) 
  \muN \exp \left[
  -a_k \muA ^{(\mA +1)/\mA} 
  \right]
  \cos \left[
  \frac{1}{h}\int_{t_k}^{t_j} \tilde V (t)dt
  \right]\\
  &\quad + 2\!\!\! \sum_{\substack{j,k\in \LA\\j<k}} \!\!\! {\rm Re}\, C_{j,k}^{\AAa}(\e,h) 
  \exp \left[
  -(a_j+a_k) \muA ^{(\mA +1)/\mA} 
  \right]
  \cos \left[
  \frac{1}{h}\int_{t_k}^{t_j} \tilde V (t)dt
  \right],
\end{align*}
where $C_{j,k}^{\NA}(\e,h)$ and $C_{j,k}^{\AAa}(\e,h)$ are of $\ord(1)$ and referred in \eqref{form_flasha} and \eqref{form_shasha} respectively. 

\begin{remark}
The mixed terms coming from $\epsilon_1^2$ correspond to quantum interference terms referred in Remark \ref{delta_vanish}. 
The phase shift caused by the integral of the energy $V$ changes into the phase shift done by that of the effective energy $\tilde V$ as in Figure \ref{fig1}.
\end{remark}

In our method, we represent the Jost solution $J_\ell^+$ by several bases. As we mentioned in the introduction, each basis is consists of solutions corresponding to an eigenvector associated with $E_\pm(t,\e)$ in each region between two avoided crossings. Consequently, the absolute value of the coefficients gives $\|\Pi_\pm(t;\e) J_\ell^+\|_{\C^2}$. Moreover, one observes from our proof that 
\be\label{eq:effective-eng}
1\sim\|\Pi_{\sigma(t)}(t;\e) J_\ell^+(t)\|_{\C^2}>
\|\Pi_{-\sigma(t)}(t;\e) J_\ell^+(t)\|_{\C^2}\sim0,
\ee 
outside any $(\e,h)$-independent neighborhood of $\{t_1,\ldots,t_n\}$,  
where we put $\sigma(t):=(-1)^{\sigma_n+n_o}\ope{sgn}\tilde{V}(t)$.
In this sense, (the square of) the modulus of the probability amplitude that the energy follows the curve $\tau=E_{\sigma(t)}(t;\e)$ is almost 1. In Figure~\ref{fig1}, we draw this curve in green.  

\begin{example}\label{ex:intermediate} 
Figure~\ref{fig1} shows an example case with $m_k = 2k-1$ ($k=1,2,3$).
\end{example}

\begin{figure}[htb]
\centering
\includegraphics[bb=0 0 1410 1021, width=13cm]{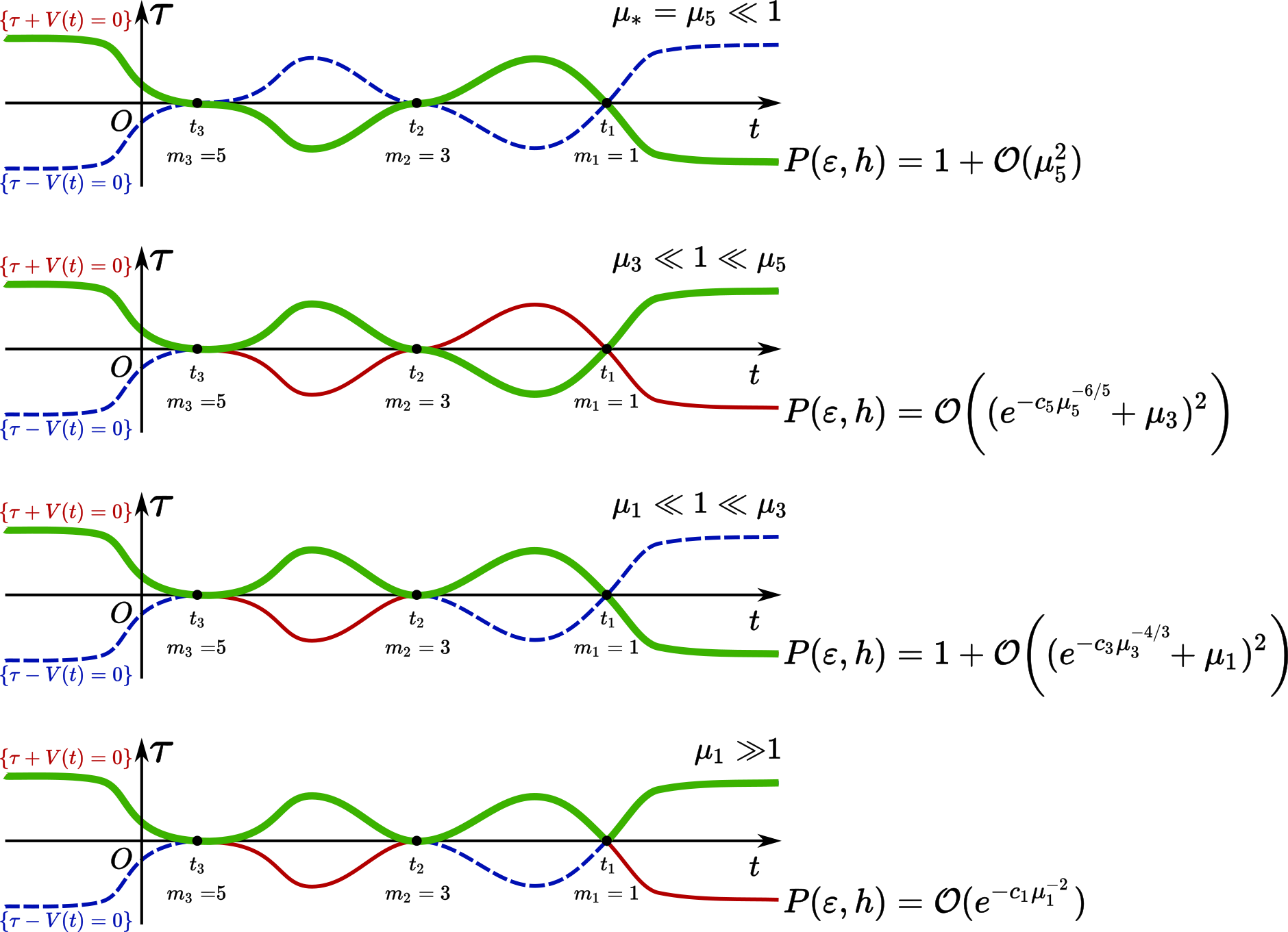}
\caption{Example~\ref{ex:intermediate} for each regime: The green curve illustrates the effective energy $\tau=E_{\sigma(t)}(t;0)=(-1)^{\sigma_n+n_o}\tilde V(t)$ (see also the explanation before Example~\ref{ex:intermediate}).}
\label{fig1}
\end{figure}

\section{Construction of exact solutions}\label{sec_const_sol}

In this section, we construct exact solutions which form a local basis near each vanishing point $t= t_k$ 
by means of a method of successive approximations due to the previous work \cite{FuMaWa19_01}. 
While the equation treated there is a second order $2\times 2$ system of time-independent Schr\"odinger equations, 
our equation in this paper is a first order $2\times 2$ system.  
 
\subsection{Estimates of fundamental solutions}

For simplicity, we assume $t_k=0$, and let $m$ denote $m_k$. Let $I$ be 
an interval that contains the vanishing point $0$ in its interior but no other vanishing points in its neighborhood.
We fix
\ben
u^\pm(t)=\exp\left(\mp\frac ih\int_0^tV(s)ds\right)
\een
as a particular solution to  
\be
\left(hD_t\pm V(t)\right)u=0\quad\text{on }\ I
\ee
respectively, where $D_t$ stands for $-i d/dt$. For any point $a\in I$, we define an integral operator $K^\pm_{a}$ by
\be
K^\pm_{a} f(t):=\frac ih u^\pm (t)\int_{a}^t \frac{f(s)}{u^\pm (s)}ds\quad\text{for }\ f\in C(I),
\ee
where $C(I)$ is the Banach space of continuous functions on $I$ equipped with the norm $\|f\|_{C(I)}:=\sup_{x\in I} |f(x)|$. 
Since 
\be
\left(hD_t \pm V(t)\right)K^\pm_a f=f\quad\text{for }\ f\in C(I), 
\ee
for any $a\in I$, the integral operator $K^\pm_{a}: C(I) \to C(I)$ is well-defined as a fundamental solution of $hD_t \pm V(t)$ respectively for the signs $\pm$.

Using these fundamental solutions $K^\pm_{a^\pm}$ with base points $a^\pm \in I$ respectively, 
our equation \eqref{eq:OurEq} turns into the integral system with arbitrary constants $c^+,c^-\in\C$:
\be\label{reduced_integral_system}
\left\{
\begin{aligned}
&\psi_1(t)=-\e K^+_{a^+}\psi_2(t)
+c^+u^+(t),\\
&\psi_2(t)=-\e K^-_{a^-}\psi_1(t)+c^-u^-(t).
\end{aligned}\right.
\ee
Depending on the choice of the base points $a^+$ and $a^-$, the initial value for $\psi_1$ and $\psi_2$ at these points are determined:
\ben
\psi_1(a^+)=c^+u^+(a^+),\quad
\psi_2(a^-)=c^-u^-(a^-).
\een

In the next subsection, we show a construction of the unique solution to the system by an iteration. For this purpose, we give the following estimates for the fundamental solutions. Note that they are independent of $\e$, and this estimate gives the critical rate $\mu_m=\e h^{-m/(m+1)}$.

Let $\|\cdot\|_q$ for $q\in\R$ be a norm on the space of continuously differentiable functions $C^1(I)$ defined by
\be
\left\|f\right\|_q:=\sup_I|f|+h^q\sup_I|f'|\quad f\in C^1(I).
\ee
\begin{proposition}\label{prop:behavior}
    For any $a^\pm\in I$, there exists $C>0$ such that 
    \be\label{Convergenece_estimate}
\left\|(u^\pm)^{-1}K_{a^\pm}^\pm (u^\mp f)\right\|_{\frac1{m+1}}\le C h^{-\frac m{m+1}}\|f\|_{\frac1{m+1}}
    \ee
    for $h>0$ small enough.
\end{proposition}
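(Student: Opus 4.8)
The plan is to reduce the operator bound \eqref{Convergenece_estimate} to a sharp van der Corput estimate for an oscillatory integral with a degenerate critical point. Writing $W(t):=\int_0^t V(s)\,ds$, so that $u^\pm(t)=\exp(\mp iW(t)/h)$, a direct substitution collapses the conjugated operator to a single oscillatory integral:
\ben
(u^\pm)^{-1}K^\pm_{a^\pm}(u^\mp f)(t)=\frac ih\int_{a^\pm}^t e^{\pm\frac{2i}hW(s)}f(s)\,ds=:g^\pm(t),
\een
because the factor $u^\pm(t)$ generated by $K^\pm_{a^\pm}$ cancels the prefactor $(u^\pm)^{-1}$ while $u^\mp/u^\pm=e^{\pm 2iW/h}$. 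The decisive structural fact is that the phase $\tfrac2hW$ has at $s=0$ a critical point of exact order $m$: since $W'=V$ vanishes to order $m$ at the crossing point, one has $W^{(l)}(0)=0$ for $1\le l\le m$ and $W^{(m+1)}(0)=v\neq0$, and after shrinking $I$ we may assume $|V(s)|\gtrsim|s|^m$ there.

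The derivative half of the weighted norm is immediate and already exhibits the exponent $-m/(m+1)$. Indeed $\tfrac{d}{dt}g^\pm=\tfrac ih e^{\pm 2iW/h}f$, so that
\ben
h^{\frac1{m+1}}\sup_I\big|(g^\pm)'\big|=h^{\frac1{m+1}-1}\sup_I|f|=h^{-\frac m{m+1}}\sup_I|f|\le h^{-\frac m{m+1}}\|f\|_{\frac1{m+1}}.
\een
It therefore remains to bound $\sup_I|g^\pm|$, i.e. the oscillatory integral itself, by $Ch^{-m/(m+1)}\|f\|_{1/(m+1)}$, uniformly in $a^\pm,t\in I$; the worst case is when the critical point $s=0$ lies inside the interval of integration.

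For this I would split $\int_{a^\pm}^t=\int_{\{|s|\le\rho\}}+\int_{\{|s|>\rho\}}$ with a cut-off $\rho>0$ to be optimized. On $\{|s|\le\rho\}$ the trivial bound gives $\ord(\rho\sup_I|f|)$. On $\{|s|>\rho\}$, where $|V|\gtrsim\rho^m$, I integrate by parts once via $e^{\pm2iW/h}=\tfrac{h}{\pm2iV}\tfrac{d}{ds}e^{\pm2iW/h}$; the boundary term is $\ord(h\rho^{-m}\sup_I|f|)$ and the remaining integral of $(f/V)'=f'/V-fV'/V^2$ is, using $\int_\rho s^{-m}\,ds$ and $\int_\rho s^{-(m+1)}\,ds$, of size $\ord\big(h\rho^{-m}\sup_I|f|+h\rho^{-(m-1)}\sup_I|f'|\big)$. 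Balancing the near and far contributions forces $\rho=h^{1/(m+1)}$ and yields
\ben
\Big|\int_{a^\pm}^t e^{\pm\frac{2i}hW(s)}f(s)\,ds\Big|\le C\Big(h^{\frac1{m+1}}\sup_I|f|+h^{\frac2{m+1}}\sup_I|f'|\Big).
\een
Multiplying by $h^{-1}$ gives $\sup_I|g^\pm|\le C\big(h^{-\frac m{m+1}}\sup_I|f|+h^{-\frac{m-1}{m+1}}\sup_I|f'|\big)=Ch^{-\frac m{m+1}}\|f\|_{1/(m+1)}$, which together with the derivative estimate proves \eqref{Convergenece_estimate}.

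The main obstacle is precisely the loss near the degenerate critical point: a crude application of van der Corput, bounding the amplitude by $\sup_I|f|+\int_I|f'|$, attaches the fixed factor $|I|$ (rather than the gain $h^{1/(m+1)}$) to $\sup_I|f'|$ and so overshoots the weighted norm. The weight $h^{1/(m+1)}$ placed on $\sup_I|f'|$ in the definition of $\|\cdot\|_{1/(m+1)}$ is tailored exactly to absorb the cut-off gain $\rho=h^{1/(m+1)}$; the delicate bookkeeping is to ensure the $\sup_I|f'|$ contribution picks up the extra power $h^{2/(m+1)}$ (hence $h^{-(m-1)/(m+1)}$ after the prefactor $h^{-1}$) instead of an $\ord(1)$ factor. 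One should also note that the computation is clean precisely for $m\ge2$: when $m=1$ the integral $\int_\rho s^{-m}\,ds$ is logarithmic, producing an extra $\log(1/h)$ in the $\sup_I|f'|$ term, which is the origin of the logarithmically modified parameter $\tilde\mu_1=(\log(1/h))^{1/2}\e h^{-1/2}$ mentioned after Theorem~\ref{mainthm}.
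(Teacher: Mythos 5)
Your proof is correct and takes essentially the same route as the paper: the reduction of $(u^\pm)^{-1}K^\pm_{a^\pm}(u^\mp f)$ to the oscillatory integral with phase $\pm2h^{-1}\int_0^s V(r)dr$, the trivial bound $h^{-1}\sup_I|f|$ for the derivative part, and the key oscillatory estimate \eqref{dege_stationary_estimate}, which the paper isolates as Lemma~\ref{lem:Int-esti} and proves by precisely your decomposition at the scale $\rho=h^{1/(m+1)}$ around the degenerate critical point together with one integration by parts on the far region (via a smooth cut-off where you use a sharp cut-off with explicit boundary terms). Your closing remark on the logarithmic loss when $m=1$ matches the paper's comment on $\tilde{\mu}_1$ after Theorem~\ref{mainthm}.
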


For the sake of the proof of Proposition \ref{prop:behavior}, 
we introduce the following lemma which plays an important role in this paper.

\begin{lemma}\label{lem:Int-esti}
On a compact interval $I\subset \R$, consider the integral 
\be
\mc{I}_I(h):=\int_I f(t)\exp\left(\frac {2i}h\int_{0}^t V(s)ds\right)dt,
\ee
with a continuously differentiable function $f\in C^1(I)$ possibly depending on $h$. Then there exists a constant $C>0$ independent of $f$ (but depending on $V$) such that 
\be\label{non-stationary_estimate}
\left|\mc{I}_I(h)\right|\le Ch\sup_I(|f|+|f'|),
\ee
for $h>0$ small enough when $V$ does not vanish on $I$. 
If $0$ is the unique zero in $I$ of $V$, one has
\be\label{dege_stationary_estimate}
\left|\mc{I}_I(h)\right|\le C\left(h^{\frac1{m+1}}\sup_I|f|+h^{\frac2{m+1}}\sup_I|f'|\right)=Ch^{\frac1{m+1}}\|f\|_{\frac1{m+1}},
\ee
where $m$ denotes the order of vanishing at $0$. Moreover if $f$ is independent of $h$, we have
\be\label{dege_stationary_formula}
\mc{I}_I(h)=
f(0)\omega_m h^{\frac1{m+1}}+\ord(h^{\frac2{m+1}}).
\ee
Here, the constant $\omega_m$ is given by
\be\label{eq:omega-m}
\omega_m = 2\left(\frac{(m+1)!}{2|V^{(m)}(0)|}\right)^{\frac1{m+1}}
\varGamma\left(\frac{m+2}{m+1}\right)\, \eta_m,
\ee
with
\be
\eta_m:=\left\{
\begin{aligned}
&\cos\left(\frac{\pi}{2(m+1)}\right)\quad &&m:\text{even},\\
&\exp\left(\frac{\ope{sgn}( V^{(m)}(0))i\pi}{2(m+1)}\right)\quad &&m:\text{odd}.
\end{aligned}\right.
\ee
\end{lemma}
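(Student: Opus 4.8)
The phase in $\mc{I}_I(h)$ is $\Phi(t)=\frac2h\int_{t_0}^tV(s)\,ds$, so that $\Phi'(t)=\frac2hV(t)$ and the stationary points of the oscillation are exactly the zeros of $V$. I would prove the three assertions in increasing order of difficulty. The non-stationary bound \eqref{non-stationary_estimate}, where $V\neq0$ on $I$, follows from a single integration by parts using $e^{i\Phi}=\frac{h}{2iV}\frac{d}{dt}e^{i\Phi}$: the boundary terms are $\ord(h)\sup_I|f|$, and since $1/V$ is bounded and $C^1$ on $I$ the remaining integral $\frac{h}{2i}\int_I(f/V)'\,e^{i\Phi}\,dt$ is $\ord(h)\sup_I(|f|+|f'|)$, with a constant depending on $\inf_I|V|$, $\sup_I|V'|$ and $|I|$.

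The heart of the matter, underlying both \eqref{dege_stationary_estimate} and \eqref{dege_stationary_formula}, is a reduction to a pure monomial phase. Since $V$ vanishes at $t_0$ to order exactly $m$, one has $\int_{t_0}^tV(s)\,ds=(t-t_0)^{m+1}R(t)$ with $R\in C^\infty$ near $t_0$ and $R(t_0)=V^{(m)}(t_0)/(m+1)!\neq0$, so $u=(t-t_0)\,(R(t)/R(t_0))^{1/(m+1)}$ is a smooth diffeomorphism near $t_0$ fixing $t_0$ with $u'(t_0)=1$ and $\int_{t_0}^tV=R(t_0)u^{m+1}$. Inserting a cutoff $\chi$ supported near $t_0$ (on its complement $V\neq0$, so \eqref{non-stationary_estimate} gives an $\ord(h)$ contribution), the integral becomes $\int F(u)\,e^{\frac{2iR(t_0)}{h}u^{m+1}}\,du$ with $F(u)=\chi(t(u))f(t(u))t'(u)\in C^1$ and $F(0)=f(t_0)$. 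Rescaling $u=h^{1/(m+1)}\sigma$ extracts a factor $h^{1/(m+1)}$ and replaces the phase by the fixed monomial $c_m\sigma^{m+1}$, $c_m=\frac{2V^{(m)}(t_0)}{(m+1)!}$, over a domain $h^{-1/(m+1)}u(\operatorname{supp}\chi)$ that expands to $\R$ as $h\to0$; on such a monomial phase, van der Corput's lemma with $k=m+1$ bounds $\int_a^b e^{ic_m\sigma^{m+1}}\,d\sigma$ uniformly in $a,b$.

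Both remaining claims then follow by splitting $F(h^{1/(m+1)}\sigma)=f(t_0)+\bigl(F(h^{1/(m+1)}\sigma)-f(t_0)\bigr)$. For the estimate \eqref{dege_stationary_estimate} the constant part contributes $h^{1/(m+1)}|f(t_0)|\,\ord(1)\le Ch^{1/(m+1)}\sup|f|$ by the uniform van der Corput bound, while the remainder is $\ord(h^{1/(m+1)}|\sigma|\sup|F'|)$, vanishes at $\sigma=0$, and integrates to $\ord(\sup|F'|)$ uniformly in the domain since its linear growth is defeated by the $\sigma^{m+1}$-oscillation; with the outer factor and $\sup|F'|=\ord(\sup|f|+\sup|f'|)$ this gives $\ord\bigl(h^{2/(m+1)}\sup|f'|\bigr)$ plus a harmless $\ord(h^{2/(m+1)}\sup|f|)$, hence \eqref{dege_stationary_estimate}. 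For the formula \eqref{dege_stationary_formula}, with $f$ independent of $h$ the same split yields the leading term $f(t_0)\int_{-\infty}^{\infty}e^{ic_m\sigma^{m+1}}\,d\sigma$ and an $\ord(h^{2/(m+1)})$ error from the amplitude remainder and the tails of the $\sigma$-integral. Finally I would evaluate the Fresnel-type integral by rotating the contour of $\int_0^\infty e^{ia\sigma^{m+1}}\,d\sigma$ to the ray $\arg\sigma=\ope{sgn}(a)\,\pi/(2(m+1))$, turning it into the Gamma integral $|a|^{-1/(m+1)}\frac1{m+1}\varGamma(\frac1{m+1})\,e^{i\,\ope{sgn}(a)\,\pi/(2(m+1))}$; summing the two half-lines and using $\frac1{m+1}\varGamma(\frac1{m+1})=\varGamma(\frac{m+2}{m+1})$ reproduces $\omega_m$ of \eqref{eq:omega-m}, with $\eta_m=\cos(\pi/(2(m+1)))$ when $m$ is even (then $m+1$ is odd and the half-lines combine into a cosine) and $\eta_m=\exp(\ope{sgn}(V^{(m)}(t_0))\,i\pi/(2(m+1)))$ when $m$ is odd.

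The main obstacle I anticipate is precisely the uniform control as the rescaled domain $h^{-1/(m+1)}u(\operatorname{supp}\chi)$ grows to all of $\R$: both the sharp $\ord(h^{2/(m+1)})$ remainder in \eqref{dege_stationary_estimate} and the convergence of the rescaled integral to the full-line Fresnel integral in \eqref{dege_stationary_formula} must be obtained while the amplitude is only $C^1$ and, in the estimate, allowed to depend on $h$; naively replacing $e^{i\Phi}$ by $e^{ic_m\sigma^{m+1}}$ and estimating $|e^{ia}-e^{ib}|\le|a-b|$ destroys the oscillation and diverges over the expanding range. The reduction to the monomial phase is what renders this tractable, the remaining labor being the van der Corput bookkeeping of the tails and of the vanishing-amplitude remainder, together with the contour computation that pins down the parity-dependent constant $\eta_m$.
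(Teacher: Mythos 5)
Your proposal is correct, and for the asymptotic formula \eqref{dege_stationary_formula} it follows the same skeleton as the paper: integration by parts via $e^{i\Phi}=\frac{h}{2iV}\frac{d}{dt}e^{i\Phi}$ off the zero (this is exactly the paper's proof of \eqref{non-stationary_estimate}), a cutoff near $t_0$, and the change of variable turning $2\int_{t_0}^tV$ into a monomial phase. Where you genuinely diverge is twofold. First, for the $h$-dependent estimate \eqref{dege_stationary_estimate} the paper does not rescale at all: it takes a cutoff supported at the scale $|t-t_0|\lesssim h^{1/(m+1)}$, bounds the near part trivially by the measure of the support times $\sup_I|f|$, and handles the far part by integration by parts using the weighted bound $\bigl|\frac{d}{dt}\bigl(\frac{f}{2iV}\bigr)\bigr|\le \frac{C}{t^{m+1}}\bigl(\sup_I|f|+t\sup_I|f'|\bigr)$. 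You instead rescale $u=h^{1/(m+1)}\sigma$ and split the amplitude as $f(t_0)+\bigl(F-f(t_0)\bigr)$ with van der Corput on the expanding domain; this works, and you correctly identify the uniformity over the growing domain as the crux. To close that step cleanly, write $G(\sigma)=F(h^{1/(m+1)}\sigma)-f(t_0)=\int_0^\sigma G'$, apply Fubini, and use the first-derivative bound $\bigl|\int_s^B e^{ic_m\sigma^{m+1}}d\sigma\bigr|\le C\min(1,s^{-m})$ for $s>0$, so the remainder contributes $\sup|G'|\int_0^\infty\min(1,s^{-m})\,ds=\ord\bigl(h^{1/(m+1)}\sup|F'|\bigr)$ before the outer factor. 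Second, for the leading constant the paper simply cites H\"ormander's degenerate stationary phase, whereas your contour rotation to the ray $\arg\sigma=\pi/(2(m+1))$ and the identity $\frac1{m+1}\varGamma(\frac1{m+1})=\varGamma(\frac{m+2}{m+1})$ is a self-contained derivation that correctly reproduces $\omega_m$ in \eqref{eq:omega-m}, including the parity dependence of $\eta_m$ (for $m$ even the two half-lines combine into the cosine; for $m$ odd the sign of $c_m=2V^{(m)}(t_0)/(m+1)!$ fixes the phase). The paper's route to the estimate is more elementary; yours is more uniform and makes the mechanism explicit.

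One shared caveat worth recording: the tail integral $\int_1^\infty s^{-m}ds$ in your bookkeeping (equivalently, the paper's $h\int_{h^{1/(m+1)}}^1 t^{-m}dt\,\sup_I|f'|$ on the support of $1-\chi$) converges only for $m\ge2$; at $m=1$ both arguments yield $h\log(1/h)\sup_I|f'|$, and the clean bound \eqref{dege_stationary_estimate} actually fails there (test $f(t)=e^{-it^2/h}/t$ on $[h^{1/2},1]$, suitably cut off, against $V(s)=s$: the left side is of size $\log(1/h)$ while the right side stays bounded). This is harmless for the paper, since Theorem~\ref{mainthm} assumes $m_*\ge2$ and the remark on the transversal case replaces $\mu_1$ by $\tilde\mu_1=(\log(1/h))^{1/2}\e h^{-1/2}$ precisely to absorb this logarithm, but your write-up should either restrict the uniform van der Corput bookkeeping to $m\ge2$ or record the logarithmic factor at $m=1$.
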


\bigskip

\begin{proof}[Proof of Lemma \ref{lem:Int-esti}]
Suppose that $V$ does not vanish on $I$, that is, a non-stationary case. Then we have for $t\in I$ 
\be\label{eq:IBPoscillation}
\frac h{2iV(t)}\frac{d}{dt}\exp\left(\frac {2i}h\int_{0}^t V(s)ds\right)
=\exp\left(\frac {2i}h\int_{0}^t V(s)ds\right).
\ee
This with an integration by parts and the compactness of $I$ implies the estimate from above of $|\mc{I}_I(h)|$ by $Ch\sup_I(|f|+|f'|)$. 

Suppose next that $0$ is the unique vanishing point of $V$ in $I$. 
Take a smooth cut-off function $\chi$ such that $\chi(t)=1$ for $|t|<Ch^{1/(m+1)}$ and $\chi(t)=0$ for $|t|>2Ch^{1/(m+1)}$. On the support of $1-\chi$, one has the estimate 
\ben
\left|\frac d{dt}\left(\frac {f(t)}{2iV(t)}\right)\right|\le \frac {C_V}{|t|^{m+1}}\left(\sup_I|f|+|t|\sup_I|f'|\right).
\een
Here, the constant $C_V$ depends on $V$ and the support of $\chi$ (independent of $f$).
This estimate with the integration by parts gives the estimate
\begin{align*}
 &\left|\mc{I}_I(h)-\int_I\chi(t)f(t)\exp\left(\frac{2i}h\int_0^t V(s)ds\right)dt\right|\\
= \ &
\left|\int_I(1-\chi(t))f(t)\exp\left(\frac{2i}h\int_0^t V(s)ds\right)dt\right|
\le 
Ch^{\frac1{m+1}}\|f\|_{\frac1{m+1}}.
\end{align*}
On the other hand, the contribution from $\chi$ is estimated by $h^{1/(m+1)}\sup_I|f|$ since the length of the support of $\chi$ is $\ord(h^{1/(m+1)})$, and the estimate \eqref{dege_stationary_estimate} follows.

We then suppose also that $f$ is independent of $h$. Let $g=g(t)$ be the smooth function defined near $0$ such that 
\ben
2\int_{0}^t V(s)\, ds = t^{m+1}g(t),\quad
g(t)=\frac{2V^{(m)}(0)}{(m+1)!}+\ord(t).
\een
Take a smooth cut-off function $\chi$ whose value is 1 near $0$ and supported only on a small neighborhood where the change of the variable $\tau=t|g(t)|^{1/(m+1)}$ is valid. 
Then one has
\ben
\int_I\chi(t)f(t)\exp\left(\frac{2i}h\int_{0}^t V(s)ds\right)dt
=|g(0)|^{-\frac 1{m+1}}
\int_{\R} \chi(t(\tau))\tilde f(\tau)e^{\sigma i\tau^{m+1}/h}\,d\tau
\een
with $\sigma = \ope{sgn} g(0)$ and a smooth function $\tilde f=\tilde f(\tau)$ satisfying $\tilde f(0)=f(0)$. The resulting asymptotic formula is obtained from this integral by applying the method of degenerate stationary phase (see e.g. \cite{Ho83_01}). 
The non-stationary estimate \eqref{non-stationary_estimate} is applicable on the support of $1-\chi$.
\end{proof}

Based on this lemma, let us prove Proposition \ref{prop:behavior}.  
\begin{proof}[Proof of  Proposition \ref{prop:behavior}]
For $f\in C^1(I)$, we have by definition
\ben
(u^\pm)^{-1}K^\pm_{a^\pm}(u^\mp f)(t)
=\frac ih \int_{a^\pm}^t\exp\left(\mp\frac{2i}h\int_0^s V(r)dr\right)f(s)\,ds.
\een
According to \eqref{dege_stationary_estimate} of Lemma~\ref{lem:Int-esti}, 
this integral is estimated by $Ch^{\frac1{m+1}}\|f\|_{\frac1{m+1}}$.
For the derivative, we have
\ben
\frac d{dt}\left[(u^\pm)^{-1}K^\pm_{a^\pm}(u^\mp f)(t)\right]=\frac ih \exp\left(\mp\frac{2i}h\int_0^t V(r)dr\right)f(t).
\een
This is clearly bounded by $h^{-1}\sup_I|f|$.
\end{proof}

\begin{remark}\label{better_estimate}
The argument in the proof of Lemma~\ref{lem:Int-esti} shows that the estimate \eqref{Convergenece_estimate} becomes better as $\ord(h)$  
if the integral interval does not contain $t=0$. 
\end{remark}

\subsection{Method of successive approximations (MSA)}

From Proposition \ref{prop:behavior}, it follows that for each ${}^t (c^+,c^-)\in\C^2$, there uniquely exists a solution to the integral system  \eqref{reduced_integral_system}. By a linearity of the system, the solution is given by the linear combination $c_1w_1(t)+c_2w_2(t)$ of the solutions $w_1(t)$ and $w_2(t)$ corresponding to the choices $\bm{e}_1={}^t(1,0)$ and $\bm{e}_2={}^t(0,1)$ for ${}^t(c^+,c^-)$. Moreover, the solution can be constructed by MSA:
\be\label{def:MSA1}
w_1(t) =w_1(t;a^-,a^+):=
\begin{pmatrix}
\displaystyle\sum_{k\ge0}(\e^2 K^+_{a^+}K^-_{a^-})^k u^+\\
\dip-\e K^-_{a^-}\sum_{k\ge0}(\e^2 K^+_{a^+}K^-_{a^-})^k u^+
\end{pmatrix} 
\ee 
\be\label{def:MSA2}
\left(\text{resp.}\ 
w_2(t) =w_2(t;a^+,a^-):=
\begin{pmatrix}
\dip-\e K^+_{a^+}\sum_{k\ge0}(\e^2 K^-_{a^-}K^+_{a^+})^k u^-\\
\dip\sum_{k\ge0}(\e^2 K^-_{a^-}K^+_{a^+})^k u^-
\end{pmatrix} \right) 
\ee
in $I$ for a fixed small $\e h^{-\frac{m}{m+1}}$.  
These iteration formulas imply that the solutions admit the asymptotic expansions when $\mu_m := \e h^{-\frac{m}{m+1}} \to 0$
as follows:
\begin{equation}\label{asym_vec}
\begin{aligned}
w_1(t; a^-,a^+) &=
\begin{pmatrix}
u^+(t)+\displaystyle \e^2  K^+_{a^+}K^-_{a^-} u^+(t)\\[5pt]
\dip -\e K^-_{a^-} u^+(t)
\end{pmatrix} + 
\begin{pmatrix}
\displaystyle \ord(\mu_m^4)\\[5pt]
\dip \ord(\mu_m^3)
\end{pmatrix},\\[7pt]
w_2(t; a^+,a^-) &=
\begin{pmatrix}
\dip -\e K^+_{a^+} u^-(t)\\[5pt]
u^-(t)+ \displaystyle \e^2  K^-_{a^-}K^+_{a^+} u^-(t)
\end{pmatrix} + 
\begin{pmatrix}
\dip \ord(\mu_m^3)\\[5pt]
\displaystyle \ord(\mu_m^4)
\end{pmatrix}. 
\end{aligned}
\end{equation}
Notice that Proposition \ref{prop:behavior} allows us to choose arbitrarily the base points $a^\pm$ of the fundamental solutions in this construction. 
As we mentioned in Remark \ref{better_estimate}, 
we obtain better asymptotic formulas 
\be\label{eq:behavior+}
w_1(t; a^-,a^+)=\begin{pmatrix}
u^+(t) +\ord(\e^2/h)\\\ord(\e)
\end{pmatrix}\quad\text{on }\ I\cap\{\pm t>0\}
\ee
if $\pm a^->0$, that is, the integral interval $[a^-, t]$ of $K^-_{a^-}$ does not contain the zero, and likewise 
\be\label{eq:behavior-}
w_2(t; a^+,a^-)=\begin{pmatrix}
\ord(\e)\\u^-(t)  +\ord(\e^2/h)
\end{pmatrix}\quad\text{on }\ I\cap\{\pm t>0\}
\ee
if $\pm a^+>0$, that is, the integral interval $[a^+, t]$ of $K^+_{a^+}$ does not. 

\bigskip

Take $(\e,h)$-independent constants $r$ and $\ell$ such that $\ell < 0 < r$ and $[\ell, r] \subset I$.  
We define the four MSA solutions $w_{1.r}, w_{2,r}, w_{1,\ell}$, and $w_{2,\ell}$ in $I$ as
\begin{equation}\label{eq:DefExSol}
\begin{aligned}
&w_{1,r}(t) := w_1(t; r, r), &&w_{2,r}(t) := w_2(t; r, r), 
\\
&
w_{1,\ell}(t) := w_1(t; \ell, \ell), &&w_{2,\ell}(t) := w_2(t; \ell, \ell).
\end{aligned}
\end{equation}

According to Proposition~\ref{prop:behavior} and the asymptotic formula \eqref{asym_vec}, one sees that the asymptotic behaviors of these MSA solutions as $\mu_m \to 0$
are clear on $[\ell,r]$, 
and also that $w_{j,r}(t)$  (resp. $w_{j,\ell}(t)$) $(j=1,2)$ behave like the initial data $u_j(t)$ near $t= r$ (resp. $t=\ell$) for a fixed small $\mu_m$.    
Moreover the pairs $(w_{1,r}(t), w_{2,r}(t))$ and $(w_{1,\ell}(t), w_{2,\ell}(t))$ form bases of the space of solutions.

\section{Connection formulas}\label{sec_connection}

In this section, we also treat a zero of $V(t)$ as $t_k = 0$ for simplicity. 
The purpose of this section is to establish the two connection formulas. One of them connects across the vanishing point and the other does between the consecutive vanishing points.  

\subsection{Across the vanishing point}

The crucial point of this proof is the connection formula between 
the two bases $(w_{1,r}(t),w_{2,r}(t))$ and $(w_{1,\ell}(t),w_{2,\ell}(t))$ introduced by \eqref{eq:DefExSol} in the previous section. 
The claim of this subsection is the asymptotic behavior of the transfer matrix $T(\e,h)$ as follows:
\begin{theorem}\label{thm:Connection}
For the solutions defined by \eqref{eq:DefExSol}, we have 
\be\label{T_bases}
\begin{pmatrix}
    w_{1,\ell}(t) & w_{2,\ell}(t)
\end{pmatrix}=
\begin{pmatrix}
    w_{1,r}(t) & w_{2,r}(t)
\end{pmatrix}
T(\e,h),
\ee
where the $2\times2$-matrix $T=T(\e,h)$ admits the following asymptotic formula:
\be\label{eq:Asym-T}
T(\e,h)=I_2-i\mu_m T_{\ope{sub}}+\ord(\mu_m^2+\mu_m h^{\frac1{m+1}})
\ee
as $(\e,h) \to (0,0)$ with $\mu_m =\e h^{-\frac m{m+1}} \to 0$ and
\be
T_{\ope{sub}}=\begin{pmatrix}0&\omega_m\\\overline{\omega}_m&0\end{pmatrix},
\ee
where $\omega_m$ is given by \eqref{eq:omega-m} with $t_0 = 0$.
\end{theorem}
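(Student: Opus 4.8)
The plan is to exploit the fact that at the right base point $t=r$ the basis $(w_{1,r},w_{2,r})$ is \emph{exactly} diagonal, so that the transfer matrix can be read off directly from the values of the left solutions at $r$. Indeed, in the MSA series \eqref{eq:DefExSol} every term with $k\ge1$ carries an outer factor $K^\pm_r$ whose integral runs from $r$ to $r$, hence vanishes at $t=r$; the same is true of the single prefactors $K^\mp_r$ in the off-diagonal components. Therefore
\[
\bigl(w_{1,r}(r),\,w_{2,r}(r)\bigr)=\begin{pmatrix}u^+(r)&0\\0&u^-(r)\end{pmatrix}.
\]
Since a solution of \eqref{eq:OurEq} is determined by its value at a single point, evaluating the defining relation \eqref{T_bases} at $t=r$ gives
\[
T(\e,h)=\begin{pmatrix}u^+(r)^{-1}&0\\0&u^-(r)^{-1}\end{pmatrix}\bigl(w_{1,\ell}(r),\,w_{2,\ell}(r)\bigr),
\]
so the whole computation reduces to evaluating the two left MSA solutions at the single point $r$.

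Next I would substitute the asymptotic expansions \eqref{asym_vec} with base points $a^\pm=\ell$ and read off the four entries at $t=r$, which is legitimate since those expansions hold uniformly on $[\ell,r]$. The diagonal entries are immediate: the first component of $w_{1,\ell}$ equals $u^+(r)+\e^2K^+_\ell K^-_\ell u^+(r)+\ord(\mu_m^4)$, and two applications of the bound in Proposition~\ref{prop:behavior} show $\e^2K^+_\ell K^-_\ell u^+(r)=\ord(\mu_m^2)$; likewise for the second component of $w_{2,\ell}$. After multiplication by $u^\pm(r)^{-1}$ (of modulus one) these contribute $1+\ord(\mu_m^2)$ on the diagonal of $T$. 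The off-diagonal entries come entirely from $-\e K^-_\ell u^+(r)$ and $-\e K^+_\ell u^-(r)$ and carry the leading behaviour.

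The heart of the matter is therefore the two oscillatory integrals
\[
K^\mp_\ell u^\pm(r)=\frac ih\,u^\mp(r)\int_\ell^r\exp\Bigl(\mp\frac{2i}h\int_0^s V\Bigr)\,ds,
\]
whose phase has its degenerate stationary point exactly at the zero $t_0=0\in[\ell,r]$ of $V$. Applying the degenerate stationary phase formula \eqref{dege_stationary_formula} of Lemma~\ref{lem:Int-esti} with $f\equiv1$ and $t_0=0$ gives $\int_\ell^r\exp(+\tfrac{2i}h\int_0^sV)\,ds=\omega_m h^{1/(m+1)}+\ord(h^{2/(m+1)})$; the integral with the opposite sign is its complex conjugate (because $V$ is real-valued), hence equals $\overline{\omega}_m h^{1/(m+1)}+\ord(h^{2/(m+1)})$. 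Multiplying by $\tfrac ih\,u^\mp(r)$ turns $h^{1/(m+1)}$ into $ih^{-m/(m+1)}$ and the remainder into $\ord(h^{(1-m)/(m+1)})$, so that $-\e K^+_\ell u^-(r)=-i\mu_m\omega_m u^+(r)+\ord(\mu_m h^{1/(m+1)})$ and $-\e K^-_\ell u^+(r)=-i\mu_m\overline{\omega}_m u^-(r)+\ord(\mu_m h^{1/(m+1)})$. Substituting these into the expression for $T$ and factoring out $u^\pm(r)^{-1}$ produces exactly $I_2-i\mu_m T_{\ope{sub}}$ together with the error $\ord(\mu_m^2+\mu_m h^{1/(m+1)})$, which is the assertion \eqref{eq:Asym-T}.

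The main obstacle is the bookkeeping in this last step: one must check that the stationary-phase expansion can be inserted uniformly into the MSA remainders of \eqref{asym_vec}, and that the various error scales --- $\ord(\mu_m^4)$ and $\ord(\mu_m^3)$ from the series and $\ord(h^{2/(m+1)})$ from the stationary phase --- all collapse into the single bound $\ord(\mu_m^2+\mu_m h^{1/(m+1)})$. A secondary point worth recording is the appearance of $\overline{\omega}_m$ rather than $\omega_m$ in the $(2,1)$-entry: it reflects the reality of $V$, equivalently the symmetry $\psi\mapsto{}^t(-\overline{\psi_2},\overline{\psi_1})$ relating $w_1$ and $w_2$, and consistency with this symmetry provides an independent check that $T_{\ope{sub}}$ has precisely the anti-diagonal form with conjugate entries.
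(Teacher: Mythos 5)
Your proposal is correct and follows essentially the same route as the paper's proof: evaluate the transfer matrix at $t=r$, where $(w_{1,r},w_{2,r})$ is exactly diagonal, write $T=\operatorname{diag}(u^+(r)^{-1},u^-(r)^{-1})\,(w_{1,\ell}(r)\ w_{2,\ell}(r))$, insert the MSA expansion \eqref{asym_vec}, and apply the degenerate stationary phase formula \eqref{dege_stationary_formula} of Lemma~\ref{lem:Int-esti} to the oscillatory integral, with the same collapsing of the error scales into $\ord(\mu_m^2+\mu_m h^{1/(m+1)})$. Your sign bookkeeping is in fact the consistent one: the paper's intermediate display \eqref{T_asym_comp_05} carries a stray minus sign relative to \eqref{T_asym_comp_04}, whereas your computation $u^-(r)\,\e K^+_\ell u^-(r)=+i\mu_m\omega_m+\ord(\mu_m h^{1/(m+1)})$, hence $T_{12}=-i\mu_m\omega_m$ (and $T_{21}=-i\mu_m\overline{\omega}_m$ by conjugation), is exactly what the statement \eqref{eq:Asym-T} requires.
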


\begin{remark}
By construction and the symmetry $\overline{K_a^\pm f} = - K_a^\mp\overline{f}$, we have
\be\label{base_symmetry} 
\begin{pmatrix}
0&1\\-1&0
\end{pmatrix}\overline{w_{2,\bullet}} =w_{1,\bullet}\quad(\bullet =r,\ell).
\ee
This makes $T$ symmetrical in the following sense:
\be\label{eq:Unitary-T}
T=\begin{pmatrix}\tau_1&-\tau_2\\\overline{\tau_2}&\overline{\tau_1}\end{pmatrix} \in \su (2),
\ee
where $\su (2)$ is the special unitary group of degree $2$ (see 
Appendix \ref{app_alg_lem} and \eqref{set_M}).  
Namely, $T$ is unitary $(|\tau_1|^2+|\tau_2|^2=1)$. This is a consequence that the time evolution by $H$ is unitary and that for $\bullet=\ell,r$, the basis\ben
(w_{1,\bullet}(\bullet),w_{2,\bullet}(\bullet))=\left(\begin{pmatrix}
    u^+(\bullet)\\0
    \end{pmatrix},
    \begin{pmatrix}
    0           \\u^-(\bullet)
\end{pmatrix}\right)
\een
of $\C^2$ is orthonormal.
\end{remark}


\begin{proof}[Proof of Theorem \ref{thm:Connection}] 
Since the pair $(w_{1,r},\, w_{2,r})$ forms a basis of the space of solutions, the matrix $(w_{1,r}(t)\, w_{2,r}(t))$ is invertible for any $t\in I$. Then \eqref{T_bases} is rewritten as
\ben
T(\e,h) = (w_{1,r}(t)\, w_{2,r}(t))^{-1} (w_{1,\ell}(t)\, w_{2,\ell}(t)),
\een
where the right-hand side is also independent of $t$.
Substituting $r$ for $t$, we have 
\be\label{T_asym_comp_02}
\begin{aligned}
    T(\e,h) &=
    \begin{pmatrix}
        u^-(r)&0\\0&u^+(r)
    \end{pmatrix}
    (w_{1,\ell}(r) \, w_{2,\ell}(r))\\
    &=\mathrm{Id}-
    \begin{pmatrix}
    0 & u^-(r) \e K_{\ell}^+u^-(r)\\
    u^+(r) \e K_{\ell}^- u^+(r) & 0
    \end{pmatrix}
    + \ord(\mu_m^2). 
\end{aligned}
\ee
Here, we have used $u^+u^-=1$ and \eqref{asym_vec}.

In order to prove Theorem \ref{thm:Connection}, it is enough to compute the asymptotic behavior of the quantity $u^-(r) \e K_{\ell}^+u^-(r)$ 
as $(\e,h) \to (0,0)$ with $\mu_m \to 0$. 
The computation of 
\begin{align}\label{T_asym_comp_04}
u^-(r) \e K_{\ell}^+ u^-(r) &= 
\frac{i\e}{h}
\int_{\ell}^r \exp\left(\frac{2i}{h}\int_0^t V(s)ds\right)\,dt
\end{align}
is carried out based on Lemma \ref{lem:Int-esti}. In fact, since the integral interval $[\ell,r]$ of \eqref{T_asym_comp_04} contains the zero of 
$V(t)$, we have   
\begin{align}\nonumber 
u^-(r) \e K_{\ell}^+ u^-(r) &= 
-\frac{i\e}{h}
\left( 
\omega_m h^{\frac{1}{m+1}} + \ord(h^{\frac{2}{m+1}})
\right)\\ \label{T_asym_comp_05}
&= -i \mu_m \omega_m + \ord(\mu_m h^{\frac{1}{m+1}}).
\end{align}
Combining \eqref{T_asym_comp_02} and \eqref{T_asym_comp_05}, we obtain Theorem \ref{thm:Connection}. 

\end{proof}

Hence this theorem implies that the transfer matrix $T_k$ in \eqref{formula_Smatrix} admits the asymptotics
\be\label{formula_Tk}
T_k(\e,h)= 
\begin{pmatrix}
1&-i \omega_{m_k} \mu_{m_k}\\
-i \overline{\omega}_{m_k} \mu_{m_k}&1
\end{pmatrix}
+\ord(\mu_{m_k}^2+\mu_{m_k} h^{\frac1{m_{k}+1}})
\ee
as $(\e,h) \to (0,0)$ with $\mu_{m_k} =\e h^{-\frac{m_k}{m_k+1}} \to 0$.

\subsection{Between the vanishing points}

In addition to Theorem \ref{thm:Connection}, which gives the connection formula around the vanishing point of $V(t)$, a similar argument yields the following proposition, which gives the connection formula between two consecutive zeros of $V(t)$. 

Let $t_{k+1}<t_k$ be two consecutive zeros of $V(t)$ (i.e. $V\neq0$ on $]t_{k+1},t_k[$) with multiplicities $m_k, m_{k+1}$, and 
let $\ell_j, r_j$ $(j=k,k+1)$ be base points such that 
$\ell_{k+1} < t_{k+1} < r_{k+1} < \ell_k < t_k < r_k$. The configuration of these points implies that each interval $I_j := [\ell_j, r_j]$ includes $t_j$ and does not intersect with each other. 
We set $m_* = \max \{ m_k, m_{k+1}\}$. 
We can consider two bases $(w_{1,\ell_k}, w_{2,\ell_k})$ and $(w_{1,r_{k+1}}, w_{2, r_{k+1}})$, which are similarly given by 
the formulas \eqref{eq:DefExSol} with $t_0 = t_k$ and $t_0 = t_{k+1}$ respectively.

\begin{proposition}\label{prop_Tkk1}
The change of basis $T_{k,k+1}(\e,h)$ given by 
\be\label{asym_Tkk1_01}
 (w_{1,r_{k+1}}, w_{2, r_{k+1}}) = (w_{1,\ell_k}, w_{2,\ell_k}) T_{k,k+1}(\e,h)
\ee
admits the following asymptotic behavior as $(\e,h) \to (0,0)$ with $\mu_{m_*} \to 0$:
\be\label{asym_Tkk1}
T_{k,k+1}(\e,h) = 
\begin{pmatrix}
\exp\left(-\frac{i}{h} \int_{t_{k+1}}^{t_k} V(t)dt \right) & 0\\
0& \exp\left(\frac{i}{h} \int_{t_{k+1}}^{t_k} V(t)dt\right)
\end{pmatrix}
+ \ord\left(\frac{\e^2}{h}\right).
\ee
\end{proposition}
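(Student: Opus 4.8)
The plan is to follow the proof of Theorem~\ref{thm:Connection} almost verbatim, the only new feature being that the two bases are now attached to two \emph{different} zeros. First I would note that $H(t;\e)$ is trace-free, so by Liouville's formula the Wronskian of any two solutions is constant in $t$; hence the matrix $T_{k,k+1}$ in \eqref{asym_Tkk1_01} is genuinely $t$-independent and may be evaluated at whatever point is convenient. I would choose $t=\ell_k$, the common base point of the left-hand basis, and write
\[
T_{k,k+1}(\e,h)=\bigl(w_{1,\ell_k}(\ell_k)\ \ w_{2,\ell_k}(\ell_k)\bigr)^{-1}\bigl(w_{1,r_{k+1}}(\ell_k)\ \ w_{2,r_{k+1}}(\ell_k)\bigr).
\]
Because each operator $K^\pm_{\ell_k}$ vanishes at its own base point, the solutions $w_{j,\ell_k}$ reduce there to their initial data, so the left factor is exactly the diagonal matrix $\mathrm{diag}\bigl(u^+_{(k)}(\ell_k),u^-_{(k)}(\ell_k)\bigr)$, where $u^\pm_{(k)}(t)=\exp(\mp\tfrac ih\int_{t_k}^tV)$; its inverse is $\mathrm{diag}\bigl(u^-_{(k)}(\ell_k),u^+_{(k)}(\ell_k)\bigr)$ since $u^+_{(k)}u^-_{(k)}\equiv1$.

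Next I would describe the right factor. The solutions $w_{1,r_{k+1}},w_{2,r_{k+1}}$ are built from the zero $t_{k+1}$ with base point $r_{k+1}$, and the evaluation point $\ell_k$ lies in the zero-free interval $(t_{k+1},t_k)$ to the right of $r_{k+1}$. Consequently every operator $K^\pm_{r_{k+1}}$ appearing in the MSA series integrates over $[r_{k+1},\ell_k]$, which avoids the zero $t_{k+1}$; this is exactly the situation of Remark~\ref{better_estimate}, so the off-diagonal MSA formulas \eqref{eq:behavior+} and \eqref{eq:behavior-} apply and give
\[
w_{1,r_{k+1}}(\ell_k)=\begin{pmatrix}u^+_{(k+1)}(\ell_k)+\ord(\e^2/h)\\ \ord(\e)\end{pmatrix},\qquad
w_{2,r_{k+1}}(\ell_k)=\begin{pmatrix}\ord(\e)\\ u^-_{(k+1)}(\ell_k)+\ord(\e^2/h)\end{pmatrix},
\]
with $u^\pm_{(k+1)}(t)=\exp(\mp\tfrac ih\int_{t_{k+1}}^tV)$. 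Multiplying by the diagonal inverse computed above, the diagonal entries of $T_{k,k+1}$ become $u^\mp_{(k)}(\ell_k)\,u^\pm_{(k+1)}(\ell_k)$. The phase mismatch between the two normalizations is precisely the integral across the gap: using $\int_{t_k}^{\ell_k}V-\int_{t_{k+1}}^{\ell_k}V=-\int_{t_{k+1}}^{t_k}V$, the base point $\ell_k$ drops out and these entries collapse to $\exp(\mp\tfrac ih\int_{t_{k+1}}^{t_k}V)$, reproducing the diagonal of \eqref{asym_Tkk1}.

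It remains to control the remainder. The diagonal corrections originate from the second MSA iterate $\e^2K^+_{r_{k+1}}K^-_{r_{k+1}}u^+_{(k+1)}$, which two successive applications of the non-stationary estimate \eqref{non-stationary_estimate} bound by $\ord(\e^2/h)$; the off-diagonal entries come from the single term $\e K^\mp_{r_{k+1}}u^\pm_{(k+1)}$ and are $\ord(\e)$ by one such integration, comfortably within the error budget of the subsequent product of transfer matrices. I expect the genuine obstacle to be not this algebra but the justification that the MSA construction centred at $t_{k+1}$ keeps its asymptotic validity on the enlarged, zero-free interval reaching all the way to $\ell_k$, which lies outside the original interval $I_{k+1}$: one must check that the successive-approximation series still converges there and, crucially, that in each $K^\pm_{r_{k+1}}$ the prefactor $h^{-1}$ is compensated by the $\ord(h)$ gain of Lemma~\ref{lem:Int-esti}, because the phase $\mp\tfrac2h\int_{t_{k+1}}^\bullet V$ has no stationary point on $[r_{k+1},\ell_k]$. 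Once this non-stationary bookkeeping is in place, collecting the diagonal phases and the remainders yields \eqref{asym_Tkk1}.
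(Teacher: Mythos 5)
Your proposal is correct and takes essentially the same route as the paper's own proof: both evaluate the change of basis at $t=\ell_k$, where the left basis reduces exactly to $\ope{diag}\bigl(u_k^+(\ell_k),u_k^-(\ell_k)\bigr)$, apply the asymptotic formulas \eqref{eq:behavior+} and \eqref{eq:behavior-} to $w_{j,r_{k+1}}$ (legitimate because the integration interval $[r_{k+1},\ell_k]$ contains no zero of $V$), and collapse the phases via the identity $u_k^\pm(\ell_k)\,u_{k+1}^\mp(\ell_k)=\exp\bigl(\pm\frac{i}{h}\int_{t_{k+1}}^{t_k}V(t)\,dt\bigr)$. The ``genuine obstacle'' you flag at the end --- validity of the MSA series up to $\ell_k$ --- is exactly the non-stationary bookkeeping the paper settles by noting that the fundamental solutions integrate over the zero-free interval $[r_{k+1},\ell_k]$ (cf.\ Remark~\ref{better_estimate}), so it is not an actual gap.
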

\begin{remark}
The error term in \eqref{asym_Tkk1} is rewritten as 
$\ord(\e^2/h) = \ord(\mu_1^2)$. 
From the order relation \eqref{order_relation-mu} of $\mu_m$ with respect to $m$, this error is smaller than that in \eqref{eq:Asym-T} when $m>1$.
\end{remark}

\begin{proof}[Proof of Proposition \ref{prop_Tkk1}] 
We can derive from \eqref{asym_Tkk1_01} the expression of $T_{k,k+1}$ by a similar way to the proof of Theorem \ref{thm:Connection}. 
We set $u_k^\pm = \exp\left(\mp i\int_{t_k}^t V(s) ds/h\right)$, and compute the matrix $T_{k,k+1}$ by using the value at $t=\ell_k$:
\begin{align*}
  T_{k,k+1}(\e,h)
 &= (w_{1,\ell_k} \  w_{2, \ell_k})^{-1} (w_{1,r_{k+1}}\, w_{2,r_{k+1}}) \bigr|_{t=\ell_k}\\
 &= \begin{pmatrix}
     u^-_k(\ell_k) & 0\\ 0 & u^+_k(\ell_k)
 \end{pmatrix}
 \left(
 \begin{pmatrix}
     u^+_{k+1}(\ell_k) & 0\\ 0 & u^-_{k+1}(\ell_k)
 \end{pmatrix} + \ord\left(\frac{\e^2}h\right)
 \right). 
\end{align*}
Here, we have used the asymptotic formulas \eqref{eq:behavior+} and \eqref{eq:behavior-}. Note that the integral interval $[r_{k+1}, \ell_k]$ of the fundamental solutions for the construction does not contain any zero of $V(t)$. 
We deduce \eqref{asym_Tkk1} from this with the identity
\be
u^\pm_{k}(\ell_k) u^\mp_{k+1}(\ell_k) = \exp\left(\pm\frac{i}{h} \int_{t_{k+1}}^{t_k} V(t) dt\right). 
\ee
\end{proof}

\section{End of the proofs}\label{sec_end_proof}

By using the transfer matrices $T_k$ $(k=1,2,\ldots,n)$, $T_{k,k+1}$ $(k=1,2,\ldots,n-1)$ introduced in the previous section and $T_r$, $T_\ell$ in Appendix~\ref{app_jost}, the scattering matrix $S=S(\e,h)$ is represented as
\be\label{formula_Smatrix}
S = T_r^{-1} T_1 T_{1,2} T_2 T_{2,3} T_3 \cdots T_{n-1,n} T_n T_\ell.
\ee
Here, the matrix $T_r$ (resp. $T_\ell$) has a similar form to $T_{k,k+1}$ which connects Jost solutions $J_r^\pm$ (resp. $J_\ell^\pm$) and local solutions near $t_1$ (resp. $t_n$). 
The previous section shows the asymptotic behaviors of $T_k$ and $T_{k,k+1}$ (see \eqref{formula_Tk}, \eqref{asym_Tkk1}), and 
the appendix does those of $T_r$ and $T_\ell$ (see \eqref{fomula_Tr_02}, \eqref{fomula_Tell_02}, \eqref{fomula_Tell_03}).  
The formula of $T_\ell$ depending on the sign of $V_\ell$ 
can be rewritten as 
\be
T_\ell = \left\{
\begin{aligned}
&T_{n,n+1}  &&\qquad  (\text{$\sigma_n$ is even})\\
&T_{n,n+1} J &&\qquad  (\text{$\sigma_n$ is odd}),
\end{aligned}
\right.
\ee
by means of $\sigma_n = \sum_{k=1}^n m_k$ and 
\be\label{def_J}
T_{n,n+1} := \begin{pmatrix}
    \exp\left(-\frac{i}{h} R_\ell\right) & 0\\
    0 & \exp\left(+\frac{i}{h} R_\ell\right)
\end{pmatrix},\qquad
J :=
\begin{pmatrix}
0 & -1\\ 1 &0
\end{pmatrix},
\ee
where  $J$ is a complex structure on $\C^2$, that is, $J^2 = -\mathrm{Id}$. 
Note that all of these transfer matrices are elements of $\su (2)$, which is mentioned in Appendix~\ref{app_alg_lem}. 
By using the notations ${\mathcal T}$ and $\tau_{jk}$ given in \eqref{notation_TT}, the scattering matrix $S$ is expressed as
$S = T_r^{-1} {\mathcal T}$ (resp. $S = T_r^{-1} {\mathcal T} J$) if $\sigma_n$ is even (resp. odd). This implies that, when $\sigma_n$ is even (resp. odd), the off-diagonal entry $s_{21}$ equals $e^{-iR_r/h}\tau_{21}$ (resp. $e^{-iR_r/h}\tau_{22}$). 
From Lemma \ref{lem_alg_TT} alone, it is complicated to examine the asymptotics of $\tau_{22}$ up to the coefficient of $\ord(\mu^2)$. However, thanks to the unitarity of the scattering matrix $S$, that is, $ |s_{11}|^2 + |s_{21}|^2 = 1$, the computation of the asymptotic behavior of $|s_{21}|^2$ can be reduced to that of $|\tau_{21}|^2$ even if $\sigma_n$ is odd. 

\subsection{Proof of Theorem \ref{mainthm}} 

Let us demonstrate the proof of Theorem \ref{mainthm}. 
Taking the complex numbers $\alpha_k$, $\beta_k(\mu)$ and $\upsilon_k$ in Lemma \ref{lem_alg_TT} as 
\begin{align*}
\alpha_k = 1,\qquad \beta_k(\mu) = -i \overline{\omega_{m_k}} \mu_{m_k},\qquad 
\upsilon_k = \exp\left(-\frac{i}{h} \int_{t_{k+1}}^{t_k} V(t)dt \right), 
\end{align*}
and noting $\mu_{m_k} \ll \mu_*$ for any $k$, we have from the algebraic formula \eqref{t21^2_ver2} the asymptotic behavior of $|\tau_{21}|^2$ as follows: 
\begin{equation}\label{error_compare}
\begin{aligned}
|\tau_{21}|^2 &= \mu_*^2 \sum_{j \in \Lambda_*} |\omega_{m_j}|^2  + 2 \mu_*^2\, \re\, \sum_{\substack{j,k \in \Lambda_*\\j<k}} \overline{\omega_{m_j}} \omega_{m_k} 
 e^{-\frac{2i}{h}\int_{t_k}^{t_j} V(t)dt }
 + \ord(\mu_*^3) + \ord(\mu_* \mu_{*-1}),
\end{aligned}
\end{equation}
where $\mu_{*-1}$ stands for $\mu_{m_* -1}$. 
Concerning the error terms in \eqref{error_compare}, the former one, i.e. $\ord(\mu_*^3)$, is a higher order error term coming from $\Lambda_*$ and the latter is a cross term between the largest vanishing order and the second largest one. This error coming from a cross term is at most $\ord(\mu_* \mu_{*-1})$. 
If $m_*$ is odd, $\omega_{m_j}$ is not real for $j \in \Lambda_*$, and the following phase shift term may arise from the product $\overline{\omega_{m_j}} \omega_{m_k}$ depending on the sign of $v_j$ and $v_k$:
\be
\ope{arg}(\overline{\omega_{m_j}} \omega_{m_k})
=\frac{((\ope{sgn}v_k)-(\ope{sgn}v_j)) \pi}{2(m_*+1)}
=\left\{
\begin{aligned}
-&\frac{(\ope{sgn}v_j) \pi }{2(m_*+1)}   && \text{if}\ \ope{sgn}v_j = -\ope{sgn}v_k,\\
0& && \text{otherwise}.
\end{aligned}
\right. 
\ee
Therefore, when $m_*$ is odd, the quantity $|\tau_{21}|^2$ behaves like 
\begin{align}\nonumber
 &\mu_*^2 (\omega_*^o)^2 
\left(
 \sum_{j \in \Lambda_*} |v_j|^{-\frac{2}{m_*+1}}  + 2 \re\, \sum_{\substack{j,k \in \Lambda_*\\j<k}}  |v_j v_k|^{-\frac{1}{m_*+1}} \cos \left(\frac{2}{h}  
 \int_{t_k}^{t_j} V(t)dt + \theta^{j,k} \right) 
\right)\\ 
\label{formula_t21_odd}
&\quad 
+ \ord(\mu_*^3) + \ord(\mu_* \mu_{*-1}),
\end{align} 
where 
\begin{align*}
\omega_*^o &= 2 \left( \frac{(m_* +1)!}{2} \right)^{\frac{1}{m_* +1}} \!\!
    \varGamma \left(\frac{m_* + 2}{m_*+1}\right),
    \\[7pt] 
\theta^{j,k} &= \left\{
\begin{aligned}
&(\ope{sgn}v_j) \frac{\pi }{2(m_*+1)}  && \text{if}\ \ope{sgn}v_j = -\ope{sgn}v_k,\\
&0 && \text{otherwise.}
\end{aligned}
\right. 
\end{align*}

On the other hand, $\omega_{m_j}$ is real when $m_*$ is even.
In this case, the quantity $|\tau_{21}|^2$ can be computed similarly as the formula \eqref{formula_t21_odd} 
by replacing $\omega_*^o$ with $\omega_*^o \cos(\pi/2(m_*+1))$ and $\theta^{j,k}$ with $0$. 
Therefore we have completed the proof of Theorem \ref{mainthm}.

\hfill $\Box$

\subsection{Proof of Theorem \ref{2ndthm}}

In the intermediate regime where $\muN\to0$ and $\muA\to\infty$, the asymptotic formula \eqref{formula_Tk} of the transfer matrix $T_k$ is valid for $k\in \overline{\LN}$ but not for $k\in\overline{\LA}$.
Note that, the parameter $\mu_{m_k}$ is no longer a small parameter for $k\in\overline{\LA}$ since $\mu_{m_k}\ge\muA$ due to the algebraic order relation \eqref{order_relation-mu} and the definition of $\muA$. 
To prove Theorem~\ref{2ndthm}, we first review the asymptotic behavior of $T_k$ as $\mu_{m_k}\to\infty$ obtained in the work \cite{Wa12_01} (\S\ref{sec:Tk-adia}).
We will find that the transfer matrix $T_k$ is asymptotic to an anti-diagonal matrix in the limit $\muA\to\infty$ if $k\in\overline{\LA^{\rm odd}}$.
We then use a change of bases of the solutions to apply the method of Appendix~\ref{app_alg_lem} for the computation of the product of the transfer matrices (\S\ref{sec:prod-Tk}).
One can see the role of the effective energy $\tilde V$ there.

\subsubsection{Asymptotic behavior of the transfer matrix $T_k$ in the adiabatic regime}\label{sec:Tk-adia}
Let $k\in\overline{\LA}$.  
Strictly speaking, the MSA solutions $w_{j,r_k}$ and $w_{j,\ell_k}$ $(j=1,2)$ are not defined in exactly the same way as in the case $k\in\overline{\LN}$ $(\Leftrightarrow \mu_{m_k}\ll1)$, since the convergence of the infinite series in their definitions (see \eqref{def:MSA1}, \eqref{def:MSA2}, \eqref{eq:DefExSol}) requires the smallness of the parameter $\mu_{m_k}$.  
Fortunately, thanks to Remark~\ref{better_estimate}, under the weaker condition $\mu_1\ll1$, we can still construct exact solutions on each closed, $(\e,h)$-independent interval without vanishing points of $V$, using MSA.  
More precisely, we define $w_{1,r_k}:=w_1(t;r_k,r_k)$ by regarding $K_r^\pm$ as a bounded operator on $(C^1(I_{r_k}),\|\cdot\|_{1/2})$ with $\inf I_{r_k}>t_k$.  
By abuse of notation, we continue to denote these solutions by the same symbols.  
Note that although these solutions are constructed only away from the vanishing point, they can be uniquely continued across it.  
Thus, the transfer matrix $T_k$ can be defined in the same way as in the non-adiabatic case.\\

Before considering $T_k$, that is, the transfer matrix between the bases $(w_{1,r_k},w_{2,r_k})$ and $(w_{1,\ell_k}, w_{2,\ell_k})$, we first study the transfer matrix $T_k^{\rm w}$ between the exact WKB solutions $(\psi_{k,r}^-, \psi_{k,r}^+)$ and $(\psi_{k,\ell}^-, \psi_{k,\ell}^+)$ defined below.
After that, we study two other transfer matrices. One is between $(w_{1,r_k},w_{2,r_k})$ and $(\psi_{k,r}^-, \psi_{k,r}^+)$, and the other one is between $(\psi_{k,\ell}^-, \psi_{k,\ell}^+)$ and $(w_{1,\ell_k}, w_{2,\ell_k})$.
The asymptotics of $T_k$ is governed by the product of these three transfer matrices.

Let us first define the exact WKB solutions. 
From Condition \ref{condi_3} and the asymptotic formula of the turning point \eqref{asym_tp}, we can find for small $\e$ a simply connected domain $\mc{S}_k(\e)\subset\C$ such that the following conditions are satisfied:
\begin{itemize}
    \item $\mc{S}_k(\e)$ is a complex neighborhood of $t_k$.
    \item $\mc{S}_k(\e)$ contains only four turning points $\zeta_{k,1}(\e), \zeta_{k,m_k}(\e), \overline{\zeta_{k,1}(\e)}, \overline{\zeta_{k,m_k}(\e)}$.
    \item $V$ is analytic in $\mc{S}_k(\e)$.
\end{itemize}
The base points of the symbol $\hat r_k, \hat  \ell_k \in \mc{S}_k(\e)$ are chosen with $\re\, \hat \ell_k < t_k < \re\, \hat r_k$ and $\im\, \hat r_k, \im\, \hat \ell_k >0$. 
The exact WKB solutions $\psi_{k,\bullet}^\pm$ $(\bullet= \ell, r)$ in $\mc{S}_k(\e)$  are defined by
\begin{align*}
    \psi_{k,r}^-(t) &:= -ie^{-i \overline{A_{k,1}}/2h}\psi^-(t, \overline{\zeta_{k,1}(\e)}, \overline{\hat r_k};h),\\
    \psi_{k,r}^+(t) &:=    -e^{iA_{k,1}/2h}\psi^+(t, \zeta_{k,1}(\e), \hat r_k;h),\\
    \psi_{k,\ell}^-(t) &:= -ie^{-i \overline{A_{k,m_k}}/2h}\psi^-(t, \overline{\zeta_{k,m_k}(\e)}, \overline{\hat \ell_k};h),\\ 
    \psi_{k,\ell}^+(t) &:= -e^{iA_{k,m_k}/2h}\psi^+(t, \zeta_{k,m_k}(\e), \hat \ell_k;h),
\end{align*}
with the notation of \eqref{def_exactWKBsol}.
Note that the branch appearing in this definition is also taken as in Appendix \ref{Review_WKB}.
As in \cite[Theorem 4.1.1]{Wa12_01}, we obtain the transfer matrix $T_k^{\rm w}$ between the exact WKB solutions $(\psi_{k,\ell}^-, \psi_{k,\ell}^+)$ and $(\psi_{k,r}^-, \psi_{k,r}^+)$ (i.e., $(\psi_{k,\ell}^-, \psi_{k,\ell}^+)=(\psi_{k,r}^-, \psi_{k,r}^+)T_k^{\rm w}$) by employing Lemma \ref{wkblem2} and Lemma \ref{wkblem3}:
The matrix $T_k^{\rm w}$ has the form
$$
T_k^{\rm w} =
\begin{pmatrix}
\alpha_k^{\rm w} & - \overline{\beta_k^{\rm w}}\\
\beta_k^{\rm w} & \overline{\alpha_k^{\rm w}}
\end{pmatrix} \in \su (2),
$$
where $\alpha_k^{\rm w},$ $\beta_k^{\rm w}$ admit the asymptotic expansions
\begin{align}\label{adi_alpha}
\alpha_k^{\rm w} &= 1 + \ord\left(\mu_{m_k}^{-\frac{m_k +1}{m_k}} \right),\\[7pt] \nonumber
\beta_k^{\rm w} &= (-1)^{\sigma_{k-1}}  \left( \exp\left[-\frac{i}{h} \overline{A_{k,1}}\right]  - (-1)^{m_k} \exp\left[-\frac{i}{h} \overline{A_{k,m_k}}\right] \right)\\ \label{adi_beta}
 \qquad &+ \ord\left(\mu_{m_k}^{-\frac{m_k +1}{m_k}} 
\exp\left[-a_k\mu_{m_k}^{(m_k+1)/m_k}\right]\right),
\end{align}
as $\mu_{m_k} \to \infty$ with the notations given in \S \ref{coexist}. 

The asymptotic behavior of the two transfer matrices between the bases consists of the MSA solutions and the exact WKB solutions is computed by comparing the asymptotic behaviors away from the vanishing point $t_k$ of these solutions.
The asymptotic behaviors of the MSA solutions are guaranteed by \eqref{eq:behavior+} and \eqref{eq:behavior-}, while those of the exact WKB solutions are given as follows:
There exist intervals $I_r\subset (t_k,t_{k-1})$ and $I_\ell\subset(t_{k+1},t_k)$ (with the convention $t_0=+\infty$ and $t_{n+1}=-\infty$) and $h$-independent $\C^2$-valued functions $f_{k,r}^\pm$, $f_{k,\ell}^\pm$ such that
\begin{align*} 
    \psi_{k,r}^-(t) &=\exp\left(-\frac ih \int_{t_k}^t\sqrt{V(s)^2+\e^2}\,ds\right)
    \left(f_{k,r}^-(t) (1+\ord(h))\right),\\
    \psi_{k,r}^+(t) &=\exp\left(\frac ih \int_{t_k}^t\sqrt{V(s)^2+\e^2}\,ds\right)
    \left(f_{k,r}^+(t) (1+\ord(h))\right),
\end{align*}
holds as $h\to 0$ uniformly on $I_r$ and that
\begin{align*}
    \psi_{k,\ell}^-(t) &=\exp\left(-\frac ih \int_{t_k}^t\sqrt{V(s)^2+\e^2}\,ds\right)
    \left(f_{k,\ell}^-(t) (1+\ord(h))\right),\\
    \psi_{k,\ell}^+(t) &=\exp\left(\frac ih \int_{t_k}^t\sqrt{V(s)^2+\e^2}\,ds\right)
    \left(f_{k,\ell}^+(t) (1+\ord(h))\right),
\end{align*}
holds as $h\to 0$ uniformly on $I_\ell$. These asymptotics are due to Lemma~\ref{wkblem2}.
For each point of the interval $I_r$ (resp. $I_\ell$), there exist a canonical curve of type $+$ from $\hat {r}_k$ (resp. $\hat\ell_k$) and a canonical curve of type $-$ from $\overline{\hat{r}_k}$ (resp. $\overline{\hat\ell_k}$).
Moreover, as $\e \to 0$ with $\mu_1 = \e^2/h \ll 1$, the behaviors of the phase factors and the functions $f_{k,r}^\pm$ (resp. $f_{k,\ell}^\pm$) can be computed depending on whether $V(t)$ on $I_r$ (resp. $I_\ell$) is positive or not as follows:
\begin{equation}\label{wkb_asym_away}
\begin{aligned}
    \psi_{k,r}^-(t) &= \exp\left[-(-1)^{\sigma_{k-1}} \frac{i}{h} \int_{t_k}^t V(s)ds\right] Q^{\sigma_{k-1}} \begin{pmatrix}
   1\\\ord (\e) 
\end{pmatrix} (1+\ord(\mu_1) + \ord(h)),\\
    \psi_{k,r}^+(t) &=  \exp\left[(-1)^{\sigma_{k-1}} \frac{i}{h} \int_{t_k}^t V(s)ds\right] (-Q)^{\sigma_{k-1}} \begin{pmatrix}
   \ord (\e)\\1 
\end{pmatrix}(1+\ord(\mu_1) + \ord(h)),\\
    \Biggl(\text{resp.}\ 
    \psi_{k,\ell}^-(t) &= \exp\left[-(-1)^{\sigma_{k}} \frac{i}{h} \int_{t_k}^t V(s)ds\right] Q^{\sigma_{k}} \begin{pmatrix}
   1\\\ord (\e) 
\end{pmatrix} (1+\ord(\mu_1) + \ord(h)),\\
    \psi_{k,\ell}^+(t) &= \exp\left[(-1)^{\sigma_{k}} \frac{i}{h} \int_{t_k}^t V(s)ds\right] (-Q)^{\sigma_{k}} \begin{pmatrix}
   \ord (\e)\\1 
\end{pmatrix} (1+\ord(\mu_1) + \ord(h)),\quad \Biggr)
\end{aligned}
\end{equation}
as $(\e,h) \to (0,0)$ with $\mu_1 \ll 1$ uniformly on $I_r$ (resp. $I_\ell$). 
Here $Q$ stands for the matrix $Q= \begin{pmatrix}
    0&1\\1&0
\end{pmatrix}$ and it is employed  that the integrand in the phase factor 
\begin{equation*}
    \sqrt{V(t)^2+\e^2}=(\ope{sgn} V(t))V(t)+\ord(\e^2)\qquad (\e \to 0),
\end{equation*}
away from the vanishing point $t_k$. 
The matrix $Q$ has the useful properties:
\begin{equation}\label{commute_prop}
Q \begin{pmatrix}
    a&b\\c&d
\end{pmatrix}
= \begin{pmatrix}
    d&c\\b&a
\end{pmatrix}Q,\qquad Q^2 = {\rm Id}.
\end{equation}
Notice that the signs of $V(t)$ in the interval $(t_k,t_{k-1})$ and $(t_{k+1},t_k)$ are given respectively by $(-1)^{\sigma_{k-1}}$ and by $(-1)^{\sigma_k}$.

Then one obtains $(w_{1,\bullet_k},w_{2,\bullet_k})=(\psi_{k,\bullet}^-,\psi_{k,\bullet}^+)T_{k,\bullet}$ for $\bullet=r,\ell$,
\begin{equation*}
    T_{k,r}=\left\{
    \begin{aligned}
        &I_2+\ord(\mu_1+h)&&(\sigma_{k-1}:\text{even}),\\
        &J+\ord(\mu_1+h)&&(\sigma_{k-1}:\text{odd}),
    \end{aligned}\right.\quad
    T_{k,\ell}=\left\{
    \begin{aligned}
        &I_2+\ord(\mu_1+h)&&(\sigma_k:\text{even}),\\
        &J+\ord(\mu_1+h)&&(\sigma_k:\text{odd}),
    \end{aligned}\right.
\end{equation*}
with $J=\begin{pmatrix}0&1\\-1&0\end{pmatrix}$.
The transfer matrix $T_k$ is given by $T_k=T_{k,r}^{-1}T_k^{\rm w}T_{k,\ell}$.
From $\sigma_k=\sigma_{k-1}+m_k$, we obtain
\begin{equation}\label{TK^A_case_4}
T_k = \left\{
\begin{aligned}
 &{\mathcal C}^{\sigma_{k-1}} T_k^{\rm e} &&\qquad \text{if $m_k$ is even},\\[5pt]
 &i Q {\mathcal C}^{\sigma_{k-1}}\left(\begin{pmatrix}
    -i & 0\\0 & i
\end{pmatrix} T_k^{\rm e}\right) &&\qquad \text{if $m_k$ is odd} (\iff k\in\overline{\LA^{\rm odd}}).
\end{aligned}
\right.
\end{equation}
Here, the matrix $T_k^{\rm e}$ is defined by
\begin{equation*}
    T_k^{\rm e}=\begin{pmatrix}
    \alpha_k&-\overline{\beta_k}\\
    \beta_k&\overline{\alpha_k}
\end{pmatrix},
\end{equation*}
with
\begin{equation*}
    \alpha_k=\alpha_k^{\rm w}+\ord(\mu_1+h),\qquad
    \beta_k=\beta_k^{\rm w}+\ord(\mu_1+h),
\end{equation*}
and ${\mathcal C}$ 
is the complex conjugation operator,
that is ${\mathcal C}\begin{pmatrix}a&b\\c&d\end{pmatrix} = \begin{pmatrix}\bar{a}&\bar{b}\\\bar{c}&\bar{d}\end{pmatrix}$. 
Note that, as $\mu_{m_k}\to\infty$, $T_k$ is asymptotic to an anti-diagonal matrix if $m_k$ is odd, that is, $k\in\overline{\LA^{\rm odd}}$, and to a diagonal matrix otherwise.

\subsubsection{Product of the transfer matrices}\label{sec:prod-Tk}
Let us introduce the notation of the transfer matrix $T_k^\prime$ belonging to $\su (2)$ as follows:
\begin{equation}\label{cases_Tk_2}
T_k^\prime = \left\{
\begin{aligned}
&-iQT_k\qquad&&k\in\overline{\LA^{\rm odd}},\\
& T_k&&\text{otherwise.}
\end{aligned}
\right.
\end{equation}
Then each $T_k'\in\su(2)$ is asymptotic to a diagonal matrix. 
For each $k\in\overline{\LA^{\rm odd}}$, one has $(w_{1,{\ell_k}},w_{2,\ell_k})=(iw_{2,r_k},iw_{1,r_k})T_k^\prime$ whereas one has $(w_{1,{\ell_k}},w_{2,\ell_k})=(w_{1,r_k},w_{2,r_k})T_k^\prime$ for each $k\notin\overline{\LA^{\rm odd}}$.
Recalling the properties of $Q$ \eqref{commute_prop} concerning the commutation and the square, 
we obtain another expression of the scattering matrix using this matrix $T_k^\prime$:
\begin{align}\label{S_Q}
S&= T_r^{-1} T_1T_{1,2}T_2\cdots T_{n-1,n}T_nT_{\ell}
= T_r^{-1} \widetilde{T_1^\prime}\widetilde{T_{1,2}}\widetilde{T_2^\prime}\cdots \widetilde{T_{n-1,n}}\widetilde{T_n^\prime}
 \left(iQ\right)^{n_o}T_\ell, 
\end{align}
where we have defined $\widetilde{T_j'}\in\su (2)$ and $\widetilde{T_{j,j+1}}\in\su (2)$ such that
\begin{equation}\label{wide_tilde}
\widetilde{T_j'} = Q^{l_j} T_j' Q^{l_j} , \qquad 
\widetilde{T_{j,j+1}} =Q^{l_j} T_{j,j+1} Q^{l_j},
\end{equation}
where $l_j$ is the largest integer such that $k(l_j)\le j$ with the convention that $k(0)=0$ ($k(l)$ for $1\le l\le n_o$ is defined by \eqref{def-k(l)}).
Hence the expression \eqref{S_Q} implies that the algebraic lemma \eqref{t21^2_ver1} can be applied directly to the computation of the off-diagonal entry of the product, and that the transition probability depends also on the parity of the number $n_o=\# \overline{\LA^{\rm odd}}$. 

As a sequel to this computation of the product in \eqref{S_Q}, we can derive the dependence on $\muN, \muA$ more precisely.  
Denoting the $(1,1)$-entry of $\widetilde{T_k^{\prime}}$ by $\widetilde{\alpha_k^{\prime}}$, we see that 
$\widetilde{\alpha_k^{\prime}}$ is of $\ord(1)$ 
and, in particular 
$\widetilde{\alpha_k^{\prime}} = 1+\ord(\e_1^2)$ for $k\in\LN$ (see \S\ref{coexist} for the definition of $\e_1$).  
Setting, similarly, the $(2,1)$-entry of $\widetilde{T_k^{\prime}}$ by $\widetilde{\beta_k^{\prime}}$, 
we can rewrite $\widetilde{\beta_k^{\prime}}$ as 
\be
\widetilde{\beta_k^{\prime}} \equiv \left\{
\begin{aligned}
&p_k \muN &&\qquad (k\in \LN),\\
&q_k(m_k, \sigma_{k-1}) \exp [-a_k \muA^{(\mA +1)/\mA}] &&\qquad (k\in \LA), 
\end{aligned}
\right.
\ee
modulo $\ord(\e_1^2)$, 
where $p_k$ and $q_k(m_k, \sigma_{k-1})$ are uniquely determined by \eqref{adi_beta}, \eqref{cases_Tk_2} and \eqref{wide_tilde}.
Notice that $p_k$ and $q_k(m_k, \sigma_{k-1})$ are of $\ord(1)$ in each regime. 
On the other hand, 
$\widetilde{T_{k,k+1}}$ can be regarded as the matrix $T_{k,k+1}$ by replacing ${\tilde V}(t)$ (see \eqref{modi_V}) with $V(t)$. From this fact, it is deduced that the asymptotic of the transition probability in the intermediate regime is determined by the effective energy $\tilde V$. Hence, we can obtain the asymptotic behavior of $|\tau_{21}|^2$ as follows:
\begin{align*}
&\muN^2\left(
\sum_{j\in \LN} \gammaN |v_{j+1}|^{-\frac{2}{\mN +1}} 
  + 2\!\!\!\! \sum_{\substack{j,k\in \LN\\j<k}} \!\!\!\! {\rm Re}\, C_{j,k}^{\NN}(\e,h) 
  \cos \left[
  \frac{1}{h}\int_{t_k}^{t_j} \tilde V (t)dt
  \right]\right)\\
  &\quad + \sum_{k\in \LA} \exp \left[
  -2a_k \muA ^{(\mA +1)/\mA} 
  \right]\\
  &\quad + 2\!\!\!\! \sum_{\substack{j\in \LN,k\in \LA\\j<k}} \!\!\!\! {\rm Re}\, C_{j,k}^{\NA}(\e,h) 
  \muN \exp \left[
  -a_k \muA ^{(\mA +1)/\mA} 
  \right]
  \cos \left[
  \frac{1}{h}\int_{t_k}^{t_j} \tilde V (t)dt
  \right]\\
  &\quad + 2\!\!\! \sum_{\substack{j,k\in \LA\\j<k}} \!\!\! {\rm Re}\, C_{j,k}^{\AAa}(\e,h) 
  \exp \left[
  -(a_j+a_k) \muA ^{(\mA +1)/\mA} 
  \right]
  \cos \left[
  \frac{1}{h}\int_{t_k}^{t_j} \tilde V (t)dt
  \right]\\
  &\quad + \ord(\epsilon_1 \epsilon_2),
\end{align*}
where $\epsilon_1, \epsilon_2$ are given in \S\ref{coexist} and 
\begin{align}\label{form_flafal}
C_{j,k}^{\NN} (\e,h) &= p_j\, \overline{p_k},\\ \label{form_flasha}
C_{j,k}^{\NA} (\e,h) &= 
\left( \prod_{\iota =j+1}^{k-1}\widetilde{\alpha_\iota^{\prime}}^2 \right) \widetilde{\alpha_k^{\prime}}\,
p_j\, \overline{q_k(m_k, \sigma_{k-1})},\\ \label{form_shasha}
C_{j,k}^{\AAa} (\e,h) &= 
\widetilde{\alpha_k^{\prime}}
\left( \prod_{\iota =j+1}^{k-1}\widetilde{\alpha_\iota^{\prime}}^2 \right) \widetilde{\alpha_k^{\prime}}\,
q_j(m_j, \sigma_{j-1}) \,\overline{q_k(m_k, \sigma_{k-1})}.
\end{align}
The proof of Theorem~\ref{2ndthm} has been completed.

\section*{acknowledgments}
The authors gratefully express our thanks to M.~Zerzeri for useful discussions. 
This research is supported by the Grant-in-Aid for JSPS Fellows Grant Number JP22KJ2364, Grant-in-Aid for Research Activity Start-up Grant Number JP25K23332, and Grant-in-Aid for Scientific Research (C) JP18K03349, JP21K03282. The authors are grateful to the support by the Research Institute for Mathematical Sciences, an International Joint Usage/Research Center located in Kyoto University.

\section*{Data Availability Statement}
There is no data appearing in this paper.

\appendix
\section{Jost solutions}\label{app_jost} 


In this subsection \ref{app_jost}, we give the existence of the Jost solutions for the definition of the scattering matrix. We remark that the smallness of $h$ is not required for the argument here.

We first consider the Jost solutions $J_r^\pm$ near $+\infty$. A discussion for $J_l^\pm$ is done similarly but the difference is that we are assuming that $V_r$ is positive (Condition~\ref{condi_1}).
Let $H_r$ denote the limiting Hamiltonian at $+\infty$:
\be
H_r:=\begin{pmatrix}
V_r&\e\\\e&-V_r
\end{pmatrix}. 
\ee
The functions defined by
\be\label{def_pphi}
\pphi_r^+(t) = e^{-i \lambda_r t/h} 
\begin{pmatrix}
    \cos \theta_r \\ \sin \theta_r
\end{pmatrix},\quad 
\pphi_r^-(t) = e^{+i \lambda_r t/h} 
\begin{pmatrix}
    -\sin \theta_r \\ \cos \theta_r
\end{pmatrix},
\ee
where $\lambda_r = \sqrt{V_r^2 +\e^2}$ and $\tan 2\theta_r = \e/V_r$ $(0 < \theta_r < \pi/4)$, are particular solutions to $hD_t \psi + H_r \psi =0$ and form a basis of $\C^2$ for each $t\in \R$. 

\begin{proposition}\label{exist:Jost}
There uniquely exists a pair of solutions $(\phi_r^+,\phi_r^-)$ to the system \eqref{eq:OurEq} such that
\be\label{eq:JostSol}
\lim_{t\to+\infty} \left(\phi_r^\pm(t)-\pphi_r^\pm(t) \right) =0.
\ee
\end{proposition}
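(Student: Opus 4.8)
The plan is to realize each Jost solution as the fixed point of a Volterra integral equation anchored at $+\infty$, exploiting that the free flow of the limiting Hamiltonian is unitary while the perturbation is integrable. Writing the system $hD_t\psi+H(t;\e)\psi=0$ as $\psi'=-\tfrac ih H(t;\e)\psi$, I set $W(t):=H(t;\e)-H_r=\ope{diag}\!\left(V(t)-V_r,\,-(V(t)-V_r)\right)$, so that by Condition~\ref{condi_1} the operator norm $\|W(t)\|=|V(t)-V_r|$ is integrable on $[0,+\infty)$. Let $U_0(t,s):=\exp\!\left(-\tfrac ih(t-s)H_r\right)$ denote the propagator of the free equation $\psi'=-\tfrac ih H_r\psi$; since $H_r=H_r^{*}$, each $U_0(t,s)$ is \emph{unitary}, and $\pphi_r^{\pm}(t)=U_0(t,0)\pphi_r^{\pm}(0)$. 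A direct differentiation shows that any bounded solution of
\[
\phi_r^{\pm}(t)=\pphi_r^{\pm}(t)+\frac ih\int_t^{+\infty}U_0(t,s)\,W(s)\,\phi_r^{\pm}(s)\,ds
\]
solves the full system, the boundary term at $+\infty$ contributing $-W(t)\phi_r^{\pm}(t)$ upon differentiating the integral and reproducing exactly the perturbation $-\tfrac ih W(t)\phi_r^{\pm}$.

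Next I would solve this equation by the Neumann series $\phi_r^{\pm}=\sum_{n\ge0}\psi_n$ with $\psi_0=\pphi_r^{\pm}$ and $\psi_{n+1}(t)=\tfrac ih\int_t^{+\infty}U_0(t,s)W(s)\psi_n(s)\,ds$, on the Banach space $C_b([T_0,+\infty);\C^2)$. Unitarity of $U_0$ and $\|\pphi_r^{\pm}\|_{\C^2}=1$ give, with $m(t):=\int_t^{+\infty}\|W(s)\|\,ds$, the bound $\|\psi_1(t)\|\le h^{-1}m(t)$; since $m'=-\|W\|$ one finds $\int_t^{+\infty}\|W(s)\|\,m(s)\,ds=\tfrac12 m(t)^2$, and inductively $\|\psi_n(t)\|\le \tfrac{1}{n!}\bigl(m(t)/h\bigr)^n$. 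Hence the series converges uniformly on $[T_0,+\infty)$, for \emph{any fixed} $h>0$ and $T_0$, to a solution of the integral equation, establishing existence. The asymptotic condition \eqref{eq:JostSol} is immediate, because
\[
\bigl\|\phi_r^{\pm}(t)-\pphi_r^{\pm}(t)\bigr\|_{\C^2}\le\sum_{n\ge1}\frac{1}{n!}\Bigl(\frac{m(t)}{h}\Bigr)^n=e^{m(t)/h}-1\longrightarrow 0
\]
as $t\to+\infty$, since $m(t)\to0$ by the integrability of $W$.

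For uniqueness, any two solutions satisfying \eqref{eq:JostSol} have difference $\delta$ obeying the homogeneous Volterra equation $\delta(t)=\tfrac ih\int_t^{+\infty}U_0(t,s)W(s)\delta(s)\,ds$; iterating the same estimate yields $\|\delta(t)\|\le \tfrac{1}{n!}(m(t)/h)^n\|\delta\|_\infty$ for every $n$, forcing $\delta\equiv0$ on $[T_0,+\infty)$, and then $\delta\equiv0$ on all of $\R$ by uniqueness for the linear ODE with continuous coefficients. The same argument then extends $\phi_r^{\pm}$ from $[T_0,+\infty)$ to a global solution, and the identical construction with $\pphi_\ell^{\pm}$ and integration toward $-\infty$ produces $\phi_\ell^{\pm}$ (using $V-V_\ell\in L^1((-\infty,0])$); the sign of $V_\ell$ plays no role here.

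The step requiring the most care is the mapping estimate: the oscillatory factors $e^{\pm i\lambda_r t/h}$ in $\pphi_r^{\pm}$ and in $U_0$ do \emph{not} decay, so one might fear having to extract cancellation from the oscillation. The key observation that makes the argument elementary is that $U_0$ is unitary, so no cancellation is needed—pure $L^1$-integrability of $W=H-H_r$ suffices, and the factorial gain in the Neumann series makes the series converge for arbitrary fixed $h$ (consistent with the remark that smallness of $h$ is not required). The only points to keep straight are the sign in the Duhamel term and the orientation of the integral, both fixed by the demand that the correction vanish as $t\to+\infty$.
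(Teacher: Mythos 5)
Your proof is correct, but it takes a genuinely different route from the paper's. The paper factorizes the prospective solution matrix as $\Phi_r(t)U(t)$ with $\Phi_r=(\pphi_r^+,\pphi_r^-)$, reduces \eqref{eq:OurEq} to the linear equation \eqref{eq:diffU} for $U$ with coefficient $A_r=\Phi_r^{-1}(H-H_r)\Phi_r$ integrable on the half-line by Condition~\ref{condi_1}, and then writes the solution in the closed form \eqref{def-Ur} normalized by $U_r(t)\to\mathrm{Id}$ as $t\to+\infty$; you instead keep the original unknown and solve the Volterra equation $\phi=\pphi_r^\pm+\frac ih\int_t^{+\infty}U_0(t,s)W(s)\phi(s)\,ds$ by Picard iteration, using unitarity of the free propagator $U_0$ and $W=H-H_r\in L^1$ to get the factorial bounds $\|\psi_n(t)\|\le (m(t)/h)^n/n!$. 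Your route has a robustness advantage: the exponential \eqref{def-Ur} literally solves \eqref{eq:diffU} only modulo the commutativity of $A_r(t)$ with its primitive, which is delicate here because of the oscillatory off-diagonal factors $e^{\pm2i\lambda_r t/h}$ in \eqref{A_express}, so the paper's formula is best read as a time-ordered exponential; your iteration needs no such care, and you also prove the uniqueness asserted in the statement, which the paper's proof does not spell out (your uniqueness step could in fact be shortened: since $H(t;\e)$ is Hermitian, $\|\delta(t)\|_{\C^2}$ is constant for any solution $\delta$, so $\delta(t)\to0$ forces $\delta\equiv0$ directly, without iterating the homogeneous Volterra equation). What the paper's formulation buys in exchange is the explicit exponent $\int_{+\infty}^t A_r(s)\,ds$, which is exactly what is fed into Lemma~\ref{alg_UA} to extract the $(\e^2/h,\e)\to(0,0)$ asymptotics of Proposition~\ref{Psi_asym_e} and thence the transfer matrix $T_r$; from your Neumann series those asymptotics would have to be re-derived by expanding the Duhamel terms. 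Finally, your parenthetical that the sign of $V_\ell$ plays no role is accurate for existence and uniqueness; the paper's case distinction \eqref{def_pphi_nega} matters only for the choice of the comparison solutions $\pphi_\ell^\pm$ and the later asymptotic formulas, not for this proposition.
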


\begin{proof}
Let $U(t)$ be a $2\times 2$-matrix valued $C^1$-function. We have
\ben
\frac d{dt}(\Phi_r(t)U(t))=\Phi_r '(t)U(t)+\Phi_r(t) U'(t)
=\frac 1{ih} H_r \Phi_r (t)U(t)+\Phi_r(t) U'(t)
\een
with $\Phi_r :=(\pphi_r^+,\pphi_r^-)$. Thus, if $U(t)$ satisfies 
\be\label{eq:diffU}
U'=\frac 1{ih} \Phi_r^{-1}(H-H_r)\Phi_r U,
\ee
each column of the matrix-valued function $\Phi_r U$ is a solution to the equation \eqref{eq:OurEq}.  
Put $B_r(t):=\Phi_r^{-1}(H-H_r)\Phi_r$. 
From the identity  
\be
H-H_r =(V(t)-V_r)
\begin{pmatrix}
1&0\\0&-1
\end{pmatrix}
\ee
and Condition~\ref{condi_1}, the matrix-valued function $B_r(t)$:
\be\label{A_express}
B_r(t) = (V(t) - V_r) 
\begin{pmatrix}
\cos 2\theta_r & -e^{+2i \lambda_r t/h} \sin 2\theta_r\\
- e^{-2i \lambda_r t/h} \sin 2\theta_r & - \cos 2\theta_r
\end{pmatrix},
\ee
is integrable on the half-line $[0,\infty[$. 
Then the function
\be\label{def-Ur}
U_r(t):=\exp\left(-\frac ih\int_{+\infty}^t B_r(s)ds\right),
\ee
is well-defined and solves the equation \eqref{eq:diffU} with the boundary condition 
\ben
\lim_{t\to+\infty}U_r(t) = \mathrm{Id}.
\een
Here we recall that ${\rm Id}$ stands for the $2 \times 2$ unit matrix. 
We finally obtain the solutions $\phi_r^\pm(t)$ with the asymptotic behavior \eqref{eq:JostSol}:
\be
(\phi_r^+(t),\phi_r^-(t)):=\Phi_r(t) U_r(t).
\ee
\end{proof}

From Proposition \ref{exist:Jost} and the trace-free property of $H(t;\e)$, the pair of $(\phi_r^+, \phi_r^-)$ forms a basis. Similarly, this fact 
implies that $\phi_r^+$ (resp. $\phi_r^-$) coincides with the Jost solution $J_r^+$ (resp. $J_r^-$). 
\bigskip

Next, we give the asymptotic behaviors of $\phi^\pm_r$ as $\e \to 0$ near some fixed point $t_r$. 
Take $t_r>t_1$ (recall that $t_1=\max\{t\in \R\, ;\,V=0\}$ is the first zero of $V$)
satisfying
\be\label{non-zero-condition2}
\int_{+\infty}^{t} (V(s)-V_r)\,ds\neq0.
\ee

put 
\ben
R_r=V_r t_r+\int_{+\infty}^{t_r}(V(s)-V_r)\,ds,\qquad
u_{r}^\pm =\exp\left(\mp \frac ih \int_{t_r}^t V(s)ds\right).
\een
\begin{proposition}\label{Psi_asym_e}
There exists a neighborhood of $t_r$ such that we have
\begin{align*}
\phi_r^+(t)=e^{-i R_r/h}
\begin{pmatrix}
u_{r}^+ +\ord(\e^2/h)\\\ord(\e)
\end{pmatrix},\quad
\phi_r^-(t)=e^{+i R_r/h}
\begin{pmatrix}
\ord(\e)\\u_{r}^-
+\ord(\e^2/h)
\end{pmatrix}
\end{align*}
as $(\e^2/h,\e)\to(0,0)$ uniformly with respect to $t$.
\end{proposition}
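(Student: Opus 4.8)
The plan is to begin from the factorization $(\phi_r^+,\phi_r^-)=\Phi_r(t)U_r(t)$ produced in the proof of Proposition~\ref{exist:Jost}, with $\Phi_r=(\pphi_r^+,\pphi_r^-)$ and $U_r$ given by \eqref{def-Ur}, and to extract the leading behavior of each factor separately as $\e\to0$. For $\Phi_r$ this is routine: from $\lambda_r=\sqrt{V_r^2+\e^2}=V_r+\ord(\e^2)$, $\cos\theta_r=1+\ord(\e^2)$ and $\sin\theta_r=\ord(\e)$, the diagonal entries of $\Phi_r$ equal $e^{\mp iV_rt/h}(1+\ord(\e^2/h))$ and the off-diagonal entries are $\ord(\e)$, uniformly for $t$ near $t_r$. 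The only care needed is that the $\ord(\e^2)$ gap between $\lambda_r$ and $V_r$ is divided by $h$ inside the phase, giving $e^{\mp i\lambda_r t/h}=e^{\mp iV_r t/h}(1+\ord(\e^2/h))$.

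The core of the argument is the analysis of $U_r=\exp M$ with $M:=-\tfrac ih\int_{+\infty}^tA_r(s)\,ds$. By \eqref{A_express}, $M$ is trace-free, with diagonal entries $\mp\tfrac ih a(t)$ and off-diagonal entries $-\tfrac ih b(t),-\tfrac ih c(t)$, where $a(t)=\cos2\theta_r\int_{+\infty}^t(V-V_r)\,ds$ is real and $\ord(1)$, and $b,c=-\sin2\theta_r\int_{+\infty}^t(V-V_r)e^{\pm2i\lambda_r s/h}\,ds$. Condition~\ref{condi_1} bounds the latter two integrals by $\|V-V_r\|_{L^1}$ uniformly in $(t,\e,h)$, so with $\sin2\theta_r=\ord(\e)$ one gets $b,c=\ord(\e)$; moreover $c=\bar b$, whence $bc=|b|^2\ge0$. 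I would then invoke the closed form for the exponential of a trace-free $2\times2$ matrix, $\exp M=\cos(s/h)\,I+\tfrac{h}{s}\sin(s/h)\,M$ with $s:=\sqrt{a^2+|b|^2}$, obtaining $(U_r)_{11}=\cos(s/h)-\tfrac{ia}{s}\sin(s/h)$, $(U_r)_{22}=\overline{(U_r)_{11}}$, and $(U_r)_{12},(U_r)_{21}=\ord(\e)$.

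The delicate point, which I expect to be the main obstacle, is that in the regime $\e^2/h\to0$ neither the diagonal phase $a/h$ nor the off-diagonal amplitude $\e/h$ need be small (e.g. $\e=h^\alpha$ with $\tfrac12<\alpha<1$), so one cannot Taylor-expand $\exp M$ in $M$; the oscillation must be retained exactly. This is precisely where the normalization \eqref{non-zero-condition2} is used: it forces $a(t_r)\neq0$, hence $|a(t)|\ge\delta>0$ near $t_r$ for small $\e$, so that $s=a+\ord(\e^2)$ and $a/s=1+\ord(\e^2)$. It follows that $(U_r)_{11}=e^{-is/h}+\ord(\e^2)=e^{-ia/h}+\ord(\e^2/h)=\exp\!\big(-\tfrac ih\int_{+\infty}^t(V-V_r)\,ds\big)+\ord(\e^2/h)$, and symmetrically for $(U_r)_{22}$; here $\ord(\e^2)\subset\ord(\e^2/h)$ since $h$ stays bounded.

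It remains to assemble $\Phi_r U_r$. The first component of $\phi_r^+$ is $(\Phi_r)_{11}(U_r)_{11}+(\Phi_r)_{12}(U_r)_{21}$, whose second summand is $\ord(\e)\cdot\ord(\e)$ and whose first summand equals $e^{-iV_rt/h}\exp(-\tfrac ih\int_{+\infty}^t(V-V_r)\,ds)+\ord(\e^2/h)$; the elementary identity $V_rt+\int_{+\infty}^t(V-V_r)=R_r+\int_{t_r}^tV$ then rewrites this as $e^{-iR_r/h}u_r^++\ord(\e^2/h)$, using $|e^{-iR_r/h}|=1$ to absorb the phase into the error. The second component of $\phi_r^+$ carries an off-diagonal factor in each product, hence is $\ord(\e)$. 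The statement for $\phi_r^-$ is obtained identically by taking the second column, and the whole computation is uniform for $t$ in a small neighborhood of $t_r$.
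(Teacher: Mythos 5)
Your proposal is correct and takes essentially the same route as the paper's own proof: the factorization $(\phi_r^+,\phi_r^-)=\Phi_r U_r$ from Proposition~\ref{exist:Jost}, the exact closed-form exponential of a trace-free $2\times2$ matrix (the paper's Lemma~\ref{alg_UA}) precisely because $a/h$ and $\e/h$ need not be small, the use of \eqref{non-zero-condition2} to keep $a$ bounded away from zero near $t_r$, the $L^1$ bound from Condition~\ref{condi_1} giving off-diagonals $\ord(\e)$, and the action identity \eqref{decomp-Ir} to assemble the phase $e^{-iR_r/h}u_r^+$. The only cosmetic slip is writing $s=a+\ord(\e^2)$ and $a/s=1+\ord(\e^2)$, which tacitly assumes $a>0$; taking the signed root $s:=\ope{sgn}(a)\sqrt{a^2+|b|^2}$ (the closed form for $\exp M$ is even in $s$) repairs this, exactly as the paper does via $z=\ope{sgn}(a)\,a\,(1+\ord(b^2/a^2))$.
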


Before proving Proposition~\ref{Psi_asym_e}, we prepare the following.
\begin{lemma}\label{alg_UA} 
Let $B$ be a matrix of the form:
\ben
B= \frac{i}{h}
\begin{pmatrix}
-a & b\\ \bar{b} & a
\end{pmatrix}
\een
with $a\in \R\setminus\{0\}$ and $b\in \C$. 
For $|b/a|\ll1$, one has
\begin{align*}
e^{B} &= \begin{pmatrix}
    e^{-ia/h} + \ord((b/a)^2) & \ord(b/a)\\
    \ord(b/a) & e^{+ia/h} + \ord((b/a)^2)
\end{pmatrix}. 
\end{align*}
\end{lemma}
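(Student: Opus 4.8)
The plan is to exploit that $A$ is trace-free and thereby collapse the exponential series into a scalar (Euler-type) formula. Since $\ope{tr}A=\frac{i}{h}(-a+a)=0$ and $\det A=\bigl(\frac{i}{h}\bigr)^2\bigl(-a^2-|b|^2\bigr)=\frac{a^2+|b|^2}{h^2}$, the Cayley--Hamilton relation $A^2-(\ope{tr}A)A+(\det A)I_2=0$ reduces to $A^2=-\omega^2 I_2$ with $\omega:=h^{-1}\sqrt{a^2+|b|^2}$. Splitting $e^A=\sum_{k\ge0}A^k/k!$ into even and odd powers and using $A^{2k}=(-\omega^2)^kI_2$ and $A^{2k+1}=(-\omega^2)^kA$, I obtain the closed form
\[
e^A=\cos\omega\,I_2+\frac{\sin\omega}{\omega}\,A,
\]
whose entries are the off-diagonal $\frac{ib}{\sqrt{a^2+|b|^2}}\sin\omega$ (together with its conjugate in the $(2,1)$ slot) and the diagonal $\cos\omega\mp i\frac{a}{\sqrt{a^2+|b|^2}}\sin\omega$.

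The off-diagonal bound is then immediate: using $|\sin\omega|\le1$ and $\frac{b}{\sqrt{a^2+|b|^2}}=\frac{b}{a}\bigl(1+|b/a|^2\bigr)^{-1/2}=\frac{b}{a}\bigl(1+\ord((b/a)^2)\bigr)$, each off-diagonal entry is of size $\ord(b/a)$, as claimed.

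For the diagonal entries I would take $a>0$ for definiteness (the case $a<0$ being symmetric) and expand in two steps. Writing $\rho:=\sqrt{a^2+|b|^2}$, the amplitude factor satisfies $\frac{a}{\rho}=(1+|b/a|^2)^{-1/2}=1+\ord((b/a)^2)$, so that after absorbing the error $\mp i\bigl(\frac{a}{\rho}-1\bigr)\sin\omega=\ord((b/a)^2)$ the diagonal entry becomes $\cos\omega\mp i\sin\omega+\ord((b/a)^2)=e^{\mp i\rho/h}+\ord((b/a)^2)$. It then remains to replace $e^{\mp i\rho/h}$ by $e^{\mp ia/h}$. This last step is the crux: from $\rho-a=\frac{|b|^2}{\rho+a}=\ord(|b|^2/a)$ the mean value theorem gives $\bigl|e^{\mp i\rho/h}-e^{\mp ia/h}\bigr|\le h^{-1}(\rho-a)=\ord\!\bigl(|b|^2/(ah)\bigr)$, which for fixed $h$ is again $\ord((b/a)^2)$ and completes the proof. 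The main obstacle is thus entirely contained in this phase replacement: the implicit constant in the diagonal error grows like $h^{-1}$, and it is precisely this $h^{-1}$ that later surfaces as the $\ord(\e^2/h)$ correction in Proposition~\ref{Psi_asym_e}.
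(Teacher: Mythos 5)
Your proof is correct and follows essentially the paper's own route: both collapse the exponential series via $A^2=-h^{-2}(a^2+|b|^2)\,\mathrm{Id}$ to the closed form $e^A=\cos(z/h)\,\mathrm{Id}+\frac{\sin(z/h)}{z/h}\,A$ with $z=\sqrt{a^2+|b|^2}$, and then expand $z=\ope{sgn}(a)\,a\,(1+\ord((b/a)^2))$ together with $b/z=\ord(b/a)$. Your explicit bookkeeping of the phase replacement $e^{\mp iz/h}\mapsto e^{\mp ia/h}$, with error $\ord(|b|^2/(ah))$ whose implicit constant grows like $h^{-1}$ (for $a$ in a bounded range), is a refinement the paper leaves tacit, and it is exactly consistent with the diagonal error being recorded as $\ord(\e^2/h)$ rather than $\ord(\e^2)$ when the lemma is applied in Proposition~\ref{Psi_asym_e}.
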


\begin{proof}
Since $B^2 = -h^{-2}(a^2 + |b|^2) \mathrm{Id}$, an algebraic computation gives 
\be
e^B=
\left(\cos \frac{z}{h}\right)\ope{Id}+\frac iz \left(\sin \frac zh\right)\begin{pmatrix}
    -a&b\\\bar{b}&a
\end{pmatrix}
\ee
where $z = \sqrt{a^2 + |b|^2}$. 
We have $z=\ope{sgn}(a)a(1+\ord(b^2/a^2))$ under $|b/a|\ll1$ and $a\neq0$. This gives the following asymptotic formula:
\ben
e^{B} = \begin{pmatrix}
    e^{-ia/h} + \ord((b/a)^2) & 0\\
    0 & e^{+ia/h} + \ord((b/a)^2)
\end{pmatrix}
+\frac iz \left(\sin\frac zh\right)\begin{pmatrix}
    0&b\\\bar{b}&0
\end{pmatrix}.
\een
The lemma follows from $b/z=\ord(b/a)$.
\end{proof}

\begin{proof}[Proof of Proposition~\ref{Psi_asym_e}]
From the expression \eqref{A_express}, $U_r$ defined by \eqref{def-Ur} is written as
\be
U_r(t;\e,h) =\exp\left(\frac ih\begin{pmatrix}
-\mc{I}_r(t)\cos2\theta_r(\e)&\mc{J}_r(t;h)\sin2\theta_r(\e)\\[7pt]
\overline{\mc{J}_r(t;h)}\sin2\theta_r(\e)&\mc{I}_r(t)\cos2\theta_r(\e)
\end{pmatrix}\right),
\ee
where 
\ben
\begin{aligned}
&
\mc{I}_r(t)=\int_{+\infty}^t(V(s)-V_r)ds,\quad &&\mc{J}_r(t;h)=\int_{+\infty}^t(V(s)-V_r)e^{+2is\lambda_r/h}ds.
\end{aligned}
\een
Apply Lemma~\ref{alg_UA} with
\be
a =a(t,\e) = \mc{I}_r(t)\cos2\theta_r(\e),\quad b=b(t,\e,h) = \mc{J}_r(t;h)\sin2\theta_r(\e).
\ee
By the choice of $t_r$ with the condition \eqref{non-zero-condition2}, $a=\mc{I}_r(t)\cos2\theta_r(\e)$ never vanishes for $t$ near $t_r$.
By definition, we have $\theta_r(\e)=\ord(\e)$, and consequently $|b/a|=\ord(\e)$. Then Lemma~\ref{alg_UA} shows
\be\label{U_asym_e}
U_r(t;\e,h)=\begin{pmatrix}
    e^{-i\mc{I}_r(t)/h} +\ord(\e^2/h)&  \ord(\e)\\
    \ord(\e)&  e^{+i\mc{I}_r(t)/h} +\ord(\e^2/h)
\end{pmatrix}
\ee
Note that we have the following decomposition of $\mc{I}_r(t)$:
\be\label{decomp-Ir}
\mc{I}_r(t) = \mc{I}_r(t_r) + \int_{t_r}^t (V(s)-V_r)ds = R_r + \int_{t_r}^t V(s) ds - V_rt.
\ee
Since $\Phi_r(t)$ admits the asymptotic formula
\be\label{Phi_asym_e}
\Phi_r(t) = \begin{pmatrix}
    \pphi_r^+ & \pphi_r^-
\end{pmatrix} =
\left( 1+ \ord \left(\frac{\e^2}{h}\right)\right)
\begin{pmatrix}
    e^{-i V_r t/h} & \ord(\e)\\
    \ord(\e) & e^{+i V_r t/h}
\end{pmatrix},
\ee
as $(\e^2/h,\e)\to(0,0)$, Proposition~\ref{Psi_asym_e} follows from \eqref{U_asym_e} and \eqref{decomp-Ir}.
\end{proof}

The asymptotic formula
\be\label{fomula_Tr_01}
\begin{pmatrix} J_r^+ & J_r^- \end{pmatrix}
= \begin{pmatrix}
     u_r^+ + \ord(\e^2/h) & \ord(\e)\\
    \ord(\e) &  u_r^- + \ord( \e^2 /h)
\end{pmatrix}
\begin{pmatrix}
    e^{-i R_r/h} & 0\\
    0 & e^{+i R_r/h}
\end{pmatrix}
\ee
is directly deduced from Propositions~\ref{exist:Jost} and \ref{Psi_asym_e}. 
Therefore, we obtain
\be\label{fomula_Tr_02}
T_r = \begin{pmatrix}
    e^{-i R_r/h}+\ord(\e^2/h) & \ord(\e^2)\\
    \ord(\e^2) & e^{+i R_r/h}+\ord(\e^2/h)
\end{pmatrix}.
\ee

For the Jost solutions $J_\ell^\pm (t)$, the same argument as above works when $V_\ell >0$ and a similar one induces the existences and the asymptotic behaviors when $V_\ell <0$. 

In the case where $V_\ell >0$, by exchanging the sub-index $r$ for $\ell$, 
one sees
\be\label{fomula_Tell_02}
T_\ell = \begin{pmatrix}
    e^{-i R_\ell/h}+\ord(\e^2/h) & \ord(\e^2)\\
    \ord(\e^2) & e^{+i R_\ell /h}+\ord(\e^2/h)
\end{pmatrix}. 
\ee
Here 
\ben
R_\ell = V_\ell t_\ell+\int_{-\infty}^{t_\ell}(V(s)-V_\ell)\,ds
\een
with $t_\ell < t_n=\min\{t\in \R\, ;\,V=0\}$ satisfying that the second integral term in the right-hand side does not vanish. 

In the case where $V_\ell <0$, we choose instead of \eqref{def_pphi} particular solutions $\pphi_\ell^\pm(t)$ to $hD_t \psi + H_\ell \psi = 0$ as 
\be\label{def_pphi_nega}
\pphi_\ell^+(t) = e^{-i \lambda_\ell t/h} 
\begin{pmatrix}
    \sin \eta_\ell \\ \cos \eta_\ell
\end{pmatrix},\quad 
\pphi_\ell^-(t) = e^{+i \lambda_\ell t/h} 
\begin{pmatrix}
    -\cos \eta_\ell \\ \sin \eta_\ell
\end{pmatrix},
\ee
where $\lambda_\ell = \sqrt{V_\ell^2 +\e^2}$ and $\tan 2\eta_\ell = \e/(-V_\ell)$ $(0 < \eta_\ell < \pi/4)$. 
They coincide with the leading terms of Jost solutions $J_\ell^\pm(t)$ when $V_\ell < 0$ and satisfy the asymptotic formulas:
\ben
\pphi_\ell^+ \sim e^{+i V_\ell t/h} \begin{pmatrix}\ord(\e) \\ 1+ \ord(\e^2/h)\end{pmatrix},\quad 
\pphi_\ell^- \sim e^{-i V_\ell t/h} \begin{pmatrix}-1+ \ord(\e^2/h)\\ \ord(\e)\end{pmatrix}
\een
as $(\e^2/h,\e) \to (0,0)$ for each $t$. 
One sees that, with \eqref{def_pphi_nega}, Proposition~\ref{exist:Jost} also holds. 
One also have similar asymptotic formulas to those of Proposition~\ref{Psi_asym_e}:
\begin{align*}
\phi_\ell^+(t)=e^{+i R_\ell/h}
\begin{pmatrix}
\ord(\e)\\
u_{\ell}^+
+\ord(\e^2/h)
\end{pmatrix},\quad
\phi_\ell^-(t)=e^{-i R_\ell/h}
\begin{pmatrix}
-u_{\ell}^-
+\ord(\e^2/h)\\
\ord(\e)
\end{pmatrix}
\end{align*}
as $(\e^2/h,\e) \to (0,0)$ uniformly in a small neighborhood of $t= t_\ell$, where $u_\ell^\pm = \exp(\mp i\int_{t_\ell}^tV(s)ds/h)$. 
We obtain
\be\label{fomula_Tell_03}
T_\ell = \begin{pmatrix}
    \ord(\e) & -e^{-i R_\ell/h}+\ord(\e^2/h) \\
    e^{+i R_\ell/h} +\ord(\e^2/h)& \ord(\e)
\end{pmatrix}. 
\ee

\section{Review of the exact WKB method}\label{Review_WKB}

This section is devoted to a quick review of the exact WKB method under the notations used in this paper. 
The exact WKB method (sometimes called the complex WKB method) was initiated by G\'erard-Grigis \cite{GeGr88_01} 
and developed to a first order $2\times 2$ system by Fujii\'e-Lasser-N\'ed\'elec \cite{FuLaNe09_01}.

Put $\psi(t;h):= \frac{1}{2}\begin{pmatrix}1&i\\i&1\end{pmatrix}\phi(t;h)$,
where $\phi(t;h)$ is the solution of \eqref{eq:OurEq}.
Then the original equation \eqref{eq:OurEq} can be reduced to 
 the following first order $2\times 2$ system:
\begin{equation}\label{wkbModel}
\frac{h}{i} \frac{d}{dt} \phi(t;h) = 
\begin{pmatrix}
0 & \alpha(t)\\ -\beta(t) & 0
\end{pmatrix}
\phi(t;h),
\end{equation}
where $\alpha(t) = -iV(t) - \e$ and $\beta(t) = -iV(t) + \e$. 
In this section, we suppose that $\e$ is small and fixed. 
We treat this equation \eqref{wkbModel} on some simply connected domain ${\mathcal S}\subset \C$, where $V(t)$ is analytic and vanishes only at $t=t_0$.   
We define for any fixed point $a\in {\mathcal S}$
$$
z_a(t) = \int_a^t \sqrt{\alpha(s)\beta(s)}\,ds = i \int_a^t \sqrt{V(s)^2 + \e^2}\, ds,
$$
where the branch of the integrand $\sqrt{V(t)^2 + \e^2}$ is taken $\e$ at $t=t_0$. 
Notice that for any $a, \tilde a \in {\mathcal S}$ one has 
\begin{equation}\label{z_affine}
z_a(t) = z_{\tilde a}(t) + \int_a^{\tilde a} \sqrt{\alpha(s)\beta(s)} ds.
\end{equation}
The set of turning points which are zeros of $\alpha(t)\beta(t)$ is denoted by $\Lambda$, and 
a simply connected sub-domain of ${\mathcal S}\setminus \Lambda$ is done by $\widetilde {\mathcal S}$. 
Remark that the mapping $z_a$ is bijective from $\widetilde {\mathcal S}$ to $z_a(\widetilde {\mathcal S})$. 
Making a branch cut from each turning point to the infinity suitably, we may suppose that $\widetilde {\mathcal S}$ includes the real interval near $t_0$. 
This fact permits us to know that $ \text{Re}\,z_a(t)$ increases as $ \text{Im}\,t$ decreases near a complex neighborhood of $t_0$. These properties are crucial to the following Lemma \ref{wkblem2} and Lemma \ref{wkblem3}.  

In the context of the exact WKB method, we  consider the solution of \eqref{wkbModel}, 
$\phi(t;h)$, as a function of the variable $z$ by setting
$\phi^\pm(t;h) = e^{\pm z/h} M^\pm(z) w^\pm(z;h)$, with  
$$
M^\pm(z) = \begin{pmatrix}
K(z)^{-1} & K(z)^{-1}\\
\mp i K(z) & \pm i K(z)
\end{pmatrix}, \quad 
K(z(t)) = \left(\frac{\beta(t)}{\alpha(t)}\right)^{\frac{1}{4}} 
= \left(\frac{-iV(t) + \e}{-iV(t) -\e}\right)^{\frac{1}{4}}.
$$
Notice that $K(z(t))$ is independent of the base point $a$ involved in the definition of the function $z(t)=z_a(t)$.
The branch of $K(z(t))$ is taken $e^{-i\pi/4}$ at $t=t_0$ with 
the branch cut along a positive real axis on $\C_z$.
Here the vector-valued functions $w^\pm(z;h) $ are determined as solutions of 
$$
\frac{d}{dz} w^\pm (z;h) = \begin{pmatrix}
0 & \frac{\partial_z K(z)}{K(z)}\\ \frac{\partial_z K(z)}{K(z)} & \mp \frac{2}{h}
\end{pmatrix}
w^\pm(z;h).
$$
Moreover, by the identity \eqref{z_affine} and the following equality
\begin{equation}\label{sing_K'/K}
\frac{\partial_z K(z(t))}{K(z(t))} = 
\frac{\alpha(t)\beta'(t) - \alpha'(t)\beta(t)}{4(\alpha(t)\beta(t))^{3/2}}\,,
\end{equation}
we see that $\partial_z K(z(t))/K(z(t))$ and $w^\pm(z(t);h)$ are independent of $a$. 
The above equality \eqref{sing_K'/K} implies that 
the function $\partial_z K(z)/K(z)$ has a simple pole at $z=z(\zeta)$ for each turning point $\zeta$.
Note that, in our case, each turning point is simple (i.e. a simple zero of $\alpha(t)\beta(t)$) for $\e>0$ small enough.

\medskip

\noindent
Generally, even if the vector-valued symbols $w^\pm(z;h)$ are developed with respect to $h$ small enough, 
the series do not converge. 
The essential idea of the work \cite{GeGr88_01} (see also the work \cite{FuLaNe09_01}) is to introduce a resummation  
by using the following integral recurrence system on $\C_z$. 
More precisely, for any $b\in \widetilde {\mathcal S}$, the vector-valued functions $w^\pm(z;h)=w^{\pm}(z,z(b);h)$ are of the form: 
\begin{equation}\label{resum1}
w^{\pm}(z,z(b);h) = \sum_{k\geq 0} w^{\pm}_{k}(z,z(b);h) 
 = \sum_{k\geq 0}\left(\begin{matrix}
w^{\pm}_{2k}(z,z(b);h)\\w^{\pm}_{2k-1}(z,z(b);h)
\end{matrix}\right),
\end{equation}
where the sequences $\big\{w^{\pm}_{k}(z,z(b);h)\big\}_{k\in\N}$ are defined by 
\begin{equation*}
\left\{
\begin{aligned}
w^{\pm}_0(z,z(b);h) &\equiv \, 1, \quad w^{\pm}_{-1}(z,z(b);h) \equiv \, 0,\\
\ w^{\pm}_{2k+1}(z,z(b);h) &=\int_{z(b)}^z 
e^{\pm\frac{2}{h}(\zeta-z)}\frac{ \partial_z  K(\zeta)}{K(\zeta)}
w^{\pm}_{2k}(\zeta,z(b);h)\,d\zeta  &&(k\geq 0),\\
w^{\pm}_{2k}(z,z(b);h) &=\int_{z(b)}^z 
\frac{\partial_z K(\zeta)}{K(\zeta)} w^{\pm}_{2k-1}(\zeta,z(b);h)\,d\zeta &&(k\geq 1).
\end{aligned}
\right.
\end{equation*}

Thanks to the above resummation, the vector-valued symbol expansions \eqref{resum1} converge
absolutely and uniformly in a neighborhood of $z(b)$ for $b\in \widetilde {\mathcal S}$ (see, for example, the work \cite[Lemma 3.2]{FuLaNe09_01}). 
Hence, for any fixed $(a,b)\in {\mathcal S}\times \widetilde {\mathcal S}$, we can define 
the exact WKB solutions of type $\pm$ as follows:
\begin{align}\label{def_exactWKBsol}
\psi^{\pm}(t,a,b;h)
= \frac{1}{2}\begin{pmatrix}1&i\\i&1\end{pmatrix} e^{\pm z_a(t)/h} M^{\pm}(z(t))w^{\pm}(z(t),z(b);h), 
\end{align}
which are linearly independent exact solutions of \eqref{eq:OurEq}.
Notice that $a\in {\mathcal S}$ is the base point of the phase and $b\in \widetilde {\mathcal S}$ is that of the symbol.\\

We conclude this subsection by recalling some results concerning the exact WKB solutions given by \eqref{def_exactWKBsol}.
In fact, the exact WKB method is based on two properties, which are the Wronskian formula 
between the exact WKB solutions of type $\pm$ and the asymptotic expansion with respect to $h$ of the symbol.

\begin{lemma}{\rm \cite[\bf Proposition 2.2.2]{Wa06_01}}\label{wkblem1}
The Wronskian between any exact WKB solutions of type $\pm$ 
with the same base point of the phase satisfies:
\begin{equation}
{\mathcal W} [\psi^+(t,a,b_+;h), \psi^-(t,a,b_-;h)] = 2i \sum_{k\geq 0} w^{+}{2k}(z(b_-),z(b_+);h),
\end{equation}
where $a\in {\mathcal S}$ and $b_{\pm}\in \widetilde {\mathcal S}$.
Here the Wronskian between $\C^2$-valued functions $\psi_1$ and $\psi_2$  
is defined by ${\mathcal W}[\psi_1,\psi_2] := {\rm det}\left( \psi_1 \psi_2\right)$.  
\end{lemma}

The proof of this lemma is based on a direct computation and 
the independence of the Wronskian with respect to the variable $t$ 
thanks to the trace-free matrix in \eqref{wkbModel}. 
The prefactor $2i$ is exactly $\det M_+$. \\

To state the next result, we introduce {\it canonical curves of type $\pm$} in $\widetilde {\mathcal S}$ from a fixed point $b$ to $t$ 
along which $\pm \text{Re}\,z_a(t)$ increase strictly, for a fixed $a\in {\mathcal S}$. 
The advantage of the integral recurrence system is to give not only an absolutely convergence 
but also $\C^2$-valued asymptotic sequences with respect to $h$ uniformly away from turning points. 
More precisely,

\begin{lemma}{\rm \cite[\bf Proposition 2.3.1]{Wa06_01}}\label{wkblem2}
If there exist canonical curves of type $\pm$ from $b_\pm$ to $t$ denoted by $\gamma_\pm$,  
then the vector-valued symbols have the following asymptotic expansions:
\begin{equation}\label{symbolasymptotic}
w^{\pm}(z(t),z(b_\pm);h) = \begin{pmatrix}
1\\
0
\end{pmatrix}
\left( 1+ {\mathcal O}\left( \frac{h}{{\rm dist}(\gamma_\pm;\Lambda) }\right) \right)
\end{equation}
as $h$ tends to $0$, where ${\rm dist}(\gamma_\pm;\Lambda)$ stands for $\underset{t\in \gamma_\pm, \zeta\in \Lambda}{\inf}|z_{\zeta}(t)|$. 
\end{lemma}

This lemma can be proved by an integration by parts thanks to the exponential decaying along the canonical curve. \\

Combining Lemmas \ref{wkblem1} and \ref{wkblem2}, we obtain the asymptotic expansion of the Wronskian:
\begin{lemma}{\rm \cite[\bf Proposition 2.4.1]{Wa06_01}}\label{wkblem3}
If there exists a canonical curve of type $+$ from $b_+$ to $b_-$ denoted by $\gamma$, 
the Wronskian between any exact WKB solutions of type $\pm$ with the same base point of the phase  
has the following asymptotic expansion, 
\begin{equation}\label{wkbWron}
{\mathcal W} [\psi^+(t,a,b_+;h), \psi^-(t,a,b_-;h)] =  
2i + {\mathcal O}\left( \frac{h}{{\rm dist}(\gamma;\Lambda) }\right)
\end{equation}
as $h$ tends to $0$, where ${\rm dist}(\gamma;\Lambda)=\underset{t\in \gamma, \zeta\in \Lambda}{\inf}|z_{\zeta}(t)|$. 
\end{lemma}

\section{Algebraic lemma}\label{app_alg_lem}

In order to know the asymptotic behavior of the scattering matrix \eqref{formula_Smatrix}, 
it suffices to compute the products of the matrices of the following forms modulo $\ord(\mu^2)$ with a small parameter $\mu$:
\ben
T_k(\mu) \equiv \begin{pmatrix}
\alpha_k(\mu) & -\oline{\beta_k(\mu)}\\ \beta_k(\mu) & \overline{\alpha_k(\mu)}
\end{pmatrix}, \qquad 
T_{k,k+1} \equiv \begin{pmatrix}
\upsilon_k & 0\\ 0 & \overline{\upsilon_k}
\end{pmatrix}, 
\een
where $\alpha_k$, $\beta_k$, and $\upsilon_k$ are complex numbers such that $\det T_k=\det T_{k,k+1}=1$, namely $|\alpha_k|^2+|\beta_k|^2=|\upsilon_k|^2=1$, and $\beta_k(\mu)=\ord(\mu)$ as $\mu\to0$.
Notice that, in our context of Theorem~\ref{mainthm} (see \eqref{eq:Asym-T}), $T_k$ and $T_{k,k+1}$ have this form with the numbers given modulo $\ord(\mu^2)$ by 
\begin{align*}
\alpha_k \equiv 1,
\quad \beta_k \equiv -i \overline{\omega_k} \mu_{m_k} = \ord(\mu_{m_k}),\quad 
\upsilon_k \equiv \exp\left(-\frac{i}{h} \int_{t_{k+1}}^{t_k} V(t)dt \right). 
\end{align*}
In this subsection we give an algebraic formula by means of these notations $\alpha_k$, $\beta_k(\mu)$ and $\upsilon_k$ for simplicity. 
We know that the product of them is of the form
\be\label{prod_TT}
T_k T_{k,k+1} = \begin{pmatrix}
\alpha_k \upsilon_k& -\overline{\beta_k  \upsilon_k}\\ \beta_k \upsilon_k & \overline{\alpha_k \upsilon_k}
\end{pmatrix}.
\ee
Let $\su (2)$ be the special unitary group of degree $2$ given by
\be\label{set_M}
\su (2) = \left\{ T \in M_2(\C)\, ;\, T = \begin{pmatrix} a & - \bar{b}\\ b & \bar{a}\end{pmatrix}, a,b\in \C,\ \det T=1 \right\}.
\ee
Note that the above definition of $\su (2)$ is equivalent to the standard one: $\su (2)=\{T\in\mathrm{U}(2)\,;\,\det T=1\}$, where $\mathrm{U}(2)$ is the unitary group of degree $2$.
One sees that all of the above matrices belong to $\su (2)$.  
Denoting the products of these matrices by 
\be\label{notation_TT}
{\mathcal T}=  
T_1T_{1,2}T_2\cdots T_n T_{n,n+1}
= 
\begin{pmatrix}
\tau_{11} & \tau_{12}\\ \tau_{21} & \tau_{22}
\end{pmatrix} \in \su (2), 
\ee
we get the following lemma:
\begin{lemma}\label{lem_alg_TT}
As $\mu \to 0$, the following asymptotic formulas hold. 
\begin{align}
    &\tau_{11}(\mu) = \prod_{j=1}^{n} \alpha_j \upsilon_j + \ord (\mu^2),\\ 
    &\tau_{21}(\mu) = \sum_{j=1}^n \left(\prod_{\iota=1}^{j-1} \overline{\alpha_\iota} \right)
    \beta_j(\mu) 
    \left(\prod_{k=j+1}^{n} \alpha_k \right)
    \left(\prod_{\iota=1}^{j-1} \overline{\upsilon_\iota} \right)\left(\prod_{k=j}^{n} \upsilon_k \right)
    + \ord (\mu^2),\\[7pt]
    &\tau_{12}(\mu) = -\overline{\tau_{21}}(\mu), \qquad \tau_{22}(\mu) = \overline{\tau_{11}}(\mu),
\end{align}
with the convention that $\prod_{\iota=1}^0 \overline{\alpha_\iota}=\prod_{\iota=1}^0 \overline{\upsilon_\iota}=1$.
\end{lemma}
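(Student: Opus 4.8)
The plan is to regard Lemma~\ref{lem_alg_TT} as a purely algebraic statement about an ordered product of $\su(2)$-matrices and to split it into its exact part and its asymptotic part. First I would dispose of the two identities $\tau_{12}^n=-\overline{\tau_{21}^n}$ and $\tau_{22}^n=\overline{\tau_{11}^n}$ at once: each factor lies in $\su(2)$ (for $T_{k,k+1}$ take $a=\nu_k$, $b=0$ in \eqref{set_M}; for $T_k$ take $a=\alpha_k$, $b=\beta_k$, with $\det T_k=|\alpha_k|^2+|\beta_k|^2=1$), and since $\su(2)$ is a group the product $\mathcal{T}_n$ of \eqref{notation_TT} again belongs to $\su(2)$. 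Hence its entries satisfy these two relations exactly, with no error term, and the problem reduces to the asymptotics of $\tau_{11}^n$ and $\tau_{21}^n$.

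For the latter I would argue by induction on $n$, writing $\mathcal{T}_n=\mathcal{T}_{n-1}M_n$ with $M_j:=T_jT_{j,j+1}$, whose entries are read off from \eqref{prod_TT}: the diagonal entries $\alpha_j\nu_j$ and $\overline{\alpha_j\nu_j}$ are $\ord(1)$ with modulus at most $1$, while the off-diagonal entries $-\overline{\beta_j\nu_j}$ and $\beta_j\nu_j$ are $\ord(\mu)$ since $\beta_j=\ord(\mu)$ and $|\nu_j|=1$. From $\tau_{11}^n=\tau_{11}^{n-1}\alpha_n\nu_n+\tau_{12}^{n-1}\beta_n\nu_n$ and the already established relation $\tau_{12}^{n-1}=-\overline{\tau_{21}^{n-1}}=\ord(\mu)$, the second summand is $\ord(\mu^2)$, which together with the inductive hypothesis on $\tau_{11}^{n-1}$ gives $\tau_{11}^n=\prod_{j=1}^n\alpha_j\nu_j+\ord(\mu^2)$. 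Likewise $\tau_{21}^n=\tau_{21}^{n-1}\alpha_n\nu_n+\tau_{22}^{n-1}\beta_n\nu_n$; using $\tau_{22}^{n-1}=\overline{\tau_{11}^{n-1}}=\prod_{j=1}^{n-1}\overline{\alpha_j\nu_j}+\ord(\mu^2)$ produces exactly the $j=n$ term of the claimed sum (the $\ord(\mu^2)$ error there being multiplied by $\beta_n\nu_n=\ord(\mu)$, hence negligible), while $\tau_{21}^{n-1}\alpha_n\nu_n$ reproduces the terms $j=1,\dots,n-1$ after absorbing $\alpha_n$ into $\prod_{k=j+1}^{n}\alpha_k$ and $\nu_n$ into $\prod_{k=j}^{n}\nu_k$. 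The base case $n=1$, where $\mathcal{T}_1=T_1T_{1,2}$ gives $\tau_{21}^1=\beta_1\nu_1$ in agreement with the $j=1$ term, anchors the induction.

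The same structure is transparent in the non-inductive ``path'' picture, which I would mention for intuition: the $(i_0,i_n)$-entry of $M_1\cdots M_n$ is a sum over index sequences $(i_0,\dots,i_n)\in\{1,2\}^{n+1}$, each off-diagonal step contributing one factor of size $\ord(\mu)$, and the parity of $(i_0,i_n)$ forces an even (resp.\ odd) number of such steps for $\tau_{11}^n$ (resp.\ $\tau_{21}^n$); the leading contribution then comes from the zero-jump path (resp.\ the single-jump paths), matching the formulas above. The statement carries no genuine analytic difficulty; the only points needing care are the bookkeeping of which $\alpha_\kappa,\nu_\kappa$ appear conjugated — dictated by whether the path sits on the first or second diagonal before and after the single jump at index $j$ — and the verification that every discarded term is uniformly $\ord(\mu^2)$, which is immediate since $n$ is fixed and every matrix entry is bounded by $1$ in modulus.
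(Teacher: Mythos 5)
Your proposal is correct and takes essentially the same route as the paper, which proves Lemma~\ref{lem_alg_TT} precisely ``by the mathematical induction for the product of the matrices \eqref{prod_TT}'' --- i.e.\ your recursion $\mathcal{T}_n=\mathcal{T}_{n-1}(T_nT_{n,n+1})$ with the off-diagonal entries contributing one factor of $\ord(\mu)$ per step. The exact relations $\tau_{12}^n=-\overline{\tau_{21}^n}$ and $\tau_{22}^n=\overline{\tau_{11}^n}$ via closure of $\su(2)$ under multiplication are likewise what the paper relies on when asserting $\mathcal{T}_n\in\su(2)$ in \eqref{notation_TT}.
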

The proof of this lemma is based on the mathematical induction for the product of the matrices \eqref{prod_TT}. 
A simple computation of $|\tau_{21}|^2$, which corresponds to the transition probability, gives
\begin{align}\nonumber
|\tau_{21}|^2 &= \sum_{j=1}^n |\beta_j(\mu)|^2 
+ 2 \re\, \left[\sum_{1\leq j<k \leq n} \beta_j(\mu)\alpha_j \left(\prod_{\iota =j+1}^{k-1} \alpha_\iota^2 \right) \alpha_k \overline{\beta_k}(\mu) 
\left(\prod_{\iota =j}^{k-1} \upsilon_\iota^2 \right) \right]
\\ \label{t21^2_ver1}
 &\qquad 
 +\ord(\mu^3),
\end{align}
with the convention that $\prod_{\iota=j+1}^j \alpha_\iota^2=\prod_{\iota=j}^{j-1} \upsilon_\iota=1$.
Note that we used $|\alpha_k| = 1+\ord(\mu^2)$ and $|\upsilon_k|= 1$ in the above computation. In particular, when $\alpha_k =1+\ord(\mu^2)$, we have
\be\label{t21^2_ver2}
|\tau_{21}|^2 = \sum_{j=1}^n |\beta_j(\mu)|^2 + 2 \re\, \left[\sum_{1\leq j<k \leq n} \beta_j(\mu) \overline{\beta_k(\mu)} \left(\prod_{\iota =j}^{k-1} \upsilon_\iota^2 \right)
\right]+\ord(\mu^3).
\ee


\end{document}